\crefname{section}{section}{sections}
\providecommand{\customgenericname}{}
\newcommand{\newcustomtheorem}[2]{%
	\newenvironment{#1}[1]
	{%
		\renewcommand\customgenericname{#2}%
		\renewcommand\theinnercustomgeneric{##1}%
		\innercustomgeneric
	}
	{\endinnercustomgeneric}
}
\newtheorem{lemma}{Lemma}
\newtheorem{claim}{Claim}
\newtheorem*{claim*}{Claim}
\newtheorem{theorem}{Theorem}
\newtheorem{corollary}{Corollary}
\newtheorem{proposition}{Proposition}
\newtheorem{definition}{Definition}
\theoremstyle{definition}
\newtheorem*{example*}{Example}
\newcommand{\argmax}{\operatornamewithlimits{argmax}}
\begin{document}

	\begin{titlepage}
		
		\title{{\bf \Large{Information Design in Cheap Talk}}\thanks{This paper supersedes the project by the first author under the same title. 
		We would like to thank Ricardo Alonso, Ettore Damiano, Francesc Dilme, Fran\c{c}oise Forges, Fr\'{e}d\'{e}ric Koessler, Christoph Kuzmics, Stephan Lauermann, Gilat Levy, Jin Li, Elliot Lipnowski, Francesco Nava, Matthew Mitchell, Xiaosheng Mu, Ronny Razin, Christopher Sandmann, Xianwen Shi,  Dezs\"{o} Szalay, Bal\'{a}zs Szentes, Jianrong Tian and Yimeng Zhang for helpful discussions at different stages of this research project. We would also like to thank seminar and conference participants at CETC, the Econometric Society meetings, the Parisian Seminar of Game Theory, 34th Stony Brook International Conference on Game Theory, Royal Economic Society Conference, University of Graz, and University of Bonn. 
    Lyu acknowledges support from the Deutsche Forschungsgemeinschaft (DFG, German Research Foundation) under Germany’s Excellence Strategy EXC 2126/1-390838866, and CRC TR 224 (Project B02).
    First version of this paper: 11 July 2022. 
		}}
		\vspace{1.2cm}
		\author{
			\begin{minipage}{0.4\textwidth}\centering  
				Qianjun Lyu
				\\ \centering  \small \it University of Bonn
			\end{minipage}  
			\begin{minipage}{0.4\textwidth}\centering 
				Wing Suen
				\\ \centering  \small \it University of Hong Kong
			\end{minipage}  
		}
		
		\date{\vspace{0.8cm} \today}
		\maketitle
		\thispagestyle{empty}
		\vspace{-0.8cm}
		\begin{abstract}
			
			\noindent An uninformed sender publicly commits to an informative experiment about an uncertain state, privately observes its outcome, and sends a cheap-talk message to a receiver. We provide an algorithm valid for arbitrary state-dependent preferences that will determine the sender's optimal experiment 
			and his equilibrium payoff under binary state space. 
			We give sufficient conditions for informative information transmission. These conditions depend more on marginal incentives---how payoffs vary with the state---than on the alignment of sender's and receiver's rankings over actions within a state. The algorithm can be easily modified to study the canonical cheap talk game with a perfectly informed sender. 
			\vspace{0.5cm}
			
			\noindent \textit{Keywords}: marginal incentives, common interest, concave envelope, quasiconcave envelope, double randomization 
			\vspace{0.5cm}
			
			\noindent \textit{JEL Classification}: D82, D83 
		\end{abstract}

	\end{titlepage}

	\newpage 
	
	\section{Introduction}
	
	Starting from \cite{cheaptalk}, there is a large economics literature that studies how a biased sender	can gain from strategic communication with an uninformed receiver.  
	Much of this literature assumes that the sender is endowed with superior expertise.
	In many scenarios, however,
	the sender needs to learn about the payoff-relevant state before communicating with the receiver. For example, news media and think tanks that are biased for or against a political candidate or a government policy often collect information and conduct research in order to influence public opinion.
	Since the public may not have direct access to the data sources, nor the incentive to use time and effort to assess whether the conclusions drawn indeed follow from the original data,
	these conclusions 
	effectively become cheap-talk messages. Similarly, financial institutions often have research departments whose works provide the basis for their portfolio recommendations to clients,
	but whether their investment advice is consistent with the findings of their research is often unverifiable.
	This paper studies optimal information acquisition when the sender cannot commit to communicating the outcome of his investigations in a verifiable way.

Specifically, we consider a strategic communication game where an uninformed sender 
publicly commits to an information structure (to learn about the state) and privately observes the information outcome before sending a cheap-talk message to a receiver, who then takes an action. 
One can interpret this game as a bridge between strategic communication \citep{cheaptalk} and Bayesian persuasion \citep{kamenica2011bayesian}, in the sense that the sender can commit to the information structure but not to truthful reporting.\footnote{
A canonical cheap talk game with a fully informed sender can be considered as a game where the sender has no commitment to both information structure and truthful reporting, because it is without loss of generality to assume that the sender will acquire perfect information in this case.}

We study this game with a binary state space and finite action space, while allowing the sender to have arbitrary state-dependent preferences. This means that his utility can depend on both the state and the receiver's action. 
State-dependent preferences create a tension between acquiring more information and alleviating conflict of interests. The first incentive is straightforward: acquiring more information enables the sender to make better use of the information.
However, more information can also exacerbate the conflict, potentially intensifying the sender's incentive to misreport. 
Hence, when designing the optimal information structure, the sender needs to anticipate the credibility issues in the interim stage (when different information outcomes are realized). 
	
A generic feature of a model with discrete action space is that the receiver is indifferent between multiple actions at certain beliefs, even though the sender may not be indifferent over those actions. 
To enhance credibility, the sender sometimes benefits from the receiver's randomization between his most and least preferred actions. 
\cite{lipnowski2018cheap} exploit this observation to the case where the sender has state-independent preferences (also referred to as transparent motives).\footnote{
\cite{lipnowski2018cheap} originally study a canonical cheap talk setting, where the sender privately observes the true state. Nevertheless, with state-independent preferences, 
the set of equilibrium outcomes in these two models coincide. }   
They show that, under transparent motives, the receiver's tie-breaking rules are determined by the sender's indifference between reporting different messages. 
This further leads to their characterization that the sender's highest equilibrium payoff is determined by the quasi-concave envelope of his indirect value function.
However, in cases where the sender's preference depends arbitrarily on the state, such tie-breaking rules are not necessary to guarantee incentive compatibility (as the sender's preference can now vary across beliefs). Hence, a single geometric characterization, such as (quasi)-concavification, is insufficient to determine the highest equilibrium payoff for the sender. 

Nevertheless, we show that with a discrete action space, the sender's indirect value function remains sufficient to determine the optimal information structure and his highest equilibrium payoff. In Section \ref{s:algorithm}, we present a finite algorithm to compute the optimal equilibrium outcome 
for the sender by searching the highest probability that the receiver can take the sender-preferred action without violating the sender's incentive constraints. The optimal information structure generally induces two possible posterior beliefs, and the receiver may take pure or mixed strategies at each of these two beliefs. It turns out that for pure strategies, we can restrict the receiver to take the sender-preferred action. For mixed strategies, the mixing probability is determined by the sender's indifference condition (in a specific way). 
Notably, ``double randomization''---where the receiver takes mixed actions at both posterior beliefs---can be a part of the optimal information design. This contrasts with the case when the sender has transparent motives, where ``double randomization'' is never optimal.

To study the sender's incentive compatibility, 
we call $m_S(a)=u_S(a,1)-u_S(a,0)$ the sender's \emph{marginal incentive} for action $a$. It is the difference in his utility of action $a$ between state 1 and state 0. 
Graphically, the sender's marginal incentives are the slopes of his piecewise indirect value function. Those slopes capture the marginal gain or marginal loss when the sender misreports his private beliefs, and therefore are crucial for incentive compatibility. 

Intuitively one might expect that incentive issues are less severe if sender and receiver have ``similar'' preferences. However, the ``similarity'' we demand here is not captured by whether sender and receiver have the same ranking over actions given some particular state; but rather by the alignment of marginal incentives between them. 
We show in Section \ref{s:valuable}  that, when sender and receiver have opposite marginal incentives (i.e., if the receiver's marginal incentives for action $a$ is higher than $a'$, then the sender's marginal incentives for action $a'$ is higher than $a$), no information can be transmitted at all even if sender and receiver have identical ranking over actions in one state. 

In addition, Section \ref{s:valuable} provides some sufficient conditions  to guarantee informative information transmission when sender and receiver have aligned marginal incentives.  
Information transmission is informative for certain prior beliefs if, from the sender's perspective, (i) no action \emph{blocks} all other actions (``block'' means an action $a$ is better than $a'$ at those beliefs where $a'$ is a best response of the receiver); or (ii) no action is \emph{worst} (i.e., worse than all other actions in both states).  We also consider the case where the sender's preferences are \emph{ordinally state-independent} (i.e., his ranking over actions is the same in the two states). 
In this case, informative information transmission can arise if and only if the sender's ranking over actions is not identical to the receiver's ranking in either of the two states.

	In Section \ref{s:commoninterests}, we discuss an application of our framework where sender and receiver have common interests in one of the states. That is,  the sender's optimal action in state 0 is also the receiver's best response at that state. Then the optimal information structure generates a conclusive signal about state 0 under mild conditions. 
    This application best disentangles the tension between acquiring more information and alleviating conflict of interests. 
   Specifically, revealing state 0 (rather than pooling state 0 with state 1) generates more information for the sender and allows him to better use this information. On top of this, an experiment that identifies the common-interest state would align the two parties' interests ex-post, which 
   raises the sender's ex-ante payoff. This leads to an interesting result that, despite the sender and receiver having common interest in state 0, the optimal information structure does not necessarily reveal the true state with probability one when the true state is indeed of common interests.

 As previously noted, our model represents a middle ground between Bayesian persuasion and canonical cheap talk. We hence explore its connections with Bayesian persuasion in Section \ref{newsection} and with cheap talk in Section \ref{s:cheaptalk}.
 Regarding Bayesian persuasion, the optimal experiment in our model can be more or less informative than that under Bayesian persuasion. Section \ref{newsection} identifies a sensible set of preferences where the optimal experiment in our model is strictly more informative than the optimal experiment in Bayesian persuasion for certain prior beliefs.
Then in Section \ref{s:cheaptalk}, we link our model with the canonical cheap talk model where the sender is perfectly informed at the beginning. 
The equilibrium outcomes in the canonical cheap talk are a subset of the equilibrium outcomes in our model 
due to an additional constraint required for incentive compatibility there. Namely, the sender cannot gain by deviating to a more informative experiment than the one he commits to (in our model). 
This constraint allows us to simplify our algorithm further when searching for the sender-optimal equilibrium in the canonical cheap talk game. In general, the two algorithms can yield different solutions, suggesting that acquiring more information may indeed reduce the sender's equilibrium payoff.

	\textbf{\emph{Related literature.}}
	This paper describes a model of Bayesian persuasion with limited commitment, and is especially close to those papers in this literature that relax the commitment assumption at the communication stage.  In \citet{GuoShmaya2018Costly} and  \cite{tanandanh}, the sender cannot commit to reporting
	the true information outcomes but he incurs a cost of
	making incorrect claims.
	\cite{alonsodata} allow the receiver to endogenously design an audit scheme, which in turn affects the sender's cost of misreporting. \citet{lipnowski2018limit} discuss the situation where the sender can misreport the information outcomes at an exogenously given probability. In \cite{daniel}, the receiver can cross-check the sender's reports by privately randomizing over information structures. 
	Regarding communication 
	games
	with strategic information acquisition, \cite{HarryPei2015} discusses a cheap talk game where the sender can acquire costly information that is unobserved by the receiver.
	\cite{private} consider a promotion game where the sender can privately and sequentially acquire signals generated from a binary experiment. \cite{squintani} allow the sender to choose the number of trials, which can be public information or the sender's private information. In 
	the latter two papers, though information cannot be falsified, its interpretation is subject to the sender's disclosure policies. In contrast to these papers, we assume commitment on information structure and relax the commitment at the communication stage in the sense that the sender's messages are pure cheap talk.

	Our paper contributes to the literature on cheap talk with overt information acquisition. 
	\citet{Ivanov} investigates information design followed by cheap talk in a uniform-quadratic environment. He characterizes the optimal interval information structures. \citet{IngaDezso2019} consider a two-dimensional state space and the sender has access to a signal structure with elliptical distribution. In contemporaneous works, \cite{sophie} studies overt information acquisition with posterior-separable cost in a cheap talk model; 
\cite{lou} instead considers a costless environment with quadratic utilities. 
Both papers formulate their models as a linear persuasion problem with incentive constraints and utilize the recent developments on extreme points (\cite{KleinerMoldovanuStrack2021}, \cite{ArieliEtAl2023}) to show the optimality of bi-pooling information structures. They characterize the optimal information structure under the uniform-quadratic setting.  In contrast, we provide a complete characterization of the optimal information structure for arbitrary preferences within a finite model (binary states and finite actions). This cannot be otherwise accommodated with the existing results on linear persuasion. 
	
Our paper is close to \cite{lipnowski2018cheap}.
They study the canonical cheap talk model where the sender has perfect private information ex-ante. They focus on situations where the sender's preferences are state-independent, and find that the highest equilibrium payoff the sender can achieve is the quasiconcave envelope of his indirect value function. 
In contrast, our sender's private information is endogenously determined by the information structure he commits to. In addition, we allow the sender to have arbitrary state-dependent preferences.  It turns out that, when the sender has state-independent preferences, the equilibrium outcomes in these two models are equivalent. Therefore, the solution of our algorithm under state-independent preferences coincides with their characterization of the quasiconcave envelope. 
However, with state-dependent preferences, there is a fundamental difference between our model and the canonical cheap talk model as discussed in Section \ref{s:cheaptalk}.\footnote{
Other related papers are \cite{Lipknowski2020} and \cite{lucas}.
They provide conditions under which the optimal equilibrium outcome under cheap talk is equivalent to Bayesian persuasion.}

\cite{LinLiu} study the credibility of persuasion assuming that the sender's deviation in messages is not detectable if the marginal distribution of messages remains the same. Their sender's incentive constraints arrive at the ex-ante stage, in the sense that the gain from swapping messages in one state cannot outweigh the loss from that in another state.\footnote{
\cite{LinLiu} focus on pure strategy equilibrium where the receiver cannot randomize.} 
However, our sender's incentive constraints arrive at the interim stage after the outcome of the experiment is privately revealed to the sender. 
The incentive constraints in these two papers are not nested.
\cite{Salamanca} studies a mediated communication game 
in which an informed sender sends a cheap talk message to a mediator, who can commit to a reporting rule based on the sender's message.  
The receiver then takes an action based on the mediator's report. 
Interestingly, under binary state space, our solution provides a lower bound to the sender's highest achievable payoff in \cite{Salamanca}. The relationship is ambiguous for larger state space.  We provide a more thorough discussion in Section \ref{s:discussion}.

Lastly, our paper contributes to the literature on algorithmic information design, see \cite{AlgorithmicBayesianPersuasion}. In a recent work, \cite{babichenko2023} discuss the algorithmic study of a canonical cheap talk game in a finite environment (finite states and finite actions). They show that with certain restrictions, e.g., when the cardinality of state space is constant, the computation will end in polynomial time. Instead, we study a different model with an additional layer of information acquisition. 
Unlike the classic algorithmic approach, our algorithm explicitly writes down the solution of the linear program, which turns out to have clean and simple geometric meanings. 
	
	\section{The Model}
	
	A sender ($S$) and a receiver ($R$) initially share a common prior belief about some state $\theta$.  The state space $\Theta=\{0,1\}$ is binary. We use $\mu \in \Delta\Theta$ to represent a probability distribution over the state, where $\mu(\theta)$ stands for the probability of state $\theta$.  The prior distribution about the state is denoted by $\mu_0$.

	There is a finite set $A$ of actions, with $|A| \ge 2$. We use $a$ to represent a typical element of $A$, and use $\alpha \in \Delta A$ to represent a mixed action (i.e., a probability distribution over $A$).  Each player $i \in \{S,R\}$ is an expected utility maximizer, whose utility $u_i(a,\theta)$ generally depends on both the action and the state. We assume no action is strictly dominated for the receiver.

	The game consists of two stages. In the first stage, the sender commits to choosing a Blackwell experiment (a mapping from the state space to probability distributions over signals) 
	and conducts the experiment at zero cost.
	As is standard in the Bayesian persuasion literature, this is equivalent to choosing a distribution of posterior beliefs induced by the experiment.  In other words, the sender commits to a simple random posterior $P \in \Delta(\Delta \Theta)$ such that $\mathbb{E}_P[\mu]=\mu_0$, and $P$ has a finite support.\footnote{
See \cite{denti}. Because we are directly working with 
the random posterior induced by a Blackwell experiment, 
we implicitly assume, without loss of generality, that distinct signals induce different posterior beliefs.} 
	After the sender conducts the experiment, he privately observes the realization of the random posterior $\mu \in \operatorname{supp}(P)$. We use $P(\mu)$ to denote the ex-ante probability that the experiment induces posterior $\mu$ for the sender (given the prior belief $\mu_0$). The information structure chosen by the sender determines the distribution of his private information. 
	
	In the second stage, the sender interacts with the receiver in a game of strategic information transmission. Denote $M$ as a rich finite message space. 
	Given the random posterior $P$, the sender's reporting strategy,
	$\sigma_S: \operatorname{supp}(P) \rightarrow \Delta M$,
	maps the realization of the random posterior
	to a distribution of messages.
	The receiver's decision rule, 
	$\sigma_R:  M \rightarrow \Delta A$,
	maps the sender's message to a distribution of actions. 
	Each player $i$'s expected utility can be written as:
	\begin{equation*}
		U_i(\sigma_S,\sigma_R,P)=\sum_{\mu\in \operatorname{supp} (P),\, \theta\in \Theta,\,  m\in M,\,  a\in A} P(\mu) \mu(\theta) \sigma_S(m|\mu) \sigma_R(a|m) u_i(a,\theta).
	\end{equation*}
	
	In this framework the sender's posterior belief formation is trivial, and the receiver's posterior belief is obtained from 
	$P$ and $\sigma_S$ 
	using Bayes' rule.
	We focus on Perfect Bayesian Equilibrium,
	and call $(\sigma_S,\sigma_R,P)$ an equilibrium strategy profile if  $\sigma_S$ and $\sigma_R$ are mutual best responses given 
	$P$ and the belief system. 
	The sender chooses the random posterior $P$	
	to maximize his expected utility subject to an equilibrium. If there are
	multiple equilibria for a given $P$,
	we let the sender choose the one 
	that gives him the highest expected utility. 
	
	Notice that each player's equilibrium payoff only depends on the joint distribution of the receiver's posterior belief and the action induced. Therefore, for every equilibrium such that the sender conceals information through a mixed reporting strategy, we can find another truth-telling equilibrium  where the sender directly coarsens the experiment in the first place and the equilibrium outcome remains the same.  
This is reminiscent of the revelation principle.

	\begin{lemma}
		\label{lem1}
		It is without loss of generality to focus on truth-telling equilibria and a random posterior with a binary support, i.e.,
		$|\operatorname{supp}(P)|=|\Theta|=2$. 
	\end{lemma}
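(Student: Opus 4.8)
The plan is to prove two reductions, both powered by the observation (recorded just before the statement) that each player's payoff enters only through the induced joint distribution over the receiver's posterior belief and her action. Throughout, write $U_S(\alpha,q)=\sum_{a\in A,\,\theta\in\Theta}\alpha(a)\,q(\theta)\,u_S(a,\theta)$ for the sender's interim payoff when his belief is $q$ and the receiver plays $\alpha\in\Delta A$, and note that $U_S(\cdot,q)$ is linear in $q$ (similarly for the receiver). The first claim is that any equilibrium outcome is replicated by a truth-telling equilibrium on a coarsened experiment; the second is that, among such equilibria, the sender can do no worse using only two posteriors.

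\textbf{Truth-telling.} Fix an equilibrium $(\sigma_S,\sigma_R,P)$. Each on-path message $m$ induces, by Bayes' rule, a receiver posterior $\mu(m)$ and an action $\alpha_m:=\sigma_R(\cdot\mid m)$ optimal at $\mu(m)$. I would define a new experiment $P'$ whose realizations are exactly these receiver posteriors, with $P'(\mu)=\sum_{q,m:\,\mu(m)=\mu}P(q)\,\sigma_S(m\mid q)$, and instruct the sender to report his realized posterior truthfully. Bayes-plausibility $\mathbb{E}_{P'}[\mu]=q_0$ is the law of iterated expectations; obedience is inherited since beliefs and actions are unchanged. For the sender's incentive constraints I use linearity: since $\mu(m)=\sum_q \Pr(q\mid m)\,q$, we have $U_S(\alpha,\mu(m))=\sum_q \Pr(q\mid m)\,U_S(\alpha,q)$, a convex combination over the types $q$ that originally sent $m$. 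Each such type weakly preferred $m$, i.e.\ $U_S(\alpha_m,q)\geq U_S(\alpha_{m'},q)$ for every other message $m'$; averaging in $q$ gives $U_S(\alpha_m,\mu(m))\geq U_S(\alpha_{m'},\mu(m))$, so truthful reporting is optimal under $P'$ (off-path reports are deterred by assigning them unfavorable beliefs, using richness of $M$). As the induced distribution over (belief, action) is unchanged, all payoffs coincide.

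\textbf{Binary support.} After the first step I have a truth-telling equilibrium with posteriors $q_1,\dots,q_n$, weights $P(q_k)$, obedient actions $\alpha_k$, and interim payoffs $s_k:=U_S(\alpha_k,q_k)$ obeying $s_k\geq U_S(\alpha_l,q_k)$ for all $k,l$. Holding the triples $(q_k,\alpha_k,s_k)$ fixed, I consider the linear program of choosing $\lambda_k\geq 0$ to maximize $\sum_k\lambda_k s_k$ subject to $\sum_k\lambda_k q_k=q_0$ and $\sum_k\lambda_k=1$. The original weights are feasible, so the optimal value weakly exceeds the sender's payoff. Because $\Theta$ is binary the first constraint is a single scalar equation; with normalization there are two equality constraints, so an optimal basic feasible solution $\lambda^\ast$ has at most two nonzero entries. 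Restricting the equilibrium to $\operatorname{supp}\lambda^\ast$ preserves Bayes-plausibility and obedience (those posteriors and actions are untouched) and weakly raises the sender's payoff; moreover, discarding posteriors only shrinks the on-path action menu $\{\alpha_k\}$, so each surviving incentive constraint is a subset of an original one and still holds. This delivers a truth-telling equilibrium with $|\operatorname{supp}P|\leq 2$ that is at least as good, matching $|\operatorname{supp}P|=|\Theta|=2$ (the single-posterior case being the trivial no-information outcome).

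\textbf{Main obstacle.} The real difficulty is that the sender's interim incentive constraints couple the posteriors, so one cannot manipulate them one at a time. This is overcome in opposite ways in the two steps: in the first, linearity of $U_S$ in the belief lets a merged posterior inherit the incentive slack of its constituents; in the second, the reduction removes posteriors, and since that can only relax incentive constraints it is the benign direction. The lone place where $|\Theta|=2$ is essential is the dimension count that caps the support at two.
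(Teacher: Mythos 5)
Your proposal is correct and follows essentially the same route as the paper: the truth-telling step merges messages into the induced receiver posteriors and uses linearity of the sender's interim payoff in the belief to show the pooled type inherits the incentive slack of its constituents (exactly the paper's convex-combination argument), and the support reduction is the Carath\'eodory-type dimension count on the Bayes-plausibility constraint that the paper invokes, here packaged as a basic feasible solution of a linear program. The only cosmetic difference is that the paper phrases the second step via affineness of a concave envelope on $\operatorname{co}(\operatorname{supp} P)$ rather than as an LP, but the content is the same.
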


    The proof is provided in the Appendix.\footnote{
     This result holds for any finite state space. In particular, the support of the optimal random posterior can have at most $|\Theta|$ elements. 
    } 
    Because there are only two states, it is simpler to represent a probability distribution over the state by the probability of state 1. Henceforth, we use $\mu$ to stand for the probability of state 1. In addition, with binary states, a binary random posterior is completely pinned down by its support given a prior belief $\mu_0$.\footnote{ 
For example, if $\operatorname{supp}( P) = \{\mu',\mu''\}$, then the requirement that $P$ is a mean-preserving spread of the prior belief $\mu_0$ implies that 
$\mu'$ and $\mu''$ are induced with probabilities $P(\mu')$ and $1-P(\mu')$, where $P(\mu')= (\mu''-\mu_0)/(\mu''-\mu')$.} 
	Therefore, we sometimes refer to a binary random posterior simply by its support.

    With slight abuse of notation, let 
	\begin{equation*}
		u_i( a,\mu) := \mu u_i(a,1) + (1-\mu)u_i(a,0)
	\end{equation*}
	be player $i$'s expected utility from action $a$ when player $i$ has posterior belief $\mu$. Let  
	\begin{equation*}
		A_R(\mu) := \argmax_{a\in A}\, u_R(a,\mu)
	\end{equation*}
	be the receiver's best-response correspondence, mapping from belief into a non-empty set of actions. 
	We use $v(\mu):= \operatorname{co} \left(u_S(A_R(\mu),\mu)\right)$ to denote the sender's value correspondence 
	given that both the sender and the receiver hold the same posterior belief $\mu$ and the receiver responds optimally to this belief. Finally, let 
	\begin{equation*}
		\overline{v}(\mu) := \max_{a\in A_R(\mu)}\, u_S(a,\mu)
	\end{equation*}
	be sender's value function when both sender and receiver hold the same belief $\mu$ and the receiver takes the sender-preferred action in his best response correspondence. 
	
	Given Lemma \ref{lem1}, the sender's information design problem can be written as: 
	\begin{equation*}
		\max_{P\in \Delta(\Delta \Theta),\; \sigma_R(a|\cdot)\in \Delta A_R(\cdot)}\quad   \sum_{\mu\in \operatorname{supp} P}  P(\mu) \sum_{a\in  A_R(\mu)} \sigma_R(a|\mu) u_S(a,\mu),
	\end{equation*}
	subject to 
	sender's  incentive constraints: for every $\mu,\mu'\in \operatorname{supp}( P)$,
	\begin{equation}\label{IC-sender}
		\sum_{a\in  A_R(\mu)} \sigma_R(a|\mu) u_S(a,\mu)\ge \sum_{a\in  A_R(\mu')} \sigma_R(a|\mu') u_S(a,\mu),
	\end{equation}
	and subject to the requirement that $|\operatorname{supp}( P)|=2$ and $P$ is a mean-preserving spread of $\mu_0$.
	We denote $W^*(\mu_0)$ as the solution value to this program at the prior belief $\mu_0$.   
	
	Figure \ref{vs} give two examples of the sender's value function $\overline{v}$. The left panel refers to the case where the sender has state-dependent preferences (the piecewise 
	slopes of $\overline{v}$ are arbitrary). 
	The right panel refers to the case where the sender has state-independent preferences ($\overline{v}$ is piecewise constant). The red dashed curves $W^*$ represent the highest equilibrium payoff the sender can achieve for each prior belief (we will elaborate the algorithm to determine $W^*$ in the next section). 
	The function $W^*$ is piecewise affine.

	\begin{figure}[t]
		\centering
		\includegraphics[width=16cm]{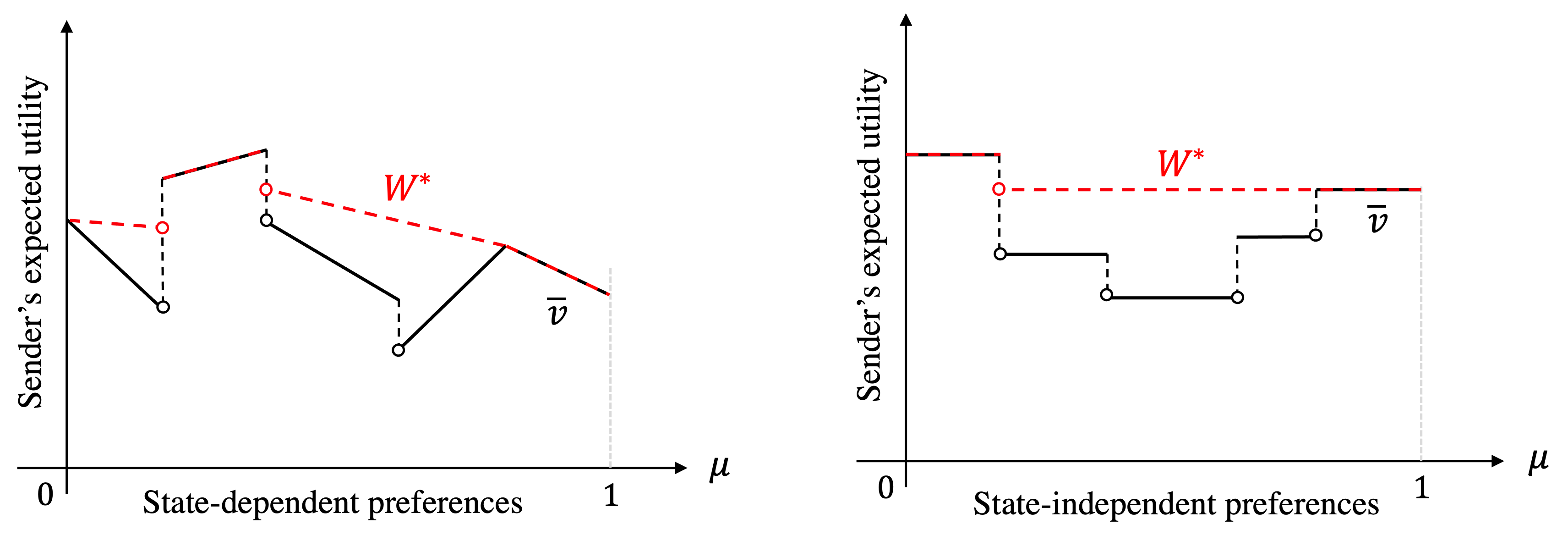}
		\caption{The sender's value function and his highest achievable payoff.}
		\label{vs}
	\end{figure} 
	
	If the sender with arbitrary preferences has full commitment power to truthfully report the outcome of the experiment,
	then the concave envelope of $\overline{v}$ determines the highest equilibrium payoff the sender can achieve 
	\citep{kamenica2011bayesian}. If the sender with state-independent preferences has no commitment power to truthful reporting, the quasiconcave envelope of $\overline{v}$ determines the highest equilibrium payoff the sender can achieve \citep{lipnowski2018cheap}. 
In our model, the sender has arbitrary preferences and no commitment power.  Therefore, $W^*(\cdot)$ is bounded above by the concave envelope of $\overline{v}(\cdot)$.  
	The relationship between $W^*(\cdot)$ and the quasiconcave envelope of $\overline{v}(\cdot)$ is in general ambiguous 
	(see the red curve in the left panel). 
	We will elaborate more on this point later.

	\section{Optimal Information Design}
	\label{s:algorithm}

	We make an assumption about $ A$ in order to clarify the exposition while avoiding burdensome notation.
	We assume that every element in $ A$ is uniquely optimal for the receiver at some belief.
	This rules out the possibility that an action $a\in  A$ is an exact duplicate of another action $a' \in A$ according to the receiver's preferences (i.e., $u_R(a,\theta)=u_R(a',\theta)$ for all $\theta$).
	It also rules out the possibility that $a \in  A$ is weakly optimal (together with $a', a'' \in  A$) for the receiver at exactly one belief, but is strictly worse than $a'$ or $a''$ at any other belief. 
	The analysis in this paper can be suitably extended to handle situations when this assumption does not hold but at the cost of more clumsy notation.

	Given the assumption that every element of $ A$ is 
	a unique best response for the receiver at some belief, we have $|A_R(\mu)| \le 2$ for all $\mu \in [0,1]$.  Moreover, we can order the actions in $ A$ in an increasing sequence, $\{a_{-J},\ldots,a_{-1},a_0,a_1,\ldots,a_K\}$, such that action $a_n$ is
	receiver's best response on a closed interval of beliefs $I_n$, where the lowest belief in $I_n$ is equal to the highest belief in $I_{n-1}$.\footnote{
Specifically, $I_n:=\{\mu\in[0,1]: a_n\in A_R(\mu)\}.$}  
	Here, we let $a_0=A_R(\mu_0)$ be the default action of the receiver when she has no information.  For actions higher than $a_0$, we 
	use $\mu_k$ to denote the \emph{lowest} belief that $a_k$ is a best response for the receiver. For actions lower than $a_0$, we use $\mu_{-j}$ to denote the \emph{highest} belief that $a_{-j}$ is a best response for the receiver. 
	For completeness, we let $\mu_{K+1}=1$ and $\mu_{-J-1}=0$. 
	We call $B:=\{\mu_{-J-1},\ldots,\mu_{-1},\mu_1,\ldots,\mu_{K+1}\}$ the set of \emph{boundary beliefs}.  
	The notation adopted under this convention is illustrated by Figure \ref{vs2}.  Elements of $B$ are highlighted in red.  We assume the prior $\mu_0$ is in the interior of $I_0$ in the figure, but this is not important for our analysis.
	
	\begin{figure}[t]
		\centering
		\includegraphics[width=8cm]{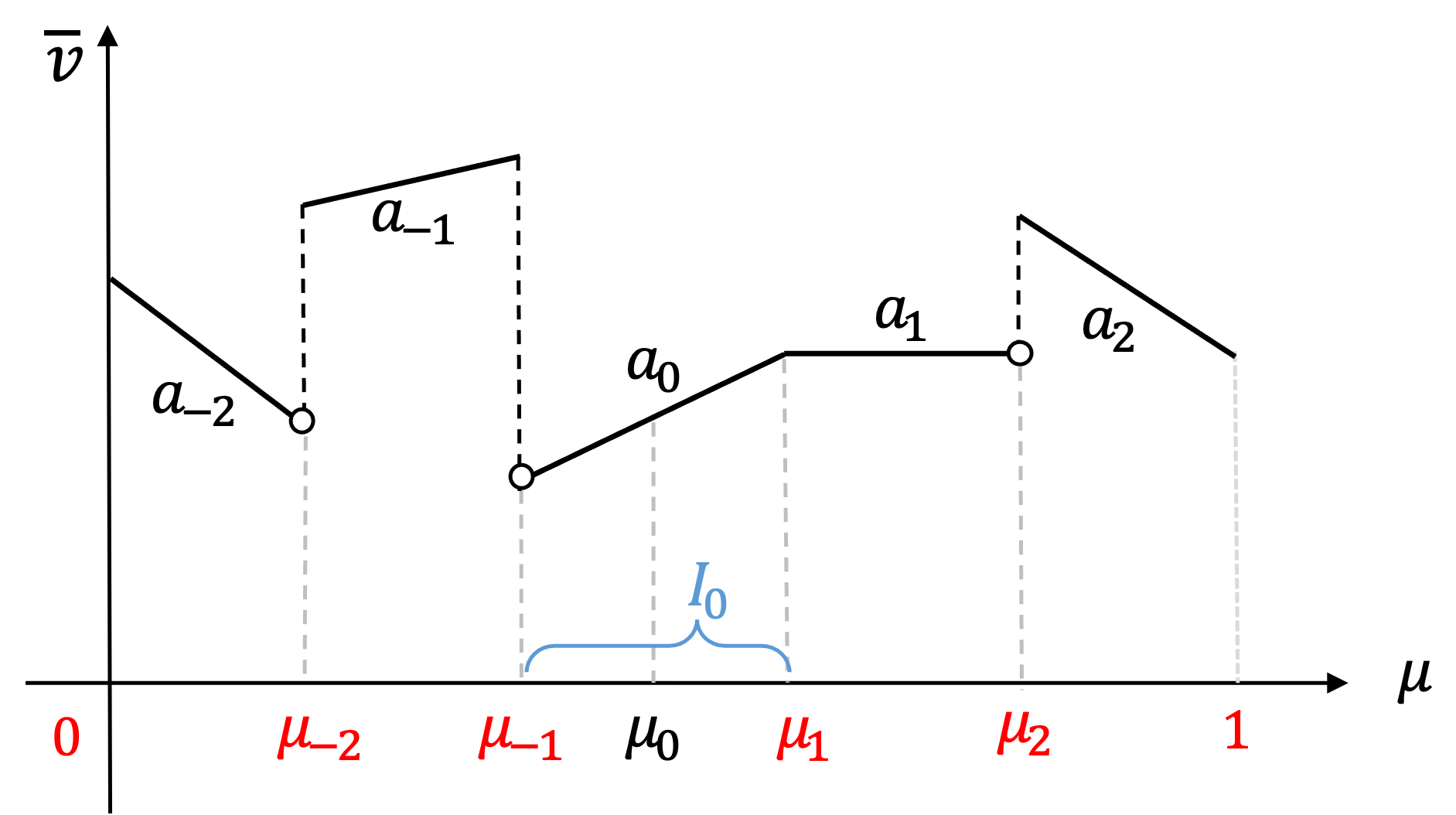}
		\caption{The set of boundary beliefs.}
		\label{vs2}
	\end{figure} 
	
	\begin{proposition}
		\label{withoutloss}
		For any prior belief, there exists an optimal binary random posterior whose support is a subset of the set of boundary beliefs.
	\end{proposition}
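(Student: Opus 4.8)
The plan is to start from an arbitrary optimal binary random posterior and transform it, one posterior at a time, into one supported on boundary beliefs, without decreasing the sender's payoff and without breaking feasibility. Fix a prior $\mu_0$ and take an optimal solution (existence is routine, as the feasible set of supports and receiver mixtures is compact and the objective is continuous). Write its support as $\{\mu',\mu''\}$ with $\mu' \le \mu_0 \le \mu''$, as forced by the mean-preserving requirement, and let $\alpha',\alpha''$ be the receiver's (possibly mixed) best responses used at the two posteriors. Because every element of $\hat A$ is a \emph{unique} best response at some belief, at any interior belief $A_R$ is a single pure action, so mixing can occur only at boundary beliefs. I would then define the two affine maps $L'(\mu) := V_S(\mu;\alpha') = \sum_a \alpha'(a) V_S(\mu;a)$ and $L''(\mu) := V_S(\mu;\alpha'')$, i.e.\ the sender's payoff from each outcome evaluated at an arbitrary belief.

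The key fact I would record first is a slope ordering implied by the sender's incentive constraints \eqref{IC-sender}: they read $L'(\mu') \ge L''(\mu')$ and $L''(\mu'') \ge L'(\mu'')$, and since $\mu' < \mu''$ with $L'-L''$ affine, the difference $L'-L''$ is nonincreasing, hence nonnegative for all $\mu \le \mu'$ and nonpositive for all $\mu \ge \mu''$. Writing $g := \max\{L',L''\}$, which is convex, this gives $g(\mu') = L'(\mu')$ and $g(\mu'') = L''(\mu'')$, and these identities persist if $\mu'$ is decreased or $\mu''$ is increased. Consequently the sender's objective $P(\mu')L'(\mu') + P(\mu'')L''(\mu'')$ equals precisely the value at $\mu_0$ of the chord of $g$ joining $(\mu', g(\mu'))$ and $(\mu'', g(\mu''))$.

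Next, suppose $\mu'$ is interior, lying in the interval $I_n$ on which the pure action $a_n$ is the receiver's best response; then $\alpha' = a_n$, and $L'$ stays available at every belief of $I_n$, in particular at its left endpoint, which belongs to $B$. I would decrease $\mu'$ to that endpoint, holding $L',L'',\mu''$ fixed. Feasibility is preserved ($a_n$ remains a best response, $\mu' \le \mu_0 \le \mu''$ still holds, and IC holds by the monotonicity of $L'-L''$ above), and for the objective I invoke convexity of $g$: sliding the left endpoint of a chord of a convex function further left, with the right endpoint fixed, weakly raises the chord's value at the fixed interior point $\mu_0$. A symmetric step pushes $\mu''$ up to the right endpoint of its interval, also in $B$. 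Applying each step — and skipping any coordinate already at a boundary, which is already in $B$ and is where mixing lives — yields an optimal solution with support in $B$. The only leftover case is a degenerate optimum $\mu'=\mu''=\mu_0$ (no information) with $\mu_0 \notin B$; here I would replace it by the spread onto the two endpoints of the default interval $I_0$ with $a_0$ played purely at both, which leaves the payoff unchanged (since $V_S(\cdot;a_0)$ is affine and the mean is preserved) and satisfies IC trivially.

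I expect the main obstacle to be the objective-monotonicity step: verifying that pushing a posterior outward weakly improves the sender's payoff. The subtlety is that moving a posterior simultaneously changes the weights $P(\mu'),P(\mu'')$, the evaluation point of the line, and possibly the binding IC constraint, so the net effect is not obviously signed. Recasting the objective as the value at $\mu_0$ of a chord of the convex upper envelope $g$ is what renders the sign transparent, and this is the point I would develop most carefully.
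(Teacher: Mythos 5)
Your proof is correct and follows essentially the same route as the paper's: push any non-boundary posterior outward to the adjacent boundary belief, note that IC is preserved because $V_S(\cdot;a)$ is affine, and conclude that the payoff weakly rises because the new random posterior is a mean-preserving spread. Your chord-of-the-convex-envelope $g=\max\{L',L''\}$ argument is simply an explicit unpacking of the step the paper delegates to Blackwell's theorem, so there is no substantive difference.
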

	
	\begin{proof}
		If $W^*(\mu_0)=\overline{v}(\mu_0)$, the random posterior with support $\{\mu_{-1},\mu_1\}$ (which induces the default action $a_0$) is optimal.  Suppose $W^*(\mu_0)> \overline{v}(\mu_0)$. Then there is an incentive compatible binary random posterior $P$ with $\operatorname{supp}( P) = \{\mu',\mu''\}$ that induces the receiver to take different responses after different messages, 
  i.e., $\sigma_R(a|\mu')\ne \sigma_R(a|\mu'')$. 
  Suppose that at least one element of $\operatorname{supp}( P)$ does not belong to $B$, say $\mu''\in (\mu_k,\mu_{k+1})$.  Then, the receiver takes pure action $ \sigma_R(a|\mu'')=a_k$ after sender's message $\mu''$. Consider another random posterior $P'$ with $ \operatorname{supp}(P')=\{\mu',\mu_{k+1}\}$, which is strictly more informative than $P$.  
Conditional on that the receiver takes the same action $a_k$ at belief $\mu_{k+1}$, the sender's incentive compatibility constraint (\ref{IC-sender})  at the interim belief $\mu_{k+1}$ still holds given it holds at $\mu''$ (as his expected utility in action $u_S(a,\cdot)$ is linear in belief). 
  Furthermore, incentive compatibility implies that the sender's payoff is convex in belief 
conditional on the receiver's equilibrium actions (i.e., $\max\{u_S(\sigma_R(a|\mu'),\mu),u_S(a_{k+1},\mu)\}$ is convex in $\mu$). 
Because $P'$ is more informative than $P$, his payoff is higher under $P'$ (implied by Blackwell's theorem). A similar reasoning applies when $\mu'$ does not belong to $B$.
	\end{proof}

Proposition \ref{withoutloss} suggests that a finite set of random posteriors is sufficient for determining the optimal information structure.
It is driven by the observation that, for a given pair of actions, if a less informative information structure is incentive compatible,
	then the two parties' interests are aligned for each information outcome, which further implies that a more informative information structure is also incentive compatible and provides the sender with a higher expected utility conditional on that the more informative information structure  induces the same pair of actions on path. Therefore, it is without loss of generality to consider the most informative information structure that can induce a given pair of actions. Every posterior belief induced by 
	this information structure belongs to the set $B$. Henceforth, 
	we can focus on binary random posterior $P$ such that $\operatorname{supp}( P)=\{\mu_{-j},\mu_k\}$ for some $j$ and $k$.

	For a binary random posterior $\{\mu_{-j},\mu_k\}$, use $\alpha_{-j}\in \Delta A_R(\mu_{-j})$ and $\alpha_{k}\in \Delta A_R(\mu_{k})$ to represent the mixed strategy 
	taken after 
	message $\mu_{-j}$ and $\mu_k$, respectively.  
	Let
	\begin{equation*}
		\mathbb{E}_{\alpha_{k}}\left[u_S(a,\mu_{k})\right]
		=
		\sum_{a\in A_R(\mu_{k})}\alpha_{k}(a)\ u_S(a,\mu_{k}) 
	\end{equation*}
	be the sender's expected utility if he has a posterior belief $\mu_k$ and the receiver takes the mixed strategy $\alpha_k$,
	where $\alpha_k(a)$ stands for the probability of taking action $a$ under the mixed strategy $\alpha_k$. Define
	$\mathbb{E}_{\alpha_{-j}}[u_S(a,\mu_{-j})]$ 
	similarly.
	
	Starting with an initial belief $\mu \in (\mu_{-j},\mu_k)$ (i.e., the expectation of the random posterior), the payoff from an experiment that generates posteriors $\mu_{-j}$ and $\mu_k$  and induces $\alpha_{-j}$ and $\alpha_{k}$ is:
	\begin{equation*}
		W_{-j,k}(\mu; \alpha_{-j},\alpha_{k}) :=
		\frac{\mu_k - \mu}{\mu_k-\mu_{-j}} \mathbb{E}_{\alpha_{-j}}[u_S(a,\mu_{-j})] + \frac{\mu-\mu_{-j}}{\mu_k-\mu_{-j}} \mathbb{E}_{\alpha_{k}}[u_S(a,\mu_{k})].
	\end{equation*}
	This payoff is linear in $\mu$ with a constant derivative,
	\begin{equation*}
		W'_{-j,k}(\cdot\ ; \alpha_{-j},\alpha_{k}) = \frac{\mathbb{E}_{\alpha_{k}}[u_S(a,\mu_{k})]-\mathbb{E}_{\alpha_{-j}}[u_S(a,\mu_{-j})]}{\mu_k-\mu_{-j}}.
	\end{equation*}
We provide a geometric illustration of this term in Figure \ref{pp}.
 
 If $\alpha$ puts probability one on an action $a\in A_R(\mu)$, then it represents a pure strategy. We sometimes replace $\alpha$ by $a$ to emphasize the difference between a pure strategy and a mixed strategy.
	
To analyze incentive compatibility, we define the sender's \emph{marginal incentive} corresponding to a mixed strategy $\alpha$ as:
	\begin{equation*}
		m_S(\alpha) := \mathbb{E}_{\alpha}[u'_S(a,\cdot)].
	\end{equation*}
	We also use $m_S(a)=u_S(a,1)-u_S(a,0)$ to represent the marginal incentive for a pure action $a$. 
The sender's marginal incentives for pure actions are the slopes of his piecewise indirect value function. Hence, his marginal incentives for mixed actions are the weighted average of the slopes for the pure actions taken with positive probabilities. To generate credibility for a binary random posterior $\{\mu_{-j},\mu_{k}\}$, both the slopes of $\mathbb{E}_{\alpha_{-j}}[u_S(a,\cdot)]$ and $\mathbb{E}_{\alpha_{k}}[u_S(a,\cdot)]$ and their values at $\mu_{-j}$ and $\mu_{k}$ matter. 
The next 
lemma provides a straightforward method for verifying incentive compatibility.

	\begin{lemma}
		\label{l:IC}
		An information structure that generates posterior beliefs in $\{\mu_{-j},\mu_k\}$ and induces  $\alpha_{-j}$ and $\alpha_{k}$ at these two beliefs satisfies sender's incentive compatibility constraints (\ref{IC-sender}) if and only if 
		\begin{equation}\tag{IC}
			m_S(\alpha_{-j}) \le W'_{-j,k}(\cdot\ ;\alpha_{-j},\alpha_{k}) \le m_S(\alpha_{k}).
			\label{compareslopes}
		\end{equation}
	\end{lemma}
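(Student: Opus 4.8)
The plan is to unpack the two inequalities in (\ref{IC-sender}) and then exploit the affine structure of the payoff maps. Specializing (\ref{IC-sender}) to $\mu'=\mu_{-j}$ and $\mu''=\mu_k$ with decision rules $\alpha_{-j}$ and $\alpha_k$, the constraints say that the sender at belief $\mu_{-j}$ weakly prefers $\alpha_{-j}$ to the deviation that induces $\alpha_k$, and the sender at belief $\mu_k$ weakly prefers $\alpha_k$ to the deviation that induces $\alpha_{-j}$; that is, $\mathbb{E}_{\alpha_{-j}}[V_S(\mu_{-j};a)] \ge \mathbb{E}_{\alpha_k}[V_S(\mu_{-j};a)]$ and $\mathbb{E}_{\alpha_k}[V_S(\mu_k;a)] \ge \mathbb{E}_{\alpha_{-j}}[V_S(\mu_k;a)]$.

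The one substantive observation is that for each fixed $\alpha$ the map $\mu \mapsto \mathbb{E}_{\alpha}[V_S(\mu;a)]$ is affine with constant slope $m_S(\alpha)$, by the definition of the marginal incentive. Consequently the value of this map at one belief equals its value at any other belief plus $m_S(\alpha)$ times the gap in beliefs. Note too that the chord slope is $W'_{-j,k} = \bigl(\mathbb{E}_{\alpha_k}[V_S(\mu_k;a)] - \mathbb{E}_{\alpha_{-j}}[V_S(\mu_{-j};a)]\bigr)/(\mu_k-\mu_{-j})$, which pairs each $\alpha$ with its \emph{own} belief rather than with the deviation belief.

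I would then show that each slope bound in (\ref{compareslopes}) is equivalent to one of the two constraints. For the lower bound, multiplying $m_S(\alpha_{-j}) \le W'_{-j,k}$ by $(\mu_k-\mu_{-j})>0$ and using affineness to rewrite $m_S(\alpha_{-j})(\mu_k-\mu_{-j}) = \mathbb{E}_{\alpha_{-j}}[V_S(\mu_k;a)] - \mathbb{E}_{\alpha_{-j}}[V_S(\mu_{-j};a)]$, the common term $\mathbb{E}_{\alpha_{-j}}[V_S(\mu_{-j};a)]$ cancels and one is left with $\mathbb{E}_{\alpha_{-j}}[V_S(\mu_k;a)] \le \mathbb{E}_{\alpha_k}[V_S(\mu_k;a)]$, which is precisely the truth-telling constraint at $\mu_k$. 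Symmetrically, the upper bound $W'_{-j,k} \le m_S(\alpha_k)$, after writing $m_S(\alpha_k)(\mu_k-\mu_{-j}) = \mathbb{E}_{\alpha_k}[V_S(\mu_k;a)] - \mathbb{E}_{\alpha_k}[V_S(\mu_{-j};a)]$ and cancelling the common term $\mathbb{E}_{\alpha_k}[V_S(\mu_k;a)]$, reduces to $\mathbb{E}_{\alpha_k}[V_S(\mu_{-j};a)] \le \mathbb{E}_{\alpha_{-j}}[V_S(\mu_{-j};a)]$, the truth-telling constraint at $\mu_{-j}$. Since every step is an equivalence, this yields the claimed ``if and only if.''

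The argument is essentially bookkeeping, and I do not expect a genuine obstacle: the only real content is the affineness of $\mu \mapsto \mathbb{E}_{\alpha}[V_S(\mu;a)]$, which converts a comparison of slopes into a comparison of levels at a shared belief. The one point that requires care is matching each slope bound to the correct constraint, since $W'_{-j,k}$ is built from the ``diagonal'' values $\mathbb{E}_{\alpha_k}[V_S(\mu_k;a)]$ and $\mathbb{E}_{\alpha_{-j}}[V_S(\mu_{-j};a)]$: the lower bound on the chord slope turns out to encode the constraint at the \emph{higher} belief $\mu_k$, and the upper bound the constraint at the \emph{lower} belief $\mu_{-j}$.
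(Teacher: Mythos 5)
Your proposal is correct and follows essentially the same route as the paper: both arguments use the affineness of $\mu \mapsto \mathbb{E}_{\alpha}[V_S(\mu;a)]$ with slope $m_S(\alpha)$ to rewrite the sender's deviation payoff at the other belief, and both identify the lower slope bound with the truth-telling constraint at $\mu_k$ and the upper bound with the constraint at $\mu_{-j}$. Nothing to add.
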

	
	\begin{proof}
		Sender's payoff from inducing $\alpha_{-j}$ at belief $\mu_{k}$ is $\mathbb{E}_{\alpha_{-j}}[u_S(a,\mu_{-j})] + m_S(\alpha_{-j})(\mu_k-\mu_{-j})$.  Incentive compatibility requires that this payoff be lower than $\mathbb{E}_{\alpha_{k}}[u_S(a,\mu_{k})]$, which is sender's payoff from inducing $\alpha_{k}$ at belief $\mu_{k}$.  This is equivalent to $m_S(\alpha_{-j}) \le W'_{-j,k}(\cdot\ ;\alpha_{-j},\alpha_{k})$.  The second inequality in (\ref{compareslopes}) follows similarly from the requirement that the sender has no incentive to induce $\alpha_{k}$ when his private belief is $\mu_{-j}$.
	\end{proof}
	
	Lemma \ref{l:IC} suggests a way to find the optimal information structure.  For each binary random posterior $\{\mu_{-j},\mu_k \}$, we first check condition (\ref{compareslopes}) for all pairs $(\alpha_{-j},\alpha_{k}) \in \Delta A_R(\mu_{-j})\times \Delta A_R(\mu_{k})$, and select the incentive compatible 
 pair with the highest value of 
$W_{-j,k}(\mu_0;\alpha_{-j},\alpha_{k})$.  Optimizing over $j$ and $k$ would then give the highest achievable payoff $W^*(\mu_0)$ for the sender.  

The difficulty is that there are infinitely many pairs $(\alpha_{-j},\alpha_{k})$. 
We identify the most relevant pairs that will guarantee  a solution by searching over such pairs. 
For a random posterior $P$ with support $\{ \mu_{-j},\mu_k\}$, there are three types of receiver's best response we need to consider.
	
	\textbf{\emph{Pure strategy (PP).}}  
	Suppose the receiver takes a pure action after each message. There are four possible PP pairs because the receiver's best response at each boundary belief typically contains two elements. 
Only one pair of actions is necessary for the search. 
Let $\overline{a}_{-j}$ be the sender-preferred action in $A_R(\mu_{-j})$ at belief $\mu_{-j}$, and $\underline{a}_{-j}$ be the remaining action (less preferred by the sender) in $A_R(\mu_{-j})$.  Similarly, let $\overline{a}_{k}$ be the sender-preferred action in $A_R(\mu_k)$ at belief $\mu_{k}$, and $\underline{a}_k$ be the remaining action in $A_R(\mu_k)$. If the sender is indifferent between $A_R(\mu_{-j})$ at belief $\mu_{-j}$, then we let $\overline{a}_{-j}=a_{-j+1}$; and if the sender is indifferent between $A_R(\mu_{k})$ at belief $\mu_{k}$, we choose $\overline{a}_{k}=a_{k-1}$.\footnote{
We break indifference in this way because then the random posterior with support $\{\mu_{-j},\mu_{k}\}$ is the most informative information structure that can induce $a_{-j+1}$ and $a_{k-1}$ if the sender reports truthfully.}
	
	If inequality (\ref{compareslopes}) holds for $(\alpha_{-j},\alpha_{k})=(\overline{a}_{-j},\overline{a}_k)$, we say that the random posterior $P$ is ``IC-PP,'' and we define $W_{-j,k}^{PP} := W_{-j,k}(\mu_0;\overline{a}_{-j},\overline{a}_k)$. 

The random posterior 
$P$ with support $\{\mu_{-j},\mu_k\}$ in Figure \ref{pp} is 
IC-PP. 
To see this, note that the inequalities in (\ref{compareslopes}) are geometrically equivalent to the following: the slope of the left black piece $m_S(\overline{a}_{-j})$ is smaller than the slope of the middle orange piece $W'_{-j,k}(\cdot\ ;\overline{a}_{-j},\overline{a}_{k})$, which is smaller than the slope of the right black piece $m_S(\overline{a}_{k})$. By this sequence of inequalities, when 
$u_S(\overline{a}_{-j},\cdot)$ (the left black piece) is extended to $\mu_k$, its value is below $u_S(\overline{a}_k,\mu_k)$ (the black dot on the right).  This indicates that the sender would not misreport $\mu_{-j}$ when his true belief is $\mu_{k}$.  Similarly, he has no incentive to misreport $\mu_k$ when his true belief is $\mu_{-j}$.

	\begin{figure}[t]
		\centering
		\includegraphics[scale=0.55]{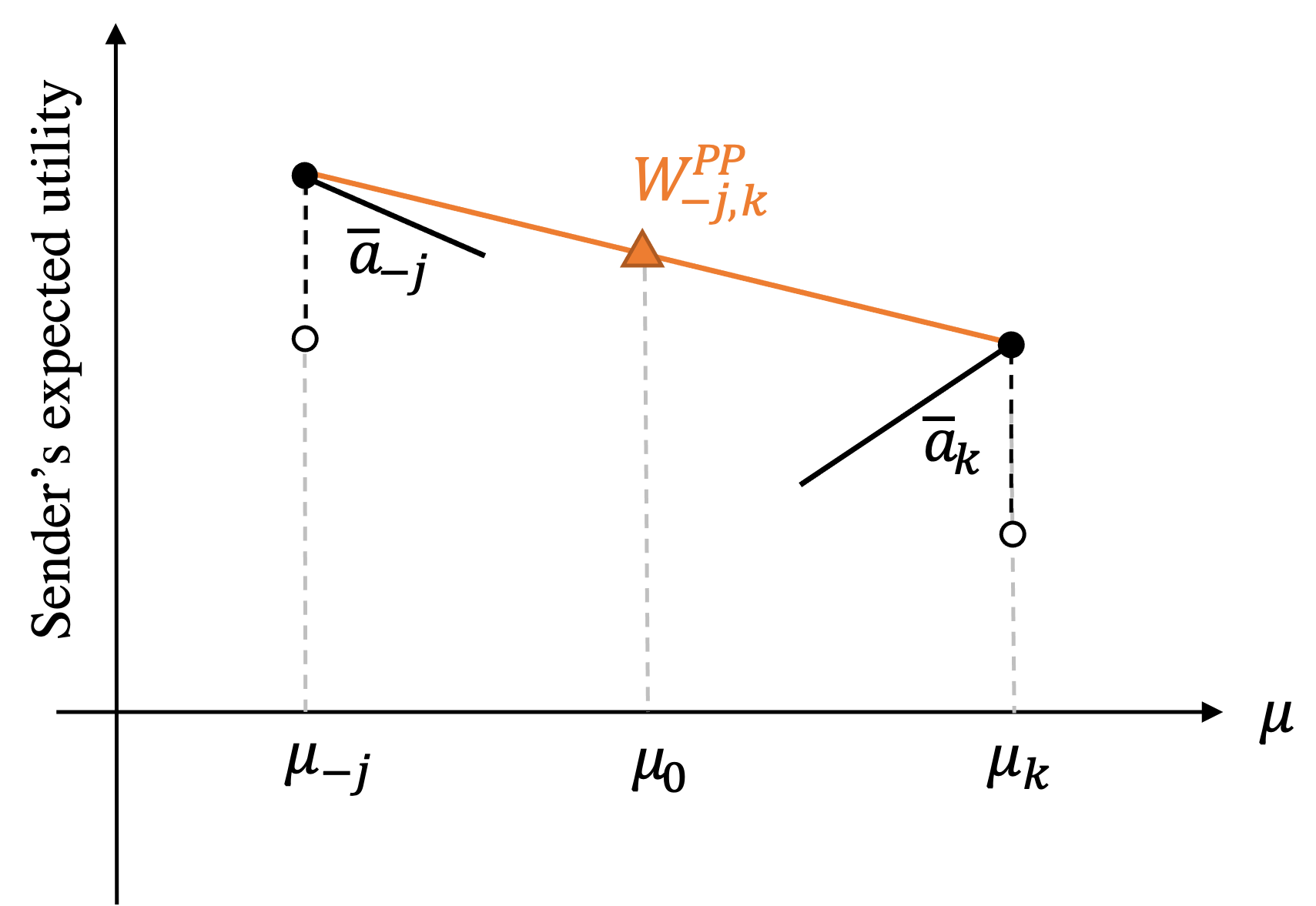}
		\caption{Incentive compatibility for pure strategy.}
		\label{pp}
	\end{figure}

	\textbf{\emph{One-sided randomization (PM or MP).}}  
	Suppose the receiver takes a mixed strategy after one of the messages.  Consider the case of PM (the MP case is symmetric),  
	and consider the pair $(\alpha_{-j},\alpha_k)=(\overline{a}_{-j},\alpha_k^{PM})$, where $\alpha_k^{PM}$ 
	puts weight $\gamma_k$ on $\overline{a}_k$ and weight $1-\gamma_k$ on $\underline{a}_{k}$.  The value of $\gamma_k$ is determined by the requirement that, at belief $\mu_{-j}$, the sender is indifferent between his preferred action $\overline{a}_{-j}$ and the mixed action $\alpha_k^{PM}$, 
	\begin{equation}
	\label{pmindi}
		u_S(\overline{a}_{-j},\mu_{-j})=\gamma_k u_S(\overline{a}_k,\mu_{-j})+(1-\gamma_k)u_S(\underline{a}_{k},\mu_{-j}).
	\end{equation}
	Notice that the indifference condition (\ref{pmindi}) determines the highest probability that the receiver can take the sender-preferred action $\overline{a}_{k}$ at belief $\mu_k$ without violating the sender's incentive compatibility at belief $\mu_{-j}$. By construction, the pair $(\overline{a}_{-j},\alpha_k^{PM})$ satisfies 
\begin{equation*}
W'_{-j,k}(\cdot\ ;\overline{a}_{-j},\alpha_k^{PM})=m_S(\alpha_k^{PM}), 
\end{equation*}
i.e., the second inequality in (\ref{compareslopes}) holds with equality.  
If it also satisfies the first inequality in (\ref{compareslopes}), and if $\alpha_k^{PM}$ is a valid mixed action,\footnote{
The value of $\gamma_k$ that satisfies equation (\ref{pmindi}) may be outside $[0,1]$, in which case $\alpha_k^{PM}$ is not a probability distribution.  } 
we say that the random posterior $P$ is ``IC-PM,'' and we define $W_{-j,k}^{PM} := W_{-j,k}(\mu_0;\overline{a}_{-j},\alpha_k^{PM})$.\footnote{
If  $\alpha_k^{PM}$ is not a valid probability distribution,  we let $\mathbb{E}_{\alpha_{k}}\left[u_S(a,\mu_{k})\right]:=\gamma_k u_S(\overline{a}_k,\mu_{-j})+(1-\gamma_k)u_S(\underline{a}_{k},\mu_{-j})$, given that $\gamma_k$ is the solution to equation (\ref{pmindi}). The corresponding value of $W_{-j,k}^{MP}$ is defined accordingly.  We adopt a similar convention for the cases of 
PM and MM.}

	The left panel of Figure \ref{pm} illustrates 
the geometric construction of one-sided randomization. 
	First, draw an affine line connecting the left black dot $u_S(\overline{a}_{-j},\mu_{-j})$ and the green dot on the right $($that is the intersection point between the extended curves of  $u_S(\overline{a}_{k},\cdot)$ and $u_S( \underline{a}_{k},\cdot))$. If this affine line (colored in orange) intersects with the sender's value correspondence $v(\cdot)$ at belief $\mu_k$, then the blue dot at that intersection represents the mixed action $\alpha_k^{PM}$. In addition, we observe that the slope of the left black piece is smaller than the slope of the orange affine piece. Therefore, the value of this orange affine line at the prior belief $\mu_0$ is the sender's expected utility from the one-sided randomization that we identify.

	\begin{figure}[t]
		\centering
		\includegraphics[scale=0.555]{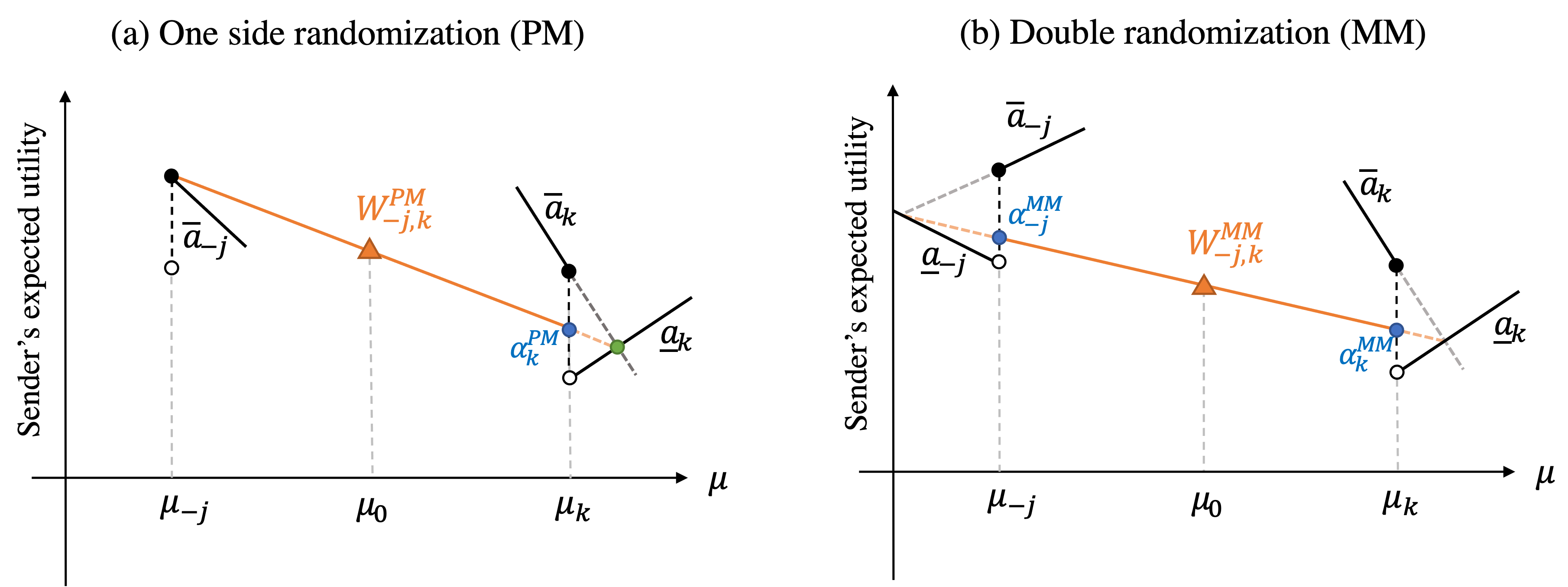}
		\caption{Relaxing incentive constraints by randomization.}
		\label{pm}
	\end{figure}

	\textbf{\emph{Double randomization (MM).}}  
	This involves the receiver taking a mixed strategy after each message.  Let $\alpha_{-j}^{MM}$ be a mixed action that puts weight $\gamma_{-j}$ on $\overline{a}_{-j}$ and weight $1-\gamma_{-j}$ on $\underline{a}_{-j}$.  Let $\alpha_{k}^{MM}$ be a mixed action that puts weight $\gamma_k$ on $\overline{a}_k$ and weight $1-\gamma_k$ on $\underline{a}_{k}$.  The weights $\gamma_{-j}$ and $\gamma_k$ are chosen in such way that the sender is indifferent between  $\alpha_{-j}^{MM}$ and $\alpha_{k}^{MM}$ both at belief $\mu_{-j}$ and at belief $\mu_k$:
	\begin{equation}\label{mmcondition}
		\mathbb{E}_{\alpha_{-j}^{MM}}[u_S(a,\mu_{-j})]=\mathbb{E}_{\alpha_{k}^{MM}}[u_S(a,\mu_{-j})], \qquad
		\mathbb{E}_{\alpha_{k}^{MM}}[u_S(a,\mu_{k})]=\mathbb{E}_{\alpha_{-j}^{MM}}[u_S(a,\mu_{k})].
	\end{equation}
	The two indifference conditions (\ref{mmcondition}) determine the highest probability that the receiver can take the sender-preferred action at both beliefs $\mu_{-j}$ and  $\mu_k$ without violating the sender's incentive compatibility at both beliefs.  
	As by construction, $(\alpha_{-j}^{MM},\alpha_k^{MM})$ satisfies $m_S(\alpha_{-j}^{MM})=W'_{-j,k}(\cdot\ ;\alpha_{-j}^{MM},\alpha_k^{MM})=m_S(\alpha_k^{MM})$, and so the incentive constraints (\ref{compareslopes}) hold. If the value of $(\gamma_{-j},\gamma_k)$ that solves these two equations lies inside $[0,1]^2$, then 
	both $\alpha_{-j}^{MM}$ and $\alpha_{k}^{MM}$ are valid mixed actions.\footnote{
    Whenever $\gamma_j$ and $\gamma_k$ are not uniquely pinned down by the indifference conditions (e.g., when all four actions share the same marginal incentives), we pick the values that maximize the sender's ex ante payoff.}   
    Then we say that the random posterior $P$ is ``IC-MM,'' and we define $W_{-j,k}^{MM}=W_{-j,k}(\mu_0;\alpha_{-j}^{MM},\alpha_k^{MM})$.

The right panel of Figure \ref{pm} illustrates the construction of double randomization. We first draw an orange affine line connecting the intersection points of the two left black pieces and the two right black pieces. The sender's expected utilities $u_S(\alpha_{-j}^{MM},\cdot)$,  $u_S(\alpha_{k}^{MM},\cdot)$, and his expected payoff from such double randomization $W_{-j,k}(\mu_0;\alpha_{-j}^{MM},\alpha_k^{MM})$ all coincide on this (orange) affine line. 
In addition, if this (orange) affine line intersects with the sender's value correspondence $v(\cdot)$ at both $\mu_{-j}$ and $\mu_{k}$ (i.e., the two blue dots in Figure \ref{pm}(b) lies in his value correspondence), then $\alpha^{MM}_{-j}$ and $\alpha^{MM}_{-j}$ are valid mixed actions and the random posterior $\{\mu_{-j},\mu_k\}$ is 
IC-MM.

	Now we introduce an algorithm that yields the highest achievable payoff $W^*(\mu_0)$, together with an implied optimal random posterior $P^*$.

	\textbf{\emph{Algorithm 1:}}
	\setlist{nolistsep}
	\begin{enumerate}
		
		\item For every pair $(-j,k)\in 
		\{-J-1,\ldots,-1\}\times \{1,\ldots,K+1\}$, 
		compute $W_{-j,k}(\mu_0;\overline{a}_{-j},\overline{a}_k)$ and rank these values from highest to lowest.\footnote{
			It is not important how we break ties. }  
		Starting from the pair with the highest value, verify whether it is IC-PP or not.  Stop the first time an IC-PP pair is found. Assign $W^1=W_{-j,k}^{PP}$ for such pair and let the set of $(-j,k)$ pairs with $W_{-j,k}^{PP}$ strictly higher than $W^1$ be $S_1$.  If there does not exist an IC-PP pair, assign $W^1=\overline{v}(\mu_0)$ and let $S_1=
		\{-J-1,\ldots,-1\}\times \{1,\ldots,K+1\}$, 
		\vspace{0.2cm}
		
		\item For every pair $(-j,k)$ in $S_1$:  
		\begin{enumerate}
			\item Compute $W_{-j,k}(\mu_0;\overline{a}_{-j},\alpha_k^{PM})$ and re-rank these values from highest to lowest. 
			Starting with the pair with the highest value, verify whether it is IC-PM or not.  Stop the first time when an IC-PM pair is found.  Assign $W^{(a)}=W_{-j,k}^{PM}$ for such pair and let the set of $(-j,k)$ pairs with $W_{-j,k}^{PM}$ strictly higher than  $W^{(a)}$ be $S^{(a)}$. If none of them is IC-PM, assign $W^{(a)}=\overline{v}(\mu_0)$ and $S^{(a)}=S_1$
			\item Go through a symmetric procedure in the case of MP. Assign $W^{(b)}=W_{-j,k}^{MP}$ the first time an IC-MP pair is found and let the set of $(-j,k)$ pairs with $W_{-j,k}^{MP}$ strictly higher than  $W^{(b)}$ be $S^{(b)}$. If none of them is IC-PM, assign $W^{(b)}=\overline{v}(\mu_0)$ and $S^{(b)}=S_1$. 
			\item Let $W^2=\max\{W^{(a)}, W^{(b)}\}$.  Let $S_2= S^{(a)} \cup S^{(b)}$.
		\end{enumerate}
		\vspace{0.2cm}
		
		\item For every pair $(-j,k)$ in $S_2$, compute $W_{-j,k}(\mu_0;\alpha_{-j}^{MM},\alpha_{k}^{MM})$ and re-rank these values from the highest to lowest. Starting with the pair with the highest value, verify whether it is IC-MM or not. Stop the first time an IC-MM pair is found and assign $W^3=W_{-j,k}^{MM}$ for such pair. If none of them is IC-MM, assign $W^3=\overline{v}(\mu_0)$. 
		\vspace{0.2cm}
		
		\item Assign $W^*(\mu_0)=\max\{W^1,W^2,W^3\}$. The random posterior with support $\{\mu_{-j},\mu_k\}$ corresponding to the $(-j,k)$ pair that yields $W^*(\mu_0)$ is optimal.  
	\end{enumerate}

	\begin{theorem}
		\label{theorem}
		Algorithm 1 determines the highest achievable payoff for the sender. 
	\end{theorem}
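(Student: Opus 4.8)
The plan is to reduce the sender's problem to a finite family of two-variable linear programs---one for each candidate support $\{\mu_{-j},\mu_k\}\subseteq B$---and then to show both that the vertices of each program are exactly the PP, PM/MP and MM candidates constructed above and that the nested elimination in Algorithm 1 returns the maximum over all of them. By \Cref{lem1} and \Cref{withoutloss} it is without loss to optimize over binary random posteriors supported on $B$, and by \Cref{l:IC} a pair $(\alpha_{-j},\alpha_k)$ is feasible exactly when the slope condition \eqref{compareslopes} holds. Writing $\alpha_{-j}$ as weight $\gamma_{-j}$ on $\overline a_{-j}$ (and $1-\gamma_{-j}$ on $\underline a_{-j}$) and likewise $\alpha_k$ with weight $\gamma_k$ on $\overline a_k$, the payoff $W_{-j,k}(\mu_0;\alpha_{-j},\alpha_k)$ is affine in $(\gamma_{-j},\gamma_k)$ and, since $\overline a$ is sender-preferred at each belief, nondecreasing in each coordinate; the two inequalities in \eqref{compareslopes} are affine as well. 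Thus, for fixed $(-j,k)$, we face a linear program over the polytope $[0,1]^2\cap\{\mathrm{IC}_1\}\cap\{\mathrm{IC}_2\}$, whose optimum is attained at a vertex.

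Next I would classify these vertices. Because the objective is nondecreasing in both coordinates, its maximizer over the box $[0,1]^2$ is the corner $(\gamma_{-j},\gamma_k)=(1,1)$, that is, the pair $(\overline a_{-j},\overline a_k)$; hence the constructed value $W_{-j,k}^{PP}$ is an upper bound on the per-pair optimum regardless of feasibility. If $(1,1)$ satisfies \eqref{compareslopes} it is the optimum (PP). Otherwise the optimum lies on the boundary of the feasible polytope, at either a square-edge meeting a tight IC line---one action pure and the binding inequality of \eqref{compareslopes} pinning down the mixing weight, which is exactly the PM or MP construction---or the intersection $\mathrm{IC}_1\cap\mathrm{IC}_2$ where both inequalities bind, which is the MM construction. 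This is precisely the trichotomy of receiver responses, so the three types exhaust the candidate optima for each pair; the footnote convention that extends the definition of the constructed value when a mixing weight leaves $[0,1]$ keeps these values well-defined and comparable even at infeasible vertices.

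The remaining task is the outer maximization. The PP upper bound immediately justifies step 1: any pair whose $W_{-j,k}^{PP}$ does not exceed the incumbent $W^1$ has per-pair optimum at most $W^1$ and may be dropped, which is exactly the rule defining $S_1$. To justify restricting the MM search to $S_2$, I would establish the monotonicity that, for a fixed pair, shifting weight off the sender-preferred corner can only lower the payoff, so that (in the extended sense) $W_{-j,k}^{MM}\le\max\{W_{-j,k}^{PM},W_{-j,k}^{MP}\}$. Granting this, a pair with both one-sided values below $W^2$ also has $W_{-j,k}^{MM}\le W^2$ and may be dropped, so confining MM to $S_2$ loses nothing. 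Since step 1 finds the best feasible PP pair, step 2 the best feasible one-sided pair among the PP-survivors, and step 3 the best feasible MM pair among the remaining survivors, $\max\{W^1,W^2,W^3\}$ equals the maximum over all supports and all three response types of the per-pair optimum, which by the previous paragraph is $W^*(\mu_0)$.

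The hard part will be the vertex-matching in the second step together with the monotonicity in the third, both of which hinge on a case analysis in the signs of $m_S(\overline a_{-j})-m_S(\underline a_{-j})$ and $m_S(\overline a_k)-m_S(\underline a_k)$---equivalently, on which inequality of \eqref{compareslopes} binds first as one leaves $(1,1)$---since the direction in which the objective moves along each tight IC edge, and hence which endpoint of the feasible segment is optimal, is governed by these signs. I would also have to dispose of the degeneracies noted in the footnotes: when the sender is indifferent at a boundary belief or a mixing weight is not uniquely determined, I must check that the tie-breaks $\overline a_{-j}=a_{-j+1}$ and $\overline a_k=a_{k-1}$ make $\{\mu_{-j},\mu_k\}$ the most informative support inducing the chosen pure pair, so that \Cref{withoutloss} still applies and no optimal information structure is overlooked.
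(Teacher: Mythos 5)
Your LP framing is sound in outline and is essentially the paper's argument in different clothing: the paper's proof (Claims 1--3 in the Appendix) establishes vertex optimality by local improvement---if an incentive constraint is slack and the receiver's mixture has full support, shift weight toward the sender-preferred action---and then separately proves the two pruning inequalities that justify restricting attention to $S_1$ and $S_2$. But as written your proposal has two genuine gaps. First, the two steps you explicitly defer (``the hard part'') are precisely where all the content lies: showing that the only candidate-optimal vertices are the specific PP/PM/MP/MM constructions (with the \emph{correct} inequality of \eqref{compareslopes} binding---a vertex at $\gamma_{-j}=1$ with the \emph{left} inequality tight must be shown to be dominated by sliding along that edge), and the inequality $W^{MM}_{-j,k}\le\max\{W^{PM}_{-j,k},W^{MP}_{-j,k}\}$ in the extended sense. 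The paper proves the latter by deriving $\gamma_k\le\gamma_k'$ from the two indifference equations; your one-line justification (``shifting weight off the preferred corner lowers the payoff'') presupposes exactly the ordering of mixing weights that has to be established, and it must also cover the degenerate case $m_S(a_{-j})=m_S(a_{-j+1})=m_S(a_{k-1})=m_S(a_k)$, where the two IC lines coincide, the MM vertex is not unique, and the paper shows the optimum collapses to PM or MP.

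Second, your claim that ``the vertices of each program are exactly the PP, PM/MP and MM candidates'' is not literally true in the tie cases, and this is where your per-pair decomposition would fail. When the sender is indifferent between $a_{-j}$ and $a_{-j+1}$ at $\mu_{-j}$, the objective is constant in $\gamma_{-j}$ but the feasible set is not: putting weight on the action with the smaller marginal incentive relaxes the constraint at $\mu_k$, so the per-pair LP optimum for $(-j,k)$ can sit at $\gamma_{-j}=0$, i.e., at the pure action $a_{-j}\ne\overline a_{-j}$, which is \emph{not} one of the four candidates the algorithm checks for that pair. The paper disposes of this case (Case (2) of Claim 2, Case (2a) of Claim 3) not within the pair but by a cross-support comparison: if that configuration were strictly better, the more informative support $\{\mu_{-j-1},\mu_k\}$---a \emph{different} pair that the algorithm also examines---would do strictly better still. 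A purely per-pair vertex enumeration cannot deliver this; you need either that global comparison or a different tie-breaking argument, and your closing remark about ``most informative support'' gestures at it without supplying it.
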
 
	
   The algorithm above specifies a finite procedure to determine the sender's highest equilibrium payoff. In principle, for every pair $(-j,k)$, it is sufficient to search only four possibilities, namely IC-PP, IC-PM, IC-MP, and IC-MM as they determine the highest probabilities that the receiver can take the sender-preferred action without violating the sender's incentive compatibility. 
   Therefore, the steps of the procedure are bounded above by $|A|^2+4|A|$.
   Nevertheless, the procedure we describe guarantees a faster search without checking all possibilities across all $(-j,k)$. We prove the sufficiency of such simplification in the appendix.

	To find the highest equilibrium payoff across different prior beliefs
	as in Figure \ref{vs}, in principle, we would run Algorithm 1 for every prior belief $\mu_0$. However, it is unnecessary given the linearity of the problem. 
We only need to apply the algorithm again when the prior belief crosses a boundary belief, i.e., 
when the set of $(-j,k)$ satisfying Bayesian plausibility changes. 
	
	The construction behind this algorithm generalizes \cite{lipnowski2018cheap} under the case of a binary state with arbitrary preferences.  When the sender has state-independent preferences (transparent motives), the marginal incentive $m_S(\alpha)$ is equal to 0
	for every mixed action $\alpha$ (including pure action).  The incentive compatibility requirement (\ref{compareslopes}) in Lemma \ref{IC-sender} would then require $W'_{-j,k}(\cdot\ ;\alpha_{-j}, \alpha_k)=0$ for any action pair. 
	This implies that to find the sender's highest achievable payoff, we can search for the highest piecewise step functions such that every endpoint of a piece is inside the sender's value correspondence. This leads to 
	the quasiconcave envelope of $\overline{v}(\cdot)$.  In our setup, the fact that $m_S(\alpha_{-j})$ is in general different from $m_S(\alpha_{k})$ means that $W'_{-j,k}(\cdot\ ;\alpha_{-j}, \alpha_k)$ is not restricted to be equal to 0.  The sender in our setup can achieve a payoff greater than or less than the quasiconcave envelope of $\overline{v}(\cdot)$.

	The use of randomization to relax incentive compatibility constraints is emphasized in \cite{lipnowski2018cheap}. Nevertheless, double randomization is never optimal under transparent motives.  To see this, if the sender is recommending mixed actions $\alpha_{-j}$ and $\alpha_{k}$ at beliefs $\mu_{-j}$ and $\mu_k$, he could strictly raise his payoff by putting more weight on $\overline{a}_{-j}$ and $\overline{a}_{k}$ in these mixed actions, provided that the new pair of mixed actions are still incentive compatible.  Such deviation is always feasible as long as marginal incentives $m_S(\cdot)$ are equal for all actions.
	In our model with general preferences, such deviation may not be feasible, and therefore double randomization can remain a candidate as part of optimal information design.

    \section{Positive Information Transmission}
	\label{s:valuable}

In this section, we explore the existence of informative information transmission in our model. Algorithm 1 in Section \ref{s:algorithm} determines the sender's maximum payoff $W^*(\mu_0)$ under an optimal information structure. We say that there is 
\emph{positive information transmission} 
if $W^*(\mu_0) > \overline{v}(\mu_0)$ for some prior belief $\mu_0 \in [0,1]$. 
Otherwise, 
if $W^*(\mu_0)=\overline{v}(\mu_0)$ for all $\mu_0\in [0,1]$, then no information can be transmitted on path. 
	
 In \cite{kamenica2011bayesian} or \cite{lipnowski2018cheap}, information transmission is positive if and only if the sender's indirect value function $\overline{v}(\cdot)$ is not concave or not quasiconcave, respectively.\footnote{
		In a model with discrete action space, the sender's value function $\overline{v}(\cdot)$ is (generically) discontinuous at beliefs for which the receiver is indifferent between different actions.  Since a discontinuous function is not concave, 
        information transmission is always positive according to our definition when there is full commitment.}
	In contrast, in our model, whether 
 information transmission is positive depends less on the concavity properties of $\overline{v}(\cdot)$ 
 and more on the structure of marginal incentives $m_S(\cdot)$. 
 In other words, there is no straightforward way to characterize the necessary and sufficient condition for positive information transmission. In the following, we provide some economically meaningful sufficient conditions that will settle this question.

	We introduce the following concepts that relate to the conflict of interest between sender and receiver.
	
	\begin{definition}\label{opposite}
		\textnormal{Sender and receiver have \emph{opposite marginal incentives} if, for any $a',a'' \in  A$, 
			\begin{equation*}
				m_R(a') < m_R(a'') \iff m_S(a') > m_S(a'').
			\end{equation*}
			They have \emph{aligned marginal incentives} if, for any $a',a'' \in   A$,}
		\begin{equation*}
			m_R(a') < m_R(a'') \iff m_S(a') < m_S(a'').
		\end{equation*}
	\end{definition}
	
	The notion of opposite or aligned marginal incentives has little to do with comparing the level (or the ranking) of utilities attached to different actions at a given belief by the receiver and by the sender.  For example,  sender and receiver may have identical preference ranking over actions in $ A$ if they know the true state is, say, state 0; yet they may still have opposite marginal incentives according to Definition \ref{opposite}.
	
	Our definition is related to supermodularity or submodularity between action and state.  With a binary state space, it is without loss of generality to assume that the receiver preferences are supermodular in $(a,\theta)$ (because we order actions in such a way that higher actions are chosen at higher beliefs).  According to this convention, if $u_S(\cdot,\cdot)$ is
	strictly submodular, then sender and receiver have opposite marginal incentives.  If $u_S(\cdot,\cdot)$ is 
	strictly supermodular, they have aligned marginal incentives.

\subsection{Opposite marginal incentives}
    
	\begin{proposition}
		\label{oppoprop}
		If sender and receiver have opposite marginal incentives, then no information can be transmitted. 
	\end{proposition}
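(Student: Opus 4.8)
The plan is to show that under opposite marginal incentives the sender's incentive-compatibility constraints rule out every nondegenerate experiment, so that the only feasible outcome at any prior is the no-information payoff $\overline{v}(\mu_0)$. By \Cref{withoutloss} it suffices to consider binary random posteriors with support $\{\mu_{-j},\mu_k\}$, and a genuine mean-preserving spread of an interior prior $\mu_0$ requires $\mu_{-j}<\mu_0<\mu_k$ with $j,k\ge 1$. By \Cref{l:IC}, a design inducing $\alpha_{-j}\in\Delta A_R(\mu_{-j})$ and $\alpha_k\in\Delta A_R(\mu_k)$ is feasible only if
\[
m_S(\alpha_{-j}) \le W'_{-j,k}(\cdot\,;\alpha_{-j},\alpha_k) \le m_S(\alpha_k),
\]
which in particular necessitates $m_S(\alpha_{-j})\le m_S(\alpha_k)$. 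I would establish that this necessary condition is violated for every admissible pair $(\alpha_{-j},\alpha_k)$.

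First I would translate ``opposite marginal incentives'' into a monotonicity statement along the action ordering. Because the receiver's best response is increasing in the belief and, by the maintained assumption, each action in $\hat A$ is uniquely optimal on a nondegenerate interval, the lines $V_R(\cdot\,;a_n)$ form an upper envelope in which the active action changes from $a_n$ to $a_{n+1}$ as $\mu$ rises; this forces $m_R(a_n)$ to be strictly increasing in $n$. \Cref{opposite} then gives that $m_S(a_n)$ is strictly \emph{decreasing} in $n$.

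The decisive step is an index-separation argument. For $j,k\ge1$ we have $A_R(\mu_{-j})=\{a_{-j},a_{-j+1}\}$ with both indices $\le 0$, and $A_R(\mu_k)=\{a_k,a_{k+1}\}$ with both indices $\ge 1$. Since $m_S(\alpha)=\sum_a \alpha(a)\,m_S(a)$ is an average of the $m_S$-values of the actions in the support, I get $m_S(\alpha_{-j})\ge m_S(a_{-j+1})$ and $m_S(\alpha_k)\le m_S(a_k)$. Because $-j+1\le 0<1\le k$, strict monotonicity of $m_S$ yields $m_S(a_{-j+1})>m_S(a_k)$, hence $m_S(\alpha_{-j})>m_S(\alpha_k)$ for every admissible pair. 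This contradicts the necessary IC condition, so no nondegenerate experiment is incentive compatible, and therefore $W^*(\mu_0)=\overline{v}(\mu_0)$ at every prior, i.e., information design is not valuable.

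I expect the main obstacle to be verifying that the violation is strict and uniform across all four response profiles the algorithm considers---in particular the double-randomization (MM) case---rather than merely for pure actions; this is exactly where the linearity of $m_S(\cdot)$ in the mixing weights together with the strict index gap $-j+1<k$ does the work, since they guarantee the separation survives arbitrary mixtures on each side. A secondary point to handle cleanly is the knife-edge prior lying on a boundary belief, where one support point may coincide with $\mu_0$; there the spread degenerates on one side and the no-information payoff is again the best the sender can obtain.
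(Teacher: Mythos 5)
Your proof follows essentially the same route as the paper's: restrict to boundary beliefs via Proposition \ref{withoutloss}, note that Lemma \ref{l:IC} makes $m_S(\alpha_{-j})\le m_S(\alpha_k)$ necessary for incentive compatibility, and show that opposite marginal incentives force the reverse inequality for any pair of distinct responses. The substance is right, but your index-separation step contains a slip. You take $A_R(\mu_k)=\{a_k,a_{k+1}\}$; under the convention actually used throughout the paper (see the definition of $\overline{a}_k$, which breaks ties in favor of $a_{k-1}$, and the appendix proofs), the boundary belief $\mu_k$ is where the receiver switches from $a_{k-1}$ to $a_k$, so $A_R(\mu_k)=\{a_{k-1},a_k\}$. (The paper's own sentence ``the highest belief that $a_k$ is a best response'' is misleading here; if it were taken literally, the interior boundary between $I_0$ and $I_1$ would be missing from $B$, contradicting Proposition \ref{withoutloss}.) With the correct supports, the largest index on the left is $-j+1\le 0$ and the smallest on the right is $k-1\ge 0$, so your strict gap disappears exactly when $j=k=1$: both $A_R(\mu_{-1})$ and $A_R(\mu_1)$ contain $a_0$, and the pair $\alpha_{-1}=\alpha_1=a_0$ satisfies $m_S(\alpha_{-1})=m_S(\alpha_1)$ and \emph{is} incentive compatible. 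Your claim that the necessary condition is violated ``for every admissible pair'' is therefore false as stated.

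The gap is easily closed, and the paper closes it by restricting attention to pairs with $\alpha_{-j}\ne\alpha_k$: whenever the two mixed responses differ, at least one of your two bounds $m_S(\alpha_{-j})\ge m_S(a_{-j+1})$ and $m_S(\alpha_k)\le m_S(a_{k-1})$ is strict, so $m_S(\alpha_{-j})>m_S(\alpha_k)$ still follows from strict monotonicity of $m_S$ in the action index; and the one surviving incentive-compatible configuration, inducing the same action $a_0$ at both posteriors, is payoff-equivalent to no information by linearity of $V_S(\cdot\,;a_0)$, hence yields no more than $\overline{v}(\mu_0)$. Adding that one sentence makes your argument complete and aligns it with the paper's proof.
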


	\begin{proof}
		Consider an arbitrary prior belief $\mu_0 \in (0,1)$.  Take any pair of boundary beliefs such that $\mu_{-j} < \mu_0 < \mu_k$, and take any arbitrary receiver's best responses $\alpha_{-j} \in \Delta A_R(\mu_{-j})$ and $\alpha_k\in \Delta A_R(\mu_{k})$, with $\alpha_{-j} \ne \alpha_{k}$.  Our convention of ordering actions implies that $m_R(\alpha_{-j}) < m_R(\alpha_{k})$, and hence $m_S(\alpha_{-j}) > m_S(\alpha_{k})$.  By Lemma \ref{IC-sender}, this pair of actions $(\alpha_{-j},\alpha_k)$ cannot be incentive compatible.  This means that there does not exist an incentive compatible (binary) random posterior that can induce different actions at different interim beliefs. In other words, in any equilibrium, the receiver's action does not depend on the sender's messages. That is, no information can ever be transmitted.  
	\end{proof} 
	
	Proposition \ref{oppoprop} is valid regardless 
of whether sender and receiver have the same or different rankings over the set of actions in the two states.
As long as their marginal incentives are opposite, information  cannot be transmitted.\footnote{
This result is related to \cite{LinLiu} in their model of credible persuasion. With opposite marginal incentives, the  sender's overall net gain from swapping messages in both states is always positive. Therefore they cannot generate incentive compatibility in their model. 
} 
Figure \ref{oppoex} shows one such example.  The sender's value function $\overline{v}(\cdot)$ in this figure is obviously not concave.  Nevertheless, because the slope in each separate segment of $\overline{v}(\cdot)$ is decreasing, Proposition \ref{oppoprop} implies that, despite 
the sender's power to commit to an information structure,
this cannot improve his payoff from a babbling equilibrium for any prior belief.

	\begin{figure}[t]
		\centering
		\includegraphics[scale=0.6]{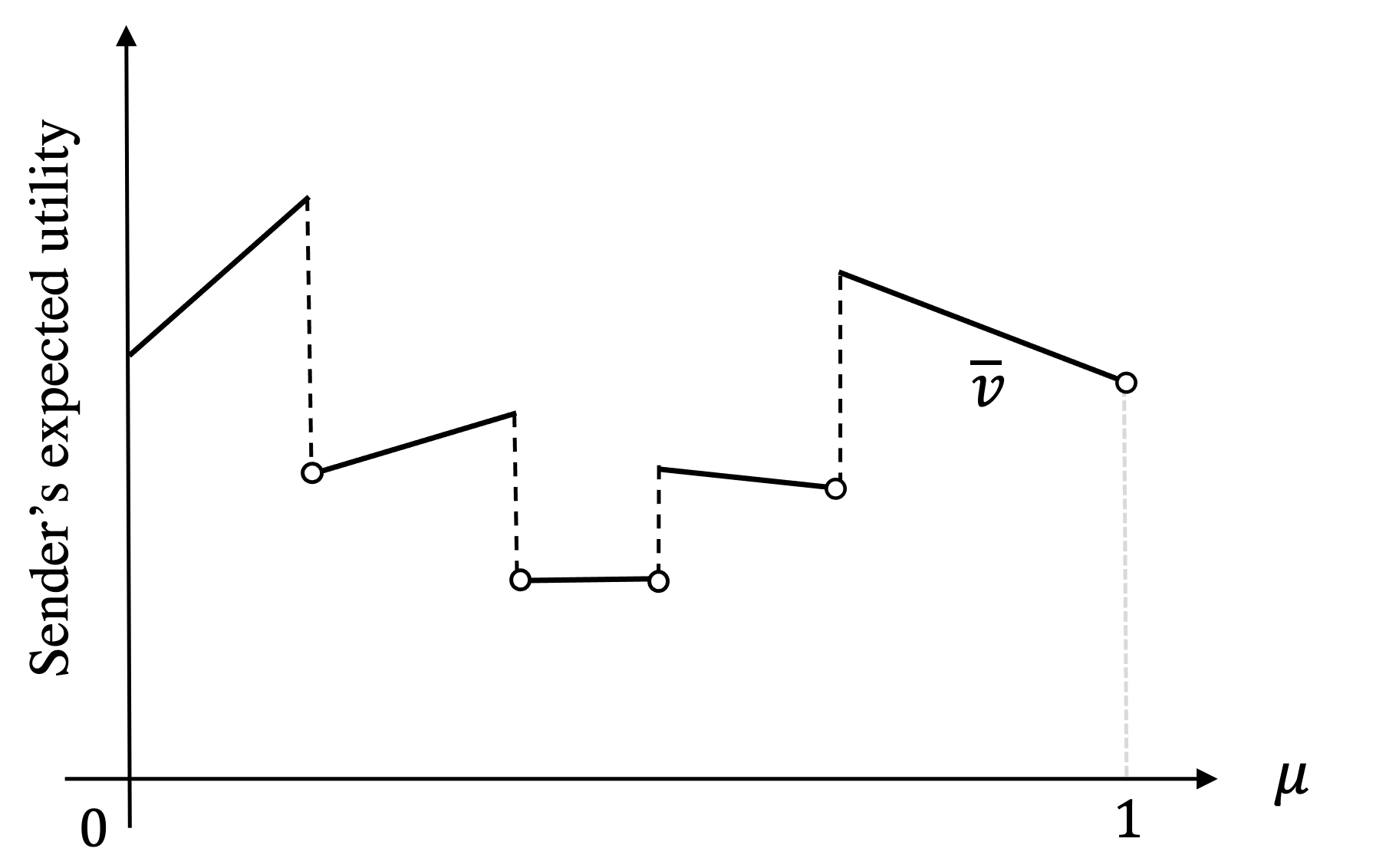}
		\caption{No information transmission.}
		\label{oppoex}
	\end{figure}
	
\subsection{Aligned marginal incentives}	
	Now, we turn to the case where sender and receiver have aligned marginal incentives.
	
	\begin{definition}
		\label{def2}
		\textnormal{An action $a' \in  A$\  \emph{blocks} $a'' \in  A$ if 
			\begin{equation*}
				u_S(a',\mu'') \ge u_S(a'',\mu'')\quad \text{for all } \mu'' \in \{\mu : a'' \in A_R(\mu)\}.
			\end{equation*}
			Action $a' \in  A$ is an \emph{all-blocker} if it blocks all actions in $ A$.
		}
	\end{definition}
	
	According to Definition \ref{def2}, action $a' \in   A$ is 
	an all-blocker if and only if
	\begin{equation*}
		u_S(a',\mu) \ge \overline{v}(\mu)\quad \text{for all } \mu \in [0,1].
	\end{equation*}
	If $a'$ does not block $a''$ and $a''$ does not block $a'$, then the incentive compatibility constraints (\ref{IC-sender}) can be satisfied and there is an IC-PP information structure at some initial belief that will induce these two actions.  

	\begin{definition}\label{def3}
		\textnormal{An action $a' \in   A$ is \emph{worst} if, for all $a'' \in  A$,
			\begin{equation*}
               u_S(a',\mu) \le u_S(a'',\mu)\quad \text{for all } \mu \in [0,1] .
			\end{equation*}
			An action $a' \in   A$ is \emph{best} if, for all $a'' \in  A$,
			\begin{equation*}
                u_S(a',\mu) \ge u_S(a'',\mu)\quad \text{for all } \mu \in [0,1] .
			\end{equation*}
		}
	\end{definition}

	If action $a'$ is worst, the sender prefers any action in $A$ to this action at any belief $\mu$.  It implies that any other action in $A$ blocks $a'$, and $a'$ does not block any other action.  The converse is not true.  Similarly, a best action is necessarily an all-blocker, but an all-blocker need not be best.

	\begin{proposition}
		\label{dominate}
		If the sender and the receiver have aligned marginal incentives, then there is positive information transmission if either of the following holds:
		\begin{itemize}
			\item[(a)] No action is an all-blocker for the sender.
			\item[(b)] No action is worst for the sender. 
		\end{itemize}  
	\end{proposition}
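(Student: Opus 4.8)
The plan is to reduce both parts of the proposition to a single sufficient condition and then verify that condition under (a) and under (b). Call $a',a''\in\hat A$ a \emph{non-dominating pair} if neither dominates the other in the sense of Definition \ref{def2}; I claim first that the existence of any such pair already makes information design valuable. Order the pair so that $m_S(a')<m_S(a'')$, which by aligned marginal incentives coincides with the receiver's ordering, so $a'$ is a best response at lower beliefs than $a''$. Since neither action dominates the other, the affine maps $V_S(\cdot;a')$ and $V_S(\cdot;a'')$ cross at some $\mu^{\times}$ interior to the union of their best-response intervals --- exactly the observation recorded after Definition \ref{def2}, which furnishes an IC-PP structure inducing $a'$ and $a''$. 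Concretely, choose boundary beliefs $\mu'<\mu^{\times}<\mu''$ with $a'\in A_R(\mu')$, $a''\in A_R(\mu'')$, both interior to their intervals; the chord joining $(\mu',V_S(\mu';a'))$ and $(\mu'',V_S(\mu'';a''))$ straddles the crossing, so its slope lies in $[m_S(a'),m_S(a'')]$ and the incentive condition \eqref{compareslopes} of Lemma \ref{l:IC} holds. To get a \emph{strict} improvement, note that on the interior of $a'$'s interval $\overline v(\cdot)=V_S(\cdot;a')$ has slope $m_S(a')$, whereas the chord has slope strictly larger (strict because $\mu''>\mu^{\times}$) and agrees with $\overline v$ at $\mu'$; hence for a prior $\mu_0$ slightly above $\mu'$ inside $a'$'s interval, $W_{\cdot}(\mu_0;a',a'')>\overline v(\mu_0)$. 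This is the engine of the argument, and it is precisely where aligned marginal incentives enter: they put the receiver's ordering and the sender's slope ordering on the same side, so a chord straddling the crossing is automatically incentive compatible.

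For part (a) I argue the contrapositive: if no non-dominating pair exists, then dominance is a \emph{complete} relation on $\hat A$, and I will produce a dominant action. The structural fact I rely on is that dominance is acyclic. For two actions this is immediate, and a putative three-cycle $a\,\text{dom}\,b\,\text{dom}\,c\,\text{dom}\,a$ forces, after rewriting each ``dominates'' as a comparison of the corresponding affine maps at the endpoints of the relevant best-response intervals, a contradictory pair of inequalities among the intercepts $u_S(\cdot,0)$. A complete acyclic relation on a finite set is a total order, whose maximal element $a^{*}$ dominates every action in $\hat A$; by the reformulation following Definition \ref{def2} this means $V_S(\cdot;a^{*})\ge\overline v(\cdot)$ everywhere, i.e. $a^{*}$ is dominant, contradicting (a). Thus a non-dominating pair exists and the engine applies.

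For part (b) the approach is parallel but organized around the lower envelope. Let $\underline a^{0}$ and $\underline a^{1}$ minimize $u_S(\cdot,0)$ and $u_S(\cdot,1)$ over $\hat A$. If they coincide, that action is worst in the sense of Definition \ref{def3}; so under (b) they differ, and comparing endpoint values gives $m_S(\underline a^{1})\le m_S(\underline a^{0})$, so their value lines cross in the interior of $[0,1]$, with $\underline a^{0}$ lower in state $0$ and $\underline a^{1}$ lower in state $1$. From this trade-off configuration the goal is to extract a genuinely non-dominating pair of receiver-feasible actions, after which the engine closes the argument as before.

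The step I expect to be the main obstacle is this last one in part (b): passing from ``the two value lines cross in the interior'' to ``neither action dominates the other'' in the sense of Definition \ref{def2}. Dominance is defined relative to the receiver's best-response intervals, so an interior crossing need not certify non-domination if the crossing falls outside the union of the two actions' intervals. I plan to close the gap with the same acyclicity lemma used in part (a): were every pair comparable, the induced total order, together with a case analysis locating each crossing relative to the best-response intervals, would force the value lines to be nested and hence produce an action lying weakly below all others at both states, i.e. a worst action, contradicting (b). Verifying that completeness really forces this \emph{global} nesting --- rather than mere domination on each action's own interval --- is the delicate point, and is where the interplay between the sender's value lines and the receiver's interval structure must be handled with care.
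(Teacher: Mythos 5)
Your part (a) is essentially the paper's own argument: if every pair of actions in $\hat A$ were comparable under dominance, the relation would be a total order on the finite set $\hat A$ with a dominant maximal element, so under (a) a non-dominating pair exists and is strictly IC-PP; the paper proves the transitivity/acyclicity step by a case analysis on the relative positions of the best-response intervals rather than via intercept inequalities, but the substance is the same and your version is fine.

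Part (b) has a genuine gap, located exactly where you flag the ``delicate point.'' Your plan is to show that the absence of a worst action forces the existence of a non-dominating pair, so that the IC-PP engine applies; that reduction is false. Take three actions with best-response intervals $[0,\nicefrac{1}{3}]$, $[\nicefrac{1}{3},\nicefrac{2}{3}]$, $[\nicefrac{2}{3},1]$ and sender payoffs $(u_S(a,0),u_S(a,1))$ equal to $(5,0)$, $(0,2)$, $(10,20)$ respectively. Marginal incentives are $-5<2<10$, hence aligned; $a_{-1}$ is least preferred only in state $1$ and $a_0$ only in state $0$, so no action is worst. Yet $a_1$ dominates both other actions, and $a_{-1}$ dominates $a_0$ because $5-5\mu\ge 2\mu$ on all of $[\nicefrac{1}{3},\nicefrac{2}{3}]$, so the dominance relation is complete and no non-dominating pair exists: one checks directly that no pair of distinct pure actions satisfies both inequalities in (\ref{IC-sender}). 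Completeness does not force a worst action, because the minimal element of the total order ($a_0$ here) is only pinned below the other value lines \emph{on its own best-response interval}, not globally; so your proposed fix (completeness implies global nesting implies a worst action) cannot work. Information design is nonetheless valuable in this example, but only through randomization: the mixture $\alpha=\tfrac12 a_0+\tfrac12 a_1$ at belief $\nicefrac{2}{3}$ makes the sender at belief $0$ indifferent with $a_{-1}$, and $m_S(\alpha)=6>m_S(a_{-1})=-5$ makes $\{0,\nicefrac{2}{3}\}$ IC-PM, strictly improving on $\overline v$ for priors in $(0,\nicefrac{1}{3})$. This is exactly how the paper's appendix proof of part (b) proceeds---it works with the two state-wise least-preferred actions and, in the cases where one dominates the other, constructs IC-PM structures---and it is why the text emphasizes that part (b) ``involves finding IC-PM pairs.'' Without bringing one-sided randomization into the argument, part (b) cannot be closed.
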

	
	\begin{proof}[Proof of part (a)] 
		For any pair of distinct actions  $a', a'' \in A$, there are four mutually exclusive possibilities: (1) $a'$ blocks $a''$ and $a''$ does not block $a'$; (2) $a''$ blocks $a'$ and $a'$ does not block $a''$; (3) neither action blocks the other; or (4) each action blocks the other.  Case (4) is impossible under aligned marginal incentives. 
		We claim that at least one pair of actions in $ A$ must fall under case (3).  Suppose this claim is false, so that case (1) and case (2) mutually exhaust all possibilities on  $A$.  Then the binary relation ``block'' on  $A$ would be reflexive, complete, and antisymmetric.  In the next paragraph, we show that it would also be transitive, and therefore ``block'' would be a total order on the finite set $ A$, which would further imply that there is a maximal action  on  $ A$, i.e., an all-blocker action exists in $ A$.  This is a contradiction, and therefore we conclude that at least one pair of actions, $a'$ and $a''$, must fall under case (3).  This pair of actions is strictly IC-PP because the complement of Definition \ref{def2} imposes strict inequality.
		Thus, an information structure that induces these two actions will improve the sender's payoff when, for example, the prior belief is in the interior of $\{\mu : a' \in A_R(\mu)\}$. 
		
		To see 
		why transitivity holds under the premise that cases (1) and (2) mutually exhaust all possibilities on $ A$,
		consider $|A|\ge 3$. (If $|  A|=2$, it is immediate that ``block'' is a total order as the two actions are comparable.)
		Suppose $a$ blocks $b$ and $b$ blocks $c$, and let $\mu_a$, $\mu_b$ and $\mu_c$ be three distinct beliefs at which these three actions are respective best responses.  (a) Suppose $\mu_a < \mu_b$.  (a)(i) If $\mu_c < \mu_b$, then $a$ blocks $b$ implies $u_S(a,\mu_b) \ge u_S(b,\mu_b)$. 
		Aligned marginal incentives (supermodularity  of $u_S(\cdot,\cdot)$) then imply
		$u_S(a,\mu_c) \ge u_S(b,\mu_c) \ge u_S(c,\mu_c)$, where the last inequality follows because $b$ blocks $c$.  Since this argument holds for any $\mu_c < \mu_b$, we conclude that $a$ blocks $c$.  (a)(ii) If $\mu_c > \mu_b$, then $b$ blocks $c$ implies $u_S(b,\mu_c) \ge u_S(c,\mu_c)$.  
		Aligned marginal incentives then 
		imply
		that there exists $\mu_a \in \{\mu: a \in A_R(\mu)\}$ such that $u_S(a,\mu_a) > u_S(b,\mu_a) \ge u_S(c,\mu_a)$, where the first inequality follows because $b$ does not 
		block $a$.  This shows that $c$ does not block $a$.  Since cases (1) and (2) are mutually exhaustive possibilities under the supposition that no pair of actions falls under case (3), $a$ blocks $c$ whenever $c$ does not block $a$.  The analysis of (b), where $\mu_a > \mu_b$, is symmetric.  In both cases, 
        if $a$ blocks $b$ and $b$ blocks $c$, then $a$ blocks $c$.
	\end{proof}

	The proof of part (b) of Proposition \ref{dominate} 
	involves finding IC-PM pairs and is more tedious; we leave it to  the Appendix.  Figure \ref{aligned} provides two examples to illustrate this proposition.
	The left panel of Figure \ref{aligned} shows a case where there is no all-blocker action.
	By proposition \ref{dominate}(a), there must exist a pair of distinct actions such that neither action blocks the other action. In the figure, the random posterior $\{0,\mu_1\}$ is IC-PP for $a_{-1}$ and $a_0$ and it improves the sender's payoff at the prior belief $\mu_0$.
	
	\begin{figure}[t]
		\centering
		\includegraphics[width=15cm]{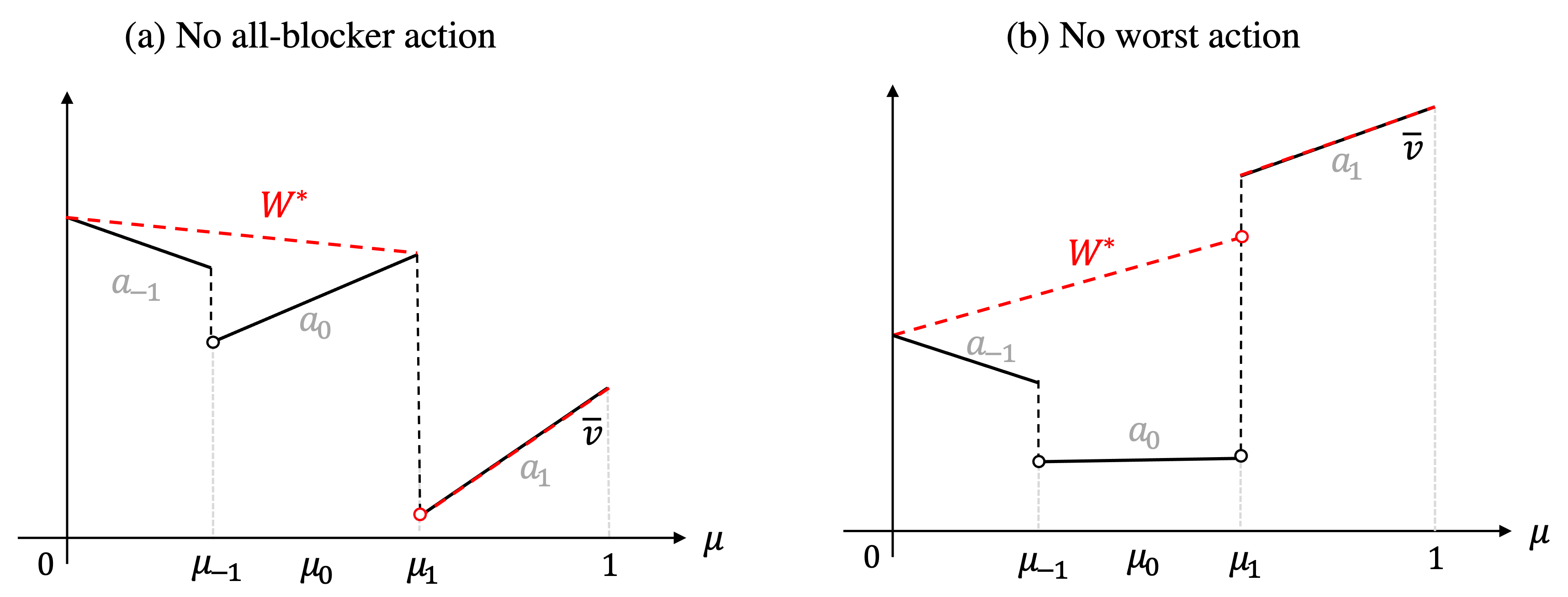}
		\caption{Positive information transmission.}
		\label{aligned}
	\end{figure}
	
	Next, consider the right panel, where no worst action exists.  
	In this example, the sender prefers $a_1$ to $a_{-1}$ to $a_0$ at belief 0. Therefore, we can find a randomization 
	$\alpha_1^{PM} \in \Delta A_R(\mu_1)$ (shown by the red dot)
	such that the sender is indifferent between $a_{-1}$ and $\alpha_1^{PM}$ at belief 0. Moreover, from aligned marginal incentives, 
	we have $ m_S(a_{-1})<m_S(\alpha_1^{PM})$, 
	implying that the sender must strictly prefer $\alpha_1^{PM}$ to $a_{-1}$ at belief $\mu_1$.
	Hence, the random posterior $\{0,\mu_1\}$ is IC-PM and induces $a_{-1}$ and $\alpha_1^{PM}$ at the two beliefs.  This random posterior improves the sender's payoff when the prior belief is $\mu_0$.

Conditions (a) and (b) in Proposition \ref{dominate} are each sufficient for positive information transmission, but neither of them is necessary. For example, action $a_1$ in Figure \ref{aligned}(a) is a worst action, and action $a_1$ in Figure \ref{aligned}(b)  is an all-blocker action.  That is, there can still be informative information transmission when an all-blocker action or a worst action exists for the sender.

	Proposition \ref{oppoprop} and \ref{dominate} suggest
	that the alignment of marginal incentives between sender and receiver is important for generating incentive compatibility. Given aligned marginal incentives, the alignment of preference ranking over actions, on the other hand,  is less important.  
We now elaborate on it. 
 
	\begin{definition}\label{def4}
		\textnormal{
			Sender's preferences are \textit{ordinally state-independent} if, for every $a',a'' \in   A$,
			\begin{equation*}
				u_S(a',1) > u_S(a'',1) \iff u_S(a',0) > u_S(a'',0).
			\end{equation*}
		}
	\end{definition} 
	
	This definition implies that the sender's ranking over actions is the same at any $\mu \in [0,1]$.  It is a generalization of transparent motives (i.e., state-independent preferences) because this class of preferences does not require $m_S(a)$ to be equal to 0 for all $a$.
	
	Given the labeling we adopt on the action space, the receiver's ranking over action in state 0 is decreasing in the index of actions,
	and is increasing in the index of actions when the state is $1$.
	A sender with ordinally state-independent preferences can have arbitrary ranking over actions, 
	though with aligned marginal incentives, his marginal incentive 
is higher for a higher action. 
	
	\begin{proposition}
		\label{state-inde}
		Suppose sender and receiver have aligned marginal incentives, and the sender's preferences are ordinally state-independent. Information transmission is positive if and only if the sender's ranking of actions is non-monotone in the index of actions. 
	\end{proposition}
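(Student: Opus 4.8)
The plan is to prove the two directions separately, exploiting throughout two consequences of the hypotheses. First, under aligned marginal incentives the slopes $m_S(a_n)$ are strictly increasing in the index $n$. Second, under effectively state-independent preferences the affine functions $V_S(\cdot;a_n)$ do not cross in $(0,1)$, so they are totally ordered on $[0,1]$ by the sender's preference; hence ``dominates'' (Definition \ref{def2}) is a total order that coincides with the preference ranking. A key consequence I would record first is that no IC-PP design can separate two \emph{distinct} actions: if $a'$ dominates $a''$, then whichever posterior is assigned the dominated action $a''$, the sender there strictly prefers $a'$ (offered at the other posterior), and the manipulation in Lemma \ref{l:IC} shows the corresponding inequality in \eqref{compareslopes} fails. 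Thus any strict improvement must come from randomization, and by Proposition \ref{withoutloss} and Theorem \ref{theorem} it suffices to analyze the IC-PM, IC-MP, and IC-MM candidates Algorithm~1 generates on supports $\{\mu_{-j},\mu_k\}\subseteq B$; receiver obedience is automatic since I only ever mix over $A_R(\cdot)$ at boundary beliefs.

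For the ``if'' direction I would first reduce non-monotonicity to a local statement: a ranking that is neither increasing nor decreasing in $n$ must contain a consecutive triple $a_{p-1},a_p,a_{p+1}$ forming either a \emph{valley} ($a_{p-1}\succ a_p\prec a_{p+1}$) or a \emph{peak} ($a_{p-1}\prec a_p\succ a_{p+1}$), since otherwise all local directions would align into global monotonicity. In the valley case (say $a_{p+1}\succeq a_{p-1}$; the other orientation is symmetric via MP) I induce pure $a_{p-1}$ at $\mu_p$ and the mixture $\alpha=\gamma a_{p+1}+(1-\gamma)a_p$ at $\mu_{p+1}$, with $\gamma$ fixed by indifference at $\mu_p$. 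The betweenness $a_p\preceq a_{p-1}\preceq a_{p+1}$ gives $\gamma\in[0,1]$; aligned incentives give $m_S(\alpha)>m_S(a_{p-1})$, so the pair is IC-PM; and because the payoff line leaves $\bigl(\mu_p,V_S(\mu_p;a_{p-1})\bigr)$ above $V_S(\mu_p;a_p)$ with slope exceeding $m_S(a_p)$, it strictly dominates $\overline v=V_S(\cdot;a_p)$ on all of $I_p$, so any prior in $I_p$ witnesses valuableness.

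The peak case is the novel step. Although the locally most-preferred $a_p$ is dominant and so cannot have its probability raised, I mix it \emph{downward}: taking (say $a_{p+1}\succeq a_{p-1}$) the mixture $\alpha=\gamma a_p+(1-\gamma)a_{p-1}$ at $\mu_p$ and pure $a_{p+1}$ at the upper boundary $\mu_k$ of $I_{p+1}$, with $\gamma$ set by indifference at $\mu_k$. The betweenness $a_{p-1}\preceq a_{p+1}\preceq a_p$ again yields $\gamma\in[0,1]$, while $m_S(\alpha)<m_S(a_p)<m_S(a_{p+1})$ makes the pair IC-MP. Since the payoff line then coincides with the bundle line $\mathbb{E}_\alpha[V_S(\cdot)]$, meets $V_S(\cdot;a_{p+1})$ at $\mu_k$, and is flatter, it lies strictly above $V_S(\cdot;a_{p+1})=\overline v$ throughout $I_{p+1}$, so any prior in $I_{p+1}$ witnesses valuableness. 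For the ``only if'' direction I prove the contrapositive: if the ranking is increasing (decreasing is symmetric), then preference equals index order, so higher-index actions are both more preferred and steeper, the favorite is $a_K$ and the worst $a_{-J}$, and I check that every Algorithm~1 candidate on every support collapses to $\overline v(\mu_0)$. The preferred PP pair assigns a dominated action to the lower belief, so by the opening observation no IC-PP exists; PM requires $a_{k-1}\preceq\overline a_{-j}\preceq a_k$, but index-monotonicity forces $\overline a_{-j}\preceq a_{k-1}$, solving out to $\gamma\le0$; MP requires the high pure action to lie in value between two lower-indexed actions, impossible, giving $\gamma\ge1$; and MM forces the two bundle lines to coincide, hence equal slopes, precluded by $m_S(\alpha_{-j})<m_S(\alpha_k)$ except at the no-information point.

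I expect the peak construction to be the crux, precisely because it runs against the intuition that a dominant action's probability cannot be increased: valuableness is recovered by \emph{degrading} that action, mixing it with a strictly less-preferred neighbor to flatten the induced slope just enough to satisfy the obedient receiver's indifference and the sender's incentive constraint at once. The remaining work is bookkeeping that I would want to carry out carefully: confirming the boundaries $\mu_p,\mu_{p+1},\mu_k$ are distinct and that the improving prior lies strictly between the two induced posteriors (guaranteed since each $a\in\hat A$ is uniquely optimal on a non-degenerate interval), and verifying the degeneracy computations in the ``only if'' direction on \emph{every} support, including adjacent ones where the shared interval makes the indifference weight vanish exactly.
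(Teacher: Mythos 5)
Your proposal is correct and follows essentially the same route as the paper: reduce non-monotonicity to a local valley or peak among three consecutive actions, then build an IC-PM or IC-MP design in which the middle-ranked neighbor guarantees a valid mixing weight and aligned marginal incentives deliver the slope inequality of Lemma \ref{l:IC}. The differences are cosmetic---your valley support $\{\underline{I}_p,\overline{I}_p\}$ is slightly less informative than the paper's $\{\underline{I}_{p-1},\overline{I}_p\}$ but still witnesses $W^*(\mu_0)>\overline{v}(\mu_0)$, and your ``only if'' direction spells out the collapse of each Algorithm 1 candidate where the paper simply asserts that no informative incentive-compatible outcome exists.
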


	Ordinal state-independence implies that there does not exist an IC-PP 
random posterior, because for any two distinct pure actions $a'\ne a''$, either $u_S(a',\mu) > u_S(a'',\mu)$ for all $\mu \in [0,1]$, or the opposite (strict) inequality holds for all $\mu \in [0,1]$.    Therefore, incentive compatibility  necessarily requires the receiver's randomization between a higher-ranking action and a lower-ranking action. Provided marginal incentives are aligned, we show the existence of an 
IC-PM or IC-MP 
random posterior except in the special case where the sender's ranking over actions is identical to the receiver's ranking in one of the states. We provide the complete proof in the Appendix.

\subsection{Neither opposite nor aligned marginal incentives}
The definitions of opposite and aligned marginal incentives require the same ordering of marginal incentives for all actions, i.e., with opposite (aligned) marginal incentives, the slope of the sender's piecewise indirect value function is decreasing (increasing). Nevertheless, having a monotone ordering of marginal incentives for all actions is restrictive and is not necessary for generating credibility. 
The role of the receiver's randomization is to smooth the sender's marginal incentives and thereby generate the correct ordering we need to impose for incentive compatibility. 
We illustrate this with an example. 

\textbf{\emph{An example.}}  
A policy maker is seeking public support to change the status quo from no carbon tax ($a_0$) to a policy of either a small carbon tax ($a_1$) or high carbon tax ($a_2$).
There are two states: low risk of climate change (state 0) and high risk of climate change (state 1). 
In state 0, the public's preference is $a_0\succ a_1\succ a_2$. 
In state 1, the public's preferences are reversed: $a_2\succ a_1\succ a_0$. At the prior belief $\mu_0$, the public prefers to remain at the status quo. The policy maker's preference is $a_1\succ a_0\succ a_2$ in state 0, and is $a_1\succ a_2\succ a_0$ in state 1, indicating 
that $a_1$ is a best action, and that the policy maker and the public have aligned interests for extreme actions $a_0$ and $a_2$, which implies  $m_S(a_0)<m_S(a_2)$. 
The low carbon tax is a best action because the policy maker finds it most politically expedient, but he also considers that small taxation has a limited impact on greenhouse gas emissions and may foster a false sense of accomplishment, potentially postponing the public's pursuit of alternative measures, which would be especially harmful if climate risk is high.  
We assume that $m_S(a_1)<m_S(a_0)$.
Thus, the policy maker and the public exhibit opposite marginal incentives for $a_0$ and $a_1$ but aligned marginal incentives for $a_0$ and $a_2$. 
The policy maker can hire a research group to conduct specific research to learn about the state. However, the outcomes of this research are unverifiable and, therefore, considered as cheap talk messages to the public.

	\begin{figure}[t]
		\centering
		\includegraphics[width=7cm]{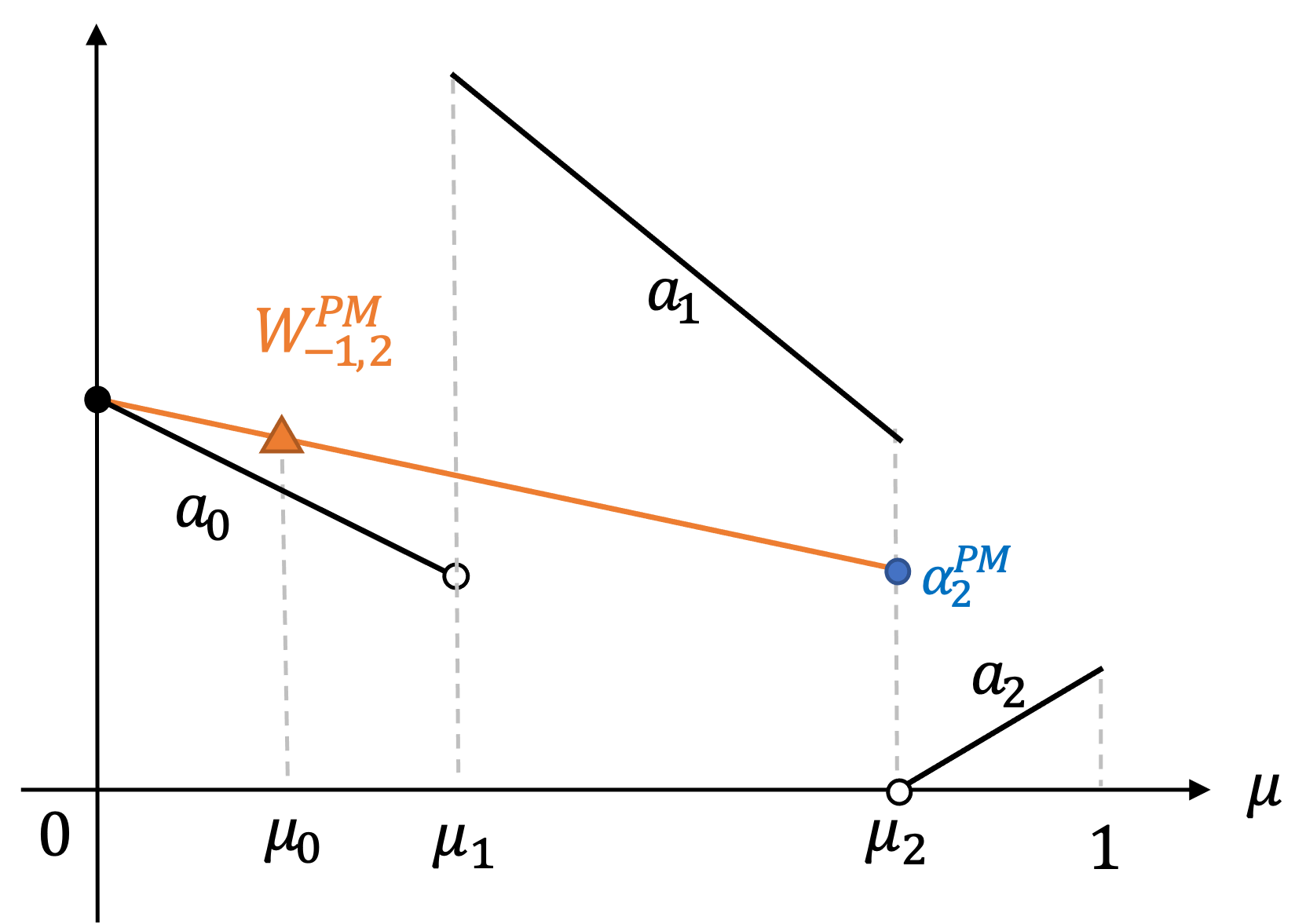}
		\caption{Neither opposite nor aligned marginal incentives. \small Opposite marginal incentives between $a_0$ and $a_1$; aligned marginal incentives between $a_0$ and $a_2$. }
		\label{climate}
	\end{figure}

Figure \ref{climate} describes the policy maker's indirect value function in this scenario. He achieves the highest equilibrium payoff by designing the optimal random posterior $\{0,\mu_2\}$, and the receiver randomizes between the two taxation magnitudes $a_1$ and $a_2$ upon receiving message $\mu_2$. This randomization between the sender's most-preferred action $a_1$ and the least-preferred action $a_2$ at belief $0$ helps mitigate the sender's incentive to misreport a higher belief $\mu_2$ when his true private belief is $0$. In addition, this randomization smooths the sender's marginal incentive, and thereby restores a proper ordering of
marginal incentives (i.e., $m_S(a_0)<W'_{-1,2}(\cdot;a_0,\alpha^{PM}_2)=m_S(\alpha^{PM}_2)$) to satisfy the incentive compatibility at belief $\mu_2$.
Overall, with state-dependent preferences, the receiver's randomization has two roles: one is to average the utilities, and the other is to average the marginal incentives.

\section{One-Sided Common Interest}
\label{s:commoninterests}

	In many situations, the sender and the receiver may have common interests in one state but conflicting interests in another state. By this, we mean that the receiver's optimal action in one state is also the sender's most-preferred action in that state (their rankings over other actions in that state can be different).

	\begin{definition}\label{def5}
		\textnormal{
			Sender and receiver have \emph{common interest in one state} if, for $\theta=0$ or $\theta=1$,
			\begin{equation*}
				u_S(a,\theta) \ge u_S(a',\theta) \quad \text{for all $a\in A_R(\theta)$ and all $a' \in  A$}.
			\end{equation*}
		}
	\end{definition} 
	
	With common-interest in one state, we can disentangle the sender's trade-off between acquiring more information and alleviating the conflicts of interest. On the information side, the sender may want to reveal more information about the common interest state---instead of pooling the common-interest state with the other state---so that he can make the correct recommendation more often. On the side of conflicts of interest, since sender and receiver prefer the same action under the common-interest state, revealing it can further increase the sender's ex-post payoff in that  common-interest state and thereby 
on average increase the sender's ex-ante payoff. 
The proposition below confirms this intuition. It shows that under some mild conditions, the optimal information structure generates a conclusive signal on the common-interest state.

	\begin{proposition}
		\label{commoninterest}
		Let the common-interest state be state 0, and let the optimal action corresponding to that state be $a_{-J}$. 
  If there exists an action $a_k\in \{A_R(\mu): \mu\in(\mu_0,1]\}$ such that $a_{-J}$ does not block $a_k$, then  $0\in \operatorname{supp} P^*$.
	\end{proposition}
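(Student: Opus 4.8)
My plan rests on two consequences of common interest in state $0$. First, $\overline{v}(0)=V_S(0;a_{-J})=\max_{a}V_S(0;a)$, so on the lowest interval $[0,\mu_{-J}]$ we have $\overline{v}(\mu)=V_S(\mu;a_{-J})$, a line of slope $m_S(a_{-J})$. Second, the truth-telling constraint at the fully revealing posterior $0$ is automatically slack: a sender who knows the state is $0$ is already getting his favorite action. Writing $s(\mu):=(\overline{v}(\mu)-\overline{v}(0))/\mu$ for the slope of the chord from $(0,\overline{v}(0))$ to $(\mu,\overline{v}(\mu))$, the key preliminary (used in both parts) is that whenever $a_{-J}$ fails to dominate some $a_k$, there is $\mu'\in\{\mu:a_k\in A_R(\mu)\}$ with $V_S(\mu';a_k)>V_S(\mu';a_{-J})$; combined with $V_S(0;a_k)\le\overline{v}(0)$ this forces $m_S(a_k)>m_S(a_{-J})$, so $V_S(\cdot;a_k)$ crosses $V_S(\cdot;a_{-J})$ once from below and $\overline{v}(\mu_k)\ge V_S(\mu_k;a_k)>V_S(\mu_k;a_{-J})$ at $\mu_k:=\max\{\mu:a_k\in A_R(\mu)\}$, i.e.\ $s(\mu_k)>m_S(a_{-J})$. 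For part (a) I would then note that the random posterior $\{0,\mu_k\}$ inducing $(a_{-J},\overline{a}_k)$ is always IC-PP under common interest (the lower constraint $m_S(a_{-J})\le s(\mu_k)$ holds, and the upper one $s(\mu_k)\le m_S(\overline{a}_k)$ follows from $V_S(0;\overline{a}_k)\le\overline{v}(0)$), and its payoff is the chord of slope $s(\mu_k)>m_S(a_{-J})$ through $(0,\overline{v}(0))$; since $\overline{v}$ itself has slope $m_S(a_{-J})$ on $(0,\mu_{-J})$, this chord strictly exceeds $\overline{v}$ for every prior in $(0,\mu_{-J})$, so information design is valuable.

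For part (b), let $a_k,\mu_k>\mu_0$ be as in the hypothesis, so that the IC-PP structure $R^\ast=\{0,\mu_k\}$ inducing $(a_{-J},\overline{a}_k)$ yields payoff $\overline{v}(0)+\mu_0 s(\mu_k)>V_S(\mu_0;a_{-J})$ at $\mu_0$. I would take any optimal structure with support $\{\mu_L,\mu_R\}$, $\mu_L<\mu_0<\mu_R$ (using the size-two representation from Lemma~\ref{lem1} and Proposition~\ref{withoutloss}, which also covers the default-inducing structure), inducing $(\alpha_L,\alpha_R)$ with values $y_L,y_R$; if $\mu_L=0$ we are done. The crux is a dichotomy on whether $a_{-J}$ beats the high action at $\mu_R$. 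Suppose $y_R<V_S(\mu_R;a_{-J})$. The sender's interim constraint at $\mu_R$ gives $y_R\ge\mathbb{E}_{\alpha_L}[V_S(\mu_R;a)]$, hence $\mathbb{E}_{\alpha_L}[V_S(\mu_R;a)]<V_S(\mu_R;a_{-J})$; together with $\mathbb{E}_{\alpha_L}[V_S(0;a)]\le\overline{v}(0)=V_S(0;a_{-J})$ and linearity, the whole line $\mathbb{E}_{\alpha_L}[V_S(\cdot;a)]$ lies weakly below $V_S(\cdot;a_{-J})$ on $[0,\mu_R]$, so both endpoints of the optimal chord are dominated by $V_S(\cdot;a_{-J})$ and the chord itself lies below it on $[\mu_L,\mu_R]$. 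Then the optimal payoff at $\mu_0$ is at most $V_S(\mu_0;a_{-J})$, which is strictly below the payoff of $R^\ast$, contradicting optimality; hence $y_R\ge V_S(\mu_R;a_{-J})$.

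In that remaining case I would replace the support by $\{0,\mu_R\}$ inducing $(a_{-J},\alpha_R)$. Its lower constraint $W'_{\mathrm{new}}\le m_S(\alpha_R)$ holds directly by common interest (since $\mathbb{E}_{\alpha_R}[V_S(0;a)]\le\overline{v}(0)$), and its upper constraint $m_S(a_{-J})\le W'_{\mathrm{new}}$ is exactly $y_R\ge V_S(\mu_R;a_{-J})$, which now holds; so the new structure is incentive compatible. For the payoff, note that extrapolating the old chord (slope $W'_{\mathrm{old}}\ge m_S(\alpha_L)$) back to belief $0$ gives a value $e_0\le\mathbb{E}_{\alpha_L}[V_S(0;a)]\le\overline{v}(0)$; since both chords pass through $(\mu_R,y_R)$, the inequality $e_0\le\overline{v}(0)$ is equivalent to $W'_{\mathrm{new}}\le W'_{\mathrm{old}}$, i.e.\ the new chord is flatter and therefore weakly higher at every point left of $\mu_R$. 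Thus the reveal-state-$0$ structure is incentive compatible and at least as good, hence also optimal, and $0\in\operatorname{supp}P^\ast$.

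The step I expect to be the main obstacle is ruling out the ``dominated high action'' case $y_R<V_S(\mu_R;a_{-J})$, because this is where the hypothesis is genuinely consumed: absent an action above the prior that escapes $a_{-J}$'s dominance, there may be no reveal-state-$0$ structure beating $V_S(\mu_0;a_{-J})$, and the conclusion can indeed fail. The subtle point in closing that case is that $V_S(\mu_0;a_{-J})$ is \emph{not} comparable to $\overline{v}(\mu_0)$ in general (since $a_{-J}\notin A_R(\mu_0)$), so the contradiction must be driven by comparison with the explicit benchmark $R^\ast$ rather than with the uninformative payoff $\overline{v}(\mu_0)$; everything else reduces to bookkeeping with the linear functions $V_S(\cdot;a)$.
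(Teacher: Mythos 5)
Your proof is correct and follows essentially the same route as the paper's: part (a) is the same IC-PP construction anchored at belief $0$ (common interest makes the constraint at $0$ free, non-dominance makes the other constraint hold and the payoff chord steeper than $V_S(\cdot\,;a_{-J})$), and part (b) uses the same benchmark structure $\{0,\mu_k\}$ together with the same dichotomy on whether $a_{-J}$ beats the induced high action at the high posterior. The only difference is in executing the second case: you keep the original high posterior $(\mu_R,\alpha_R)$ and move only the low posterior to $0$, closing with a direct slope comparison of the two chords through $(\mu_R,y_R)$, whereas the paper replaces both support points by its fixed benchmark and invokes a mean-preserving-spread/convexity argument --- both deliver the same conclusion.
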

	
	\begin{proof}
        Since $a_{-J}$ does not block $a_k$, we have
		$u_S(a_k,\mu_{k+1})>u_S(a_{-J},\mu_{k+1})$. Let $\overline{a}_{k+1}$ be the sender-preferred action in $A_R(\mu_{k+1})$. Then $u_S(\overline{a}_{k+1},\mu_{k+1})\ge u_S(a_k,\mu_{k+1})>u_S(a_{-J},\mu_{k+1})$. From the definition of common interest in state 0, $u_S(a_{-J},0)\ge u_S(\overline{a}_{k+1},0)$. Therefore,  the random posterior with support $\{0,\mu_{k+1}\}$ is IC-PP if the receiver optimally chooses between $a_{-J}$ and $\overline{a}_{k+1}$.  
		
		By Proposition \ref{withoutloss}, it is without loss of generality to only consider information structures that generate posteriors that are in the set of boundary beliefs $B$.  Consider an incentive compatible random posterior $P'$ with support $\{ \mu_{-j},\mu_{k'}\}$ that induces $\alpha_{-j}\in A_R(\mu_{-j})$ and $\alpha_{k'}\in A_R(\mu_{k'})$.  Consider another random posterior $P$ with support $\{0,\mu_{k+1}\}$ 
		that induces actions $a_{-J}$ and $\overline{a}_{k+1}$ at these beliefs.  There are two possibilities.

		Case (1) $\mu_{k'}=\mu_{k+1}$. 
		Since $P'$ is incentive compatible, the payoff from this information structure is
		\begin{align*}
			W_{-j,k'}(\mu_0;\alpha_{-j},\alpha_{k'}) &= \mathbb{E}_{P'}\left[\max\left\{ \mathbb{E}_{\alpha_{-j}}[u_S(a,\mu)], \mathbb{E}_{\alpha_{k'}}[u_S(a,\mu)] \right\}\right] \\
			& \le \mathbb{E}_{P}\left[\max\left\{ \mathbb{E}_{\alpha_{-j}}[u_S(a,\mu)], \mathbb{E}_{\alpha_{k'}}[u_S(a,\mu)] \right\}\right] \\
			& \le  \mathbb{E}_{P}\left[\max\left\{  u_S(a_{-J},\mu), \mathbb{E}_{\alpha_{k'}}[u_S(a,\mu)] \right\}\right] \\
			& \le  \mathbb{E}_{P}\left[\max\left\{  u_S(a_{-J},\mu),  u_S(\overline{a}_{k+1},\mu) \right\}\right] \\
			& = W_{-J,k+1}(\mu_0;a_{-J},\overline{a}_{k+1}).
		\end{align*}
		The first inequality follows from the fact that $P$ is a mean-preserving spread of $P'$; therefore there is positive information value when the receiver's action space is fixed:  belief $\mu_{k+1}$ is realized more often under $P$ and the sender can correctly recommend $\alpha_{k'}$ instead of $\alpha_{-j}$ at $\mu_{k+1}$.  The second inequality follows from 
common-interest at state 0, $\mathbb{E}_{\alpha_{-j}}[u_S(a,0)]\le u_S(a_{-J},0)$---revealing state 0 increases the sender's payoff as the receiver will take a more favorable action when state 0 is realized.  The third inequality comes from $\mathbb{E}_{\alpha_{k'}}[u_S(a,\mu_{k+1})]\le u_S(\overline{a}_{k+1},\mu_{k+1})$. The last equality comes from 
		the fact
		that the random posterior with support $\{0,\mu_{k+1}\}$ is IC-PP for $a_{-J}$ and $\overline{a}_{k+1}$. 
		
		Case (2) $\mu_{k'}\ne \mu_{k+1}$. 
		If the information structure $\{0,\mu_{k'}\}$ that induces $a_{-J}$ and $\alpha_{k'}$ is incentive compatible, then the same argument provided in case (1) shows that this information structure will give a higher payoff to the sender than 
		does $P'$.  So we only need to consider the case that $\{0,\mu_{k'}\}$ is not incentive compatible.  In this case, because $a_{-J}$ is the sender's most-preferred action in state 0, incentive compatibility can fail only when $u_S(a_{-J},\mu_{k'}) >\mathbb{E}_{\alpha_{k'}} [u_S(a,\mu_{k'})]$ (i.e., the sender prefers $a_{-J}$ to $\alpha_{k'}$ at belief $\mu_{k'}$). 
		Moreover,  
		since the sender prefers $\alpha_{k'}$ to $\alpha_{-j'}$ at belief
		$\mu_{k'}$ (incentive compatibility), by transitivity he prefers $a_{-J}$ to $\alpha_{-j}$ at belief $\mu_{k'}$.  He also prefers $a_{-J}$ to $\alpha_{-j}$ at belief $0$.  Because preferences are linear in beliefs, this implies that he prefers $a_{-J}$ to $\alpha_{-j}$ at belief $\mu_{-j}$.  
		Therefore,
		\begin{align*}
			\mathbb{E}_{P'}\left[\max\left\{ \mathbb{E}_{\alpha_{-j}}[u_S(a,\mu)], \mathbb{E}_{\alpha_{k'}}[u_S(a,\mu)] \right\}\right] & <  
			\mathbb{E}_{P'}\left[ u_S(a_{-J},\mu)\right]	\\
			&=u_S(a_{-J},\mu_0)\\
			&\le \mathbb{E}_{P}\left[\max\left\{ u_S(a_{-J},\mu), u_S(\overline{a}_{k+1},\mu) \right\}\right].
		\end{align*}
		The first inequality follows from the fact that $a_{-J}$ is strictly better than $\alpha_{-j}$  and $\alpha_{k'} $ at 
		belief $\mu_{-j}$ and belief $\mu_{k'}$, respectively.
		The last inequality follows from the fact that the information structure $P$ is incentive compatible for  $a_{-J}$ and $\overline{a}_{k+1}$.  
		Therefore there is a positive information value as the sender can correctly recommend $\overline{a}_{k+1}$ instead of $a_{-J}$ at belief $\mu_{k+1}$. 
	\end{proof}
	
	Proposition \ref{commoninterest}
	implies that as long as $u_S(a_{-J},\mu) \le \overline{v}(\mu)$ for some $\mu > \mu_0$, 
	the	support of the optimal random posterior contains $0$.
	If it also contains $1$, then the optimal experiment reveals perfect information.  If it does not 
	contain $1$, the optimal experiment will generate a conclusive signal of the common-interest state.  In other words, the underlying Blackwell experiment corresponding to this optimal random posterior will produce a signal that reveals the common-interest state 0 with probability strictly less than 1 when the true state is 0, and never produces a signal that would suggest the state is 0 when the true state is 1.  This means that the ex-ante probability that the receiver takes action $a_{-J}$ under the optimal information structure cannot exceed	$1-\mu_0$.

	\section{Informativeness Compared to Bayesian Persuasion}
	\label{newsection}
	
	In general, the optimal experiment when the sender has no commitment power can be more or less informative than (or not Blackwell-comparable to) the optimal experiment chosen when the sender can commit to truthfully revealing the outcome of the experiment.  For example, when sender and receiver have opposite marginal incentives, Proposition \ref{oppoprop} shows that the optimal experiment in our setup is a totally uninformative experiment, while the optimal experiment with full commitment is typically non-degenerate, as the concave envelope of $\overline{v}(\cdot)$ does not coincide $\overline{v}(\cdot)$ itself.
	
	For an example in which the optimal experiment in our setup is more informative than that in a model with full commitment, consider the case where there is a best action $a_n$ that the sender prefers the most in both states.  Let $a_n$ be the receiver's best response when the belief is in the interval $[\underline{I}_n,\overline{I}_n]$. 
	Let the prior belief $\mu_0$ be lower than $\underline{I}_n$. 
	The lesson we learn from \cite{kamenica2011bayesian} is that if the optimal experiment with full commitment induces $a_n$ and some other action, it maximizes the ex-ante probability that $a_n$ will be taken by inducing the smallest posterior belief $\underline{I}_n$ that is just enough to induce the receiver to choose $a_n$. 
	When the sender lacks commitment power in communication, inducing the pure action $a_n$ 
	is not incentive compatible.
	However, it may be incentive compatible to induce a mixture action between $a_n$ and $a_{n+1}$ at belief $\overline{I}_n$.  Because $\overline{I}_n$ is farther from $\mu_0$ than $\underline{I}_n$ is from $\mu_0$, the resulting experiment is 
	more informative than the optimal experiment under full commitment.  
	The next proposition specifies the precise conditions for an analogous argument to be valid.

	\begin{proposition}\label{compwithkg}
		Assume that sender and receiver have aligned marginal incentives and $|A|\ge 3$. If an action $a_n$ ($n \ne -J,K$) is (strictly) best, then the optimal information structure in our model is (strictly) more informative than the optimal experiment under full commitment for some prior belief. 
	\end{proposition}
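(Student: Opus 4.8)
The plan is to fix a prior $\mu_0$ just below $\underline{I}_n$, pin down the optimal experiment under full commitment, and then show that removing commitment forces the upper posterior strictly outward from $\underline{I}_n$ to $\overline{I}_n$ while leaving the lower posterior unchanged. Since $a_n$ is the sender's strictly favorite action at every belief, $\overline{v}$ coincides with the affine function $V_S(\cdot\,;a_n)$ on $[\underline{I}_n,\overline{I}_n]$ and lies strictly below it elsewhere. First I would establish the full-commitment benchmark: for $\mu_0$ slightly below $\underline{I}_n$, the concave envelope of $\overline{v}$ is attained by a binary experiment whose upper posterior is exactly $\underline{I}_n$. The reason is that, holding a left point $(\mu_L,\overline{v}(\mu_L))$ fixed, the chord to a point on the segment $V_S(\cdot\,;a_n)$ is highest at $\mu_0$ when its right contact is taken as far left as possible, namely at $\underline{I}_n$; the left contact $\mu_L$ is the usual concave-envelope tangency point, which is a boundary belief.

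Next I would show that in the cheap-talk model pure $a_n$ cannot be credibly induced from below. If an incentive-compatible binary structure put the receiver on pure $a_n$ at some belief in $[\underline{I}_n,\overline{I}_n]$ together with a lower message, then by Lemma \ref{l:IC} the chord slope would satisfy $W'_{-j,k}\le m_S(a_n)$; but strict dominance makes $\mathbb{E}_{\alpha_{-j}}[V_S(\mu_{-j};a)]$ lie strictly below the line $V_S(\cdot\,;a_n)$, so the chord from this point up to a point on that line is strictly steeper than the line itself, giving $W'_{-j,k}>m_S(a_n)$, a contradiction. Hence the only way to place positive weight on $a_n$ in an incentive-compatible experiment is to let the receiver randomize between $a_n$ and $a_{n+1}$ at the upper boundary $\overline{I}_n$, the unique belief at which this mixture is a best response. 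By aligned marginal incentives the induced $m_S(\alpha)$ exceeds $m_S(a_n)$, which relaxes the upper bound in (\ref{compareslopes}) enough to restore incentive compatibility. I would then exhibit the explicit IC-PM (or IC-MM) structure with support $\{\mu_L,\overline{I}_n\}$ and check that the implied mixing weight is a valid probability.

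The heart of the argument is that the cheap-talk optimum has support exactly $\{\mu_L,\overline{I}_n\}$. Its upper posterior is $\overline{I}_n$ because dominance makes any experiment involving $a_n$ superior to pure pairs that skip it, so by Proposition \ref{withoutloss} and Algorithm 1 the optimal structure reaches $\overline{I}_n$. For the lower posterior I would argue that it coincides with the benchmark's $\mu_L$. Both problems choose the left boundary belief maximizing the height at $\mu_0$ of a line drawn to a fixed anchor lying on $V_S(\cdot\,;a_n)$: the anchor is $(\underline{I}_n,V_S(\underline{I}_n;a_n))$ under full commitment and the crossing point of $V_S(\cdot\,;a_n)$ and $V_S(\cdot\,;a_{n+1})$ under cheap talk, but both anchors lie on the same line $V_S(\cdot\,;a_n)$. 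Because aligned marginal incentives make the pieces of $\overline{v}$ to the left of $\underline{I}_n$ have slopes increasing in the action index, the concave hull there has no vertex strictly between $\mu_L$ and $\underline{I}_n$; sliding the anchor along $V_S(\cdot\,;a_n)$ therefore cannot change the maximizing left vertex, so the lower posterior stays at $\mu_L$. With identical lower posteriors and $\overline{I}_n>\underline{I}_n$, the cheap-talk experiment is a strict mean-preserving spread of the full-commitment one, hence strictly Blackwell more informative.

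The main obstacle is precisely this coincidence of lower posteriors, together with confirming the form of the cheap-talk optimum. The delicate point is controlling the concave hull of $\overline{v}$ on $[0,\underline{I}_n]$: a priori an upward jump could create an interior hull vertex, and moving the anchor rightward along $V_S(\cdot\,;a_n)$ generically shifts the optimal left contact. I expect the clean way to close this is to use strict dominance (so $\overline{v}<V_S(\cdot\,;a_n)$ below $\underline{I}_n$) together with the increasing-slope structure from aligned incentives to show that the final segment of the left hull runs all the way to the peak $(\underline{I}_n,V_S(\underline{I}_n;a_n))$, pinning the maximizing left vertex at $\mu_L$ for every anchor on that line. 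Verifying optimality of the candidate through Algorithm 1, i.e. ruling out the competing IC-PP, IC-MP and IC-MM structures, is routine but tedious once dominance of $a_n$ is exploited.
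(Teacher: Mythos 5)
Your construction covers only half of the argument, and the half you defer as a routine check is exactly where it can break. You fix the prior just below $\underline{I}_n$ and propose an IC-PM structure with support $\{\mu_L,\overline{I}_n\}$ in which the receiver mixes over $\{a_n,a_{n+1}\}$ at $\overline{I}_n$, noting only that one must ``check that the implied mixing weight is a valid probability.'' But that weight exists in $[0,1]$ only if the sender at the lower belief $\mu_L$ weakly prefers the lower action there to $a_{n+1}$ (equivalently, his payoff from the lower action lies between $V_S(\mu_L;a_{n+1})$ and $V_S(\mu_L;a_n)$, the latter being guaranteed by dominance of $a_n$). If instead the sender at $\mu_L$ strictly prefers \emph{every} action $a_k$ with $k\ge n$ to the action induced at $\mu_L$, no such mixture exists at $\overline{I}_n$ or at any higher boundary belief, the entire from-below construction collapses, and the cheap-talk optimum at that prior need not be more informative than the full-commitment benchmark at all. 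This is not a pathological corner: the paper splits the proof into precisely these two cases, and in the bad case it abandons the prior below $\underline{I}_n$ entirely and instead works with a prior $\mu_0'$ \emph{above} $\overline{I}_n$, where the full-commitment optimum has support $\{\overline{I}_n,\overline{I}_{n''}\}$ and a symmetric IC-MP structure with support $\{\underline{I}_k,\overline{I}_{n''}\}$ (mixing at the \emph{lower} posterior, for some $n'<k\le n$) can be shown to exist using aligned marginal incentives. The statement only promises strict informativeness ``for some prior belief,'' and the proof needs that freedom; your version forfeits it.

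Two smaller points. First, your claim that the cheap-talk lower posterior coincides with the full-commitment contact point $\mu_L$ is stronger than what is needed: it suffices to show that the IC-PM experiment $\{\mu_L,\overline{I}_k\}$ beats every incentive-compatible experiment whose upper posterior lies in $(\overline{I}_{\,n'},\underline{I}_n]$, including those with lower posterior below $\mu_L$, which the paper does via the slope bound from Lemma \ref{l:IC} (any such experiment has chord slope below $m_S(a_n)$, hence value below the line $V_S(\underline{I}_{n'};a_{n'})+m_S(a_n)(\mu_0-\underline{I}_{n'})$, whereas the IC-PM chord has slope $m_S(\alpha_k)>m_S(a_n)$). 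Second, the correct upper posterior in case (a) is $\overline{I}_k$ for the smallest $k\ge n$ at which the sender's indifference can be arranged, not necessarily $\overline{I}_n$ itself. Neither of these is fatal, but the missing case (b) is a genuine gap.
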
	
	
	The proof of this proposition is in the Appendix. Consider Figure \ref{kg}, $a_n$ is the best action for the sender	and the dotted red envelope is the concave envelope of the sender's value function. The optimal experiment under full commitment at the prior belief $\mu_0$ has support $\{\underline{I}_{n'},\underline{I}_n\}$. 
	The optimal experiment under full commitment at belief 
	$\mu'_0$ has support $\{\overline{I}_{n},\overline{I}_{n^{''}}\}$. 
	
	\begin{figure}[t]
		\centering
		\includegraphics[width=15.5cm]{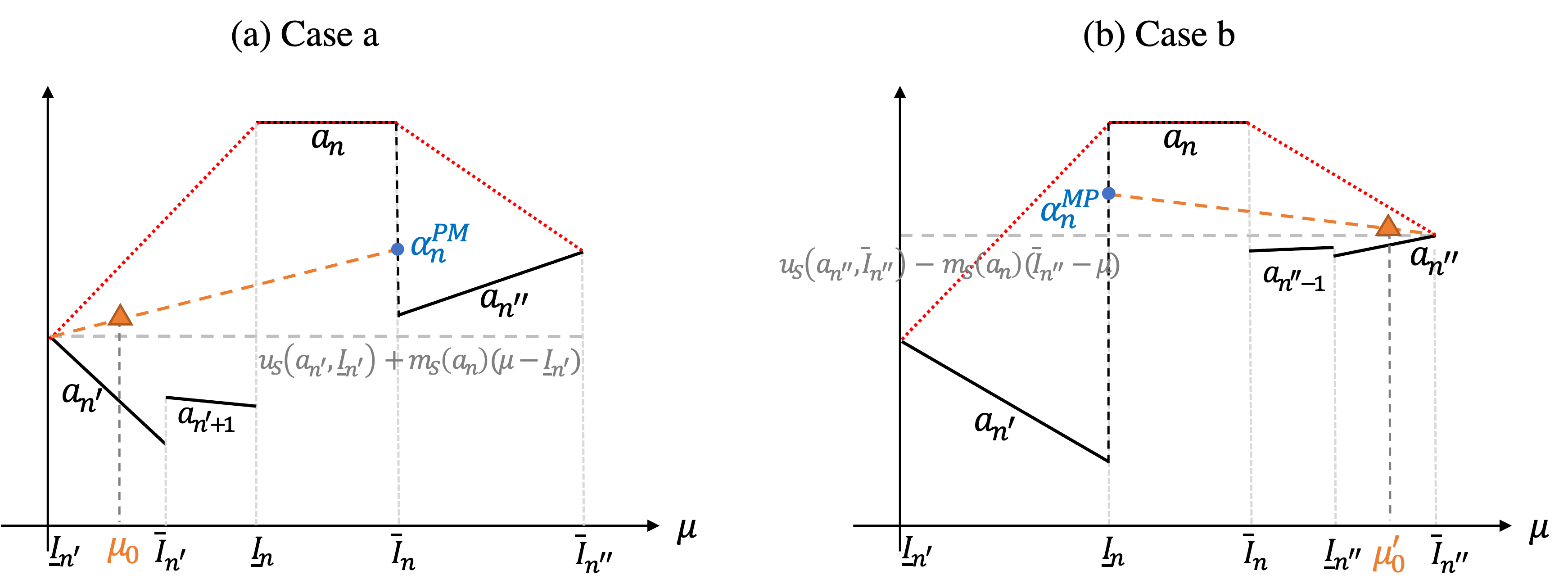}
		\caption{Optimal experiment 
			with and without commitment.}
		\label{kg}
	\end{figure}

	In the left panel (case a),
	the sender with belief $\underline{I}_{n'}$ strictly prefers $a_n$ over $a_{n'}$ over $a_{n''}$. It implies that there exists a randomization $\alpha^{PM}_n$ between $a_n$ and $a_{n''}$ such that the experiment with support $\{\underline{I}_{n'},\overline{I}_n\}$ is IC-PM. Recall that with aligned marginal incentives, $m_S(\alpha^{PM}_n)>m_S(a_{n})$.
	Therefore the expected payoff from such IC-PM experiment (the orange triangle) is strictly higher than $u_S(a_{n'},\underline{I}_{n'})+m_S(a_n) (\mu_0-\underline{I}_{n'})$ (lying on the gray dashed line). Moreover, with aligned marginal incentives, any incentive compatible experiment that induces $a_{n'}$ and some action smaller than $a_n$ can only lead to an expected payoff strictly below $u_S(a_{n'},\underline{I}_{n'})+m_S(a_n) (\mu_0-\underline{I}_{n'})$. 
	For example, in Figure \ref{kg}(a), the experiment with support $\{\underline{I}_{n'},\underline{I}_{n}\}$ is IC-PP for $a_{n'}$ and $a_{n'+1}$. However, the sender's expected payoff from it is bounded by the gray dashed line because the slope of the sender's expected payoff is smaller than the marginal incentives of $a_{n'+1}$ (implied by Lemma \ref{l:IC}) which is smaller than $m_S(a_n)$. 
	Thus, in this case, under the prior belief $\mu_0$, the 
	optimal experiment in our model is more informative than that under full commitment.

	It is possible that the sender with belief $\underline{I}_{n'}$ strictly prefers all actions higher than $a_n$ over $a_{n'}$, so that we cannot find an incentive compatible experiment that is more informative than $\{\underline{I}_{n'},\underline{I}_n\}$. This happens in the right panel (case b). However, given the assumption of aligned marginal incentives, there must exist 
	in this case two actions (weakly) smaller than $a_n$ such that the sender with belief $\overline{I}_{n''}$ prefers one over $a_{n''}$ over the other one. In Figure \ref{kg}(b), type-$\overline{I}_{n''}$ sender prefers $a_n$ over $a_{n''}$ over $a_{n'}$. With similar reasoning as in case (a), under the prior belief $\mu'_0$, the optimal experiment in our model is more informative than that under full commitment.

\section{Canonical Cheap Talk}
	\label{s:cheaptalk}
	
	In this section, we discuss the connection between our model and the canonical cheap talk model under binary state space and finite action space. We introduce a modification of Algorithm 1 to find the highest equilibrium payoff for the sender in a canonical cheap talk game. 
	
	In the canonical cheap talk game, the sender is initially perfectly informed about the true state. His reporting strategy is essential for generating credible information transmission. 	It is obvious that for every equilibrium in the canonical cheap talk game, there is a corresponding game in our model inducing the same equilibrium outcome---namely,
	the sender commits to the information structure that is his reporting strategy in the canonical cheap talk game, and then truthfully reports his private information outcomes. 
	
	Conversely, pick a truth-telling equilibrium in our game. To ensure that its outcome is also feasible in the canonical cheap talk, we need to verify an additional constraint: 	while fixing the receiver's decision rule to be the same as in the truth-telling equilibrium, the sender cannot gain from deviating to a more informative information structure. 
    If this constraint fails, using the information structure in our game as the reporting strategy is not incentive compatible for a sender who knows the true state.

 \begin{figure}[t]
		\centering
		\includegraphics[width=15cm]{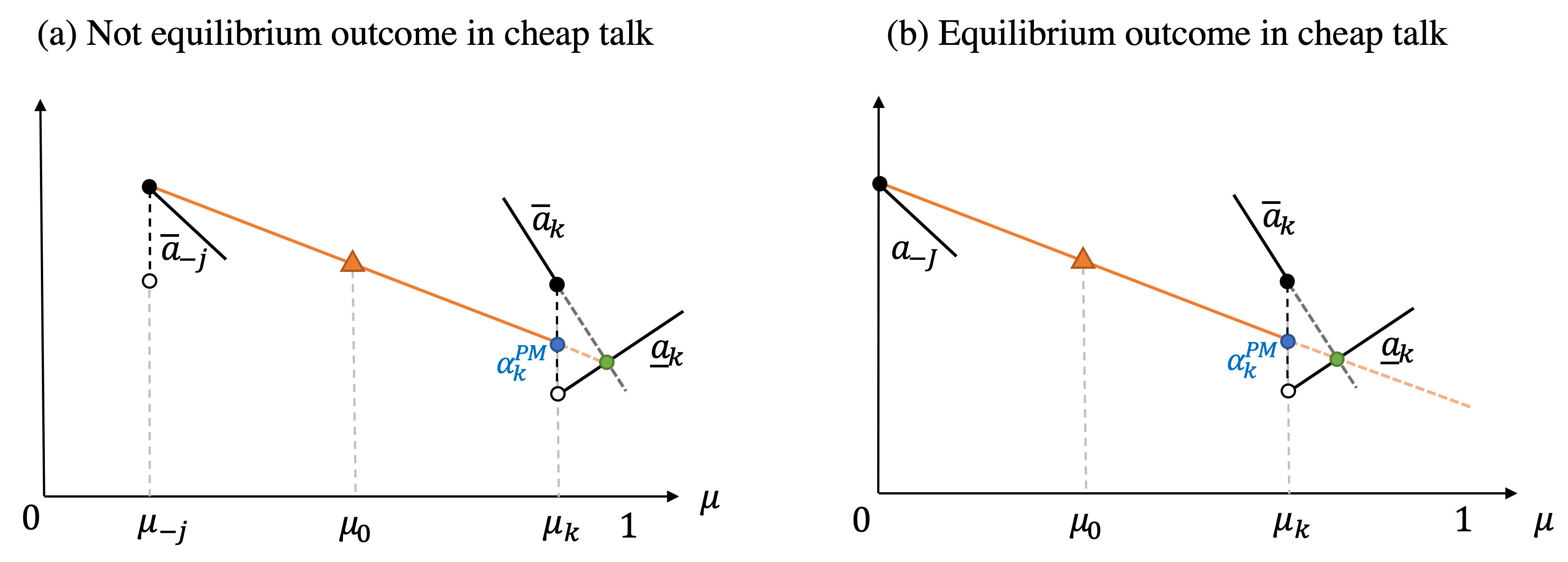}
		\caption{Comparing with canonical cheap talk}
		\label{cheaptalk}
	\end{figure}	
 
	To illustrate the logic behind this constraint,
	Figure \ref{cheaptalk}(a) provides an example of one-sided randomization (PM).
	The random posterior $\{\mu_{-j},\mu_k\}$ is IC-PM  
	that the sender
	is indifferent between mixed action $\alpha^{PM}_k$ and pure action $\overline{a}_{-j}$ at belief $\mu_{-j}$, and strictly prefers $\alpha^{PM}_k$ to $\overline{a}_{-j}$ at belief $\mu_k$.  Because both $\mu_{-j}$ and $\mu_k$ are interior and preferences are linear in beliefs, this in turn implies that the sender strictly prefers  $\alpha^{PM}_k$ to $\overline{a}_{-j}$  at belief 1 and strictly prefers $\overline{a}_{-j}$ to  $\alpha^{PM}_k$ at belief 0.
	The equilibrium outcome induced by this one-sided randomization cannot be sustained as an equilibrium outcome in the canonical cheap talk game. 
	To produce an outcome that induces interior beliefs $\mu_{-j}$ and $\mu_k$, the sender must adopt a reporting strategy that recommends both actions with positive probabilities in each state, but this is not incentive compatible for a sender who knows the true state.  
	To put it slightly differently, the information structure with support $\{\mu_{-j},\mu_k\}$ cannot be sustained as an equilibrium outcome if the sender cannot commit to this experiment, because he can gain from deviating to learn more about the state.

	Figure \ref{cheaptalk}(b) modifies the example to show when one-sided randomization can be supported as an equilibrium outcome in the canonical cheap talk game. 
 The only difference is that, in Figure \ref{cheaptalk}(b), the pure action $\overline{a}_{-j}=a_{J}$ is taken at degenerate belief $0$.  To produce the random posterior $\{0,\mu_k\}$ under cheap talk, the sender must adopt a reporting strategy which recommends the mixed action $\alpha^{PM}_k$ only in state 1, and recommends both actions with positive probability in state 0.  
	Since the sender is indifferent between these two actions in state 0, such a reporting strategy is indeed incentive compatible and will produce an interior belief $\mu_k$ for the receiver upon getting the recommendation to choose $\alpha^{PM}_k$.

	This example demonstrates that an IC-PM or IC-MP outcome in our model cannot be supported as an equilibrium outcome of the cheap talk game if the pure action is chosen at an interior belief 
and the sender has a strict preference at the other belief,\footnote{
If the sender is indifferent at the other belief that induces the mixed action, then such one-sided randomization can be an equilibrium outcome in a cheap talk game. However, it is covered by IC-MM. }  
but it is an equilibrium outcome of the cheap talk game if the pure action is chosen at degenerate beliefs (0 or 1). In other words, when looking for 
pure action in an IC-PM or IC-MP equilibrium of the canonical cheap talk game, we only need to consider action $a_{-J}$ or $a_K$ respectively.  Similarly, when looking for an IC-PP equilibrium in the canonical cheap talk game, we only need to consider the actions $a_{-J}$ and $a_{K}$.
On the other hand, every IC-MM outcome in our model can be supported as an equilibrium outcome in the canonical cheap talk game. Because by construction, the sender is indifferent between a pair of IC-MM actions at beliefs $\mu_{-j},\mu_k$, and therefore he is also indifferent between them at beliefs 0 and 1.
	
	This suggests that the sender-optimal equilibrium in the cheap talk model can be obtained from a straightforward modification of Algorithm 1. 

\textbf{\emph{Algorithm 2:}}
	\setlist{nolistsep}
	\begin{enumerate}
		
		\item For the pair $(-J-1,K+1)$, compute $W_{-J-1,K+1}(\mu_0;a_{-J},a_{K})$. If this pair is IC-PP, let $W^1=W_{-J-1,K+1}(\mu_0;a_{-J},a_{K})$;  otherwise let $W^1=\bar{v}(\mu_0)$. 
		\vspace{0.2cm}
		
		\item For every pair $(-J-1,k)$ where $k\in\{1,...,K+1\}$, compute $W_{-J-1,k}(\mu_0;a_{-J},\alpha_k^{PM})$ and rank these values from highest to lowest. Starting with the pair with the highest value, verify whether it is IC-PM or not.  Stop the first time when an IC-PM pair is found. Assign $W^{(a)}=W_{-J-1,k}^{PM}$ for such pair. If none of them is IC-PM, assign $W^{(a)}=\overline{v}(\mu_0)$. 
Symmetrically, for every pair $(-j,K+1)$ where $j\in\{-J-1,...,-1\}$, go through a similar procedure in the case for MP. Assign $W^{(b)}=W_{-j,K+1}^{MP}$ the first time an IC-MP is found.  If none of them is IC-PM, assign $W^{(b)}=\overline{v}(\mu_0)$.
Set $W^2=\max\{W^{(a)},W^{(b)}\}$. 
		\vspace{0.2cm}

		\item For every pair $(-j,k)\in 
		\{-J,\ldots,-1\}\times \{1,\ldots,K\}$, compute $W_{-j,k}(\mu_0;\alpha_{-j}^{MM},\alpha_{k}^{MM})$ and rank these values from the highest to lowest.
Starting with the pair with the highest value, verify whether it is IC-MM or not. Stop the first time an IC-MM pair is found and assign $W^3=W_{-j,k}^{MM}$ for such pair. If none of them is IC-MM, assign $W^3=\overline{v}(\mu_0)$. 
		\vspace{0.2cm}

		\item Assign $W^c(\mu_0)=\max\{W^1,W^2,W^3\}$. The random posterior with support $\{\mu_{-j},\mu_k\}$ corresponds to the pair that yields $W^c(\mu_0)$ is the sender's reporting strategy in the optimal equilibrium. 
  \vspace{0.2cm}
	\end{enumerate}

 \begin{corollary}
     Algorithm 2 determines the highest equilibrium payoff for the sender in a canonical cheap talk game (with binary states and finite actions). 
 \end{corollary}

It is immediate that sender's commitment on the information structure is valuable if and only if $W^*(\mu_0)>W^c(\mu_0)$, i.e., when the two algorithms produce different outcomes.

    \section{Discussion} 
	\label{s:discussion}
	
	The model in this paper has a close relation with a particular scheme of mediated communication, in which a mediator maximizes the ex-ante welfare of an informed sender \citep{Salamanca}. Specifically, a perfectly informed sender sends a message about his private information to a mediator. The mediator then communicates 
	a message to the receiver according to a noisy reporting rule that the mediator commits to at the beginning of the game. After receiving the message from the mediator, the receiver takes an action. If we consider the mediator's reporting rule as a mapping from the sender's private information to a distribution of action recommendations,
	this rule can be interpreted as an information structure.
	The incentive constraints for this mediated communication 
	game are imposed at the \emph{ex ante} stage, which require
	every type of sender who perfectly knows the state
	to report his private information truthfully 
	before observing the message sent by the mediator. 
	
	In contrast, the sender in our model is uninformed when he
	commits to an information structure, and then reports his private information to the receiver after observing the outcome of the experiment.  Therefore, our model requires \emph{interim-stage} incentive constraints, such that the sender with an interim belief derived from the observed outcome prefers to report his private information truthfully. In spite of this difference, if our sender and the mediator in \cite{Salamanca} commit to the same information structure in equilibrium, then both models will yield the same equilibrium outcomes. 
	
	Interestingly, under binary state space, 
	the highest equilibrium payoff 
	$W^*(\mu_0)$ that the sender can achieve in our model
	is always weakly lower than
	the maximum ex-ante welfare of the sender (i.e., evaluated at $\mu_0$ before the sender becomes perfectly informed) in \cite{Salamanca} for any $\mu_0$.	
	
	To see this, suppose the optimal random posterior in our sender-receiver game has support $\{\mu',\mu''\}$ and the receiver optimally chooses  $a'\in A_R(\mu')$ and $a''\in A_R(\mu'')$ at the respective beliefs (the same argument will go through if the receiver takes mixed strategy). 
	Without loss of generality, let $\mu' < \mu''$.  Then
	incentive compatibility constraints (\ref{IC-sender}) in our model
	implies that the following also holds:
	\begin{equation*}
		u_S(a',0) \ge u_S(a'',0),\qquad
		u_S(a'',1) \ge u_S( a',1).
	\end{equation*}	
	This means that it is incentive compatible for an informed sender to truthfully report his private information (belief 0 or 1) to the mediator, whenever the mediator commits to a reporting rule that recommends $a'$ 
	more often if the sender reports  $0$ and recommends $a''$ 
	more often if the sender reports $1$. Therefore, the sender's incentive constraints in the mediated communication game are satisfied if the mediator commits to the same information structure as the underlying experiment that induces our optimal random posterior.  
	In other words, interim incentive compatibility in our model is more stringent than the incentive compatibility restrictions required by the mediator model, and therefore our model delivers a (weakly) lower expected payoff for the sender than that achievable in \cite{Salamanca}.

	\newpage 
	
	\appendix
	
	\section*{Appendix}
	
	\begin{proof}[{\bf Proof of Lemma \ref{lem1}}]
		Given $P$, $\sigma_S$, and a message $m\in M$, the receiver forms a posterior belief $\hat{\mu}^m$, where
		\begin{equation*}
			\hat{\mu}^m(\theta)=\sum_{\mu \in \operatorname{supp}( P)}\frac{P(\mu) \sigma_S(m|\mu) }{\sum_{\mu\in \operatorname{supp}( P)} P(\mu) \sigma_S(m|\mu)} 	\mu(\theta),
		\end{equation*}
		for $\theta \in \Theta$.
		We use $\hat{P} \in \Delta(\Delta \Theta)$ to denote the distribution of the receiver's posterior beliefs, with 
		$\hat{P}(\hat{\mu}^m)=\sum_{\mu\in \operatorname{supp} (P) } P(\mu)\sigma_S(m|\mu)$.  Then player $i$'s expected utility can be simplified to:
		\begin{equation*}
			\label{belief2}
			U_i(\sigma_S,\sigma_R,P)
			=\sum_{m\in M, \, \theta\in \Theta,\,  a\in A} \hat{P}(\hat{\mu}^m) \hat{\mu}^m(\theta) \sigma_R(a|m) u_i(a,\theta).
		\end{equation*}
		Thus each player's expected utility only depends on the joint distribution of the receiver's posterior belief and the action. 
		If we let the sender directly commit to the random posterior $\hat{P}$, and construct a (truth-telling) reporting strategy $\hat\sigma_S$ such that for all $\mu \in \operatorname{supp} (\hat{P})$, $\hat\sigma_S (m|\hat{\mu}^m)=1$,
		then player $i$'s expected utility further simplifies to:
		\begin{equation*}
			U_i(\sigma_S,\sigma_R,P)=U_i(\hat\sigma_S,\sigma_R,\hat{P}).
		\end{equation*}
		Moreover, $(\sigma_S,\sigma_R,P)$ being an equilibrium strategy profile implies $(\hat\sigma_S,\sigma_R,\hat{P})$ is an equilibrium strategy profile. Since reporting $m\in M$ is a best response to $\sigma_R$ for every sender type of $\{\mu\in \operatorname{supp} (P): \sigma_S(m|\mu)>0\}$,
		reporting $m$ is also a best response for sender type $\hat{\mu}^m$, as $\hat{\mu}^m$ is a convex combination of $\{\mu\in \operatorname{supp} (P): \sigma_S(m|\mu)>0\}$.
		
		To prove the second part, suppose  a random posterior $P$ with $|\operatorname{supp} (P)|>|\Theta|$ that can 
		lead to a truth-telling equilibrium is optimal. By 
		Carath\'{e}odory's Theorem and Krein-Milman Theorem, 
		$\mu_0=\mathbb{E}_P[\mu]$ 
		can be written as a convex combination of 
		$|\Theta|$ elements of  $\operatorname{supp}( P)$, denoted as $P'\in \Delta (\Delta \Theta)$ with $|\operatorname{supp} (P')| =
		|\Theta|$ and 
		$\operatorname{supp} (P')\subset \operatorname{supp} (P)$.
		Let the receiver preserve $\sigma_R$, then $P'$ can 
		lead to a truth-telling equilibrium.  
		Let $c:\operatorname{co}\left(\operatorname{supp} (P)\right) 
		\rightarrow \mathbb{R}$  be the smallest concave function  such that $c(\mu)\ge  \sum_{\theta,a} \mu(\theta) \sigma_R(a|\mu) u_S(a,\theta)$ at all $\mu\in \operatorname{supp} (P)$. The
		random posterior $P'$ can perform equally well as $P$ because $c$
		must be affine on $\operatorname{co}\left(\operatorname{supp}( P)\right)$.
	\end{proof}

	\begin{proof}[{\bf Proof of Theorem \ref{theorem}}]

		\begin{claim}\label{c1}
			If the optimal random posterior has support $\{\mu_{-j},\mu_k\}$, and the receiver uses mixed strategies $\alpha_{-j}\in \Delta A_R(\mu_{-j})$ and $\alpha_{k}\in \Delta A_R(\mu_{k})$ (with full support) in the sender-preferred equilibrium, 
			then at least one of the following is true: 
			\begin{enumerate}
				\item[(i)]  $\alpha_{-j}=\alpha^{MM}_{-j}$ and $\alpha_{k}=\alpha^{MM}_{k}$; 
				\item[(ii)] $W^*(\mu_0)\in\{W^{PM}_{-j,k},W^{MP}_{-j,k},W^{PP}_{-j,k} \}$.
			\end{enumerate}
		\end{claim}
		
		\begin{proof}[Proof of Claim \ref{c1}]
			Since the receiver uses mixed strategies at each belief, 
			$\mu_{-j}\ne 0$ and $\mu_k\ne 1$. 
			Let $A_R(\mu_{-j})=\{a_{-j},a_{-j+1}\}$ and $A_R(\mu_{k})=\{a_{k-1},a_k\}$. 
			Since $\alpha_{-j}$ and $\alpha_{k}$ are incentive compatible for the sender, from ($\ref{IC-sender}$),
			\begin{equation}\label{mmapp}
				\mathbb{E}_{\alpha_{-j}}[u_S(a,\mu_{-j})]\ge \mathbb{E}_{\alpha_{k}}[u_S(a,\mu_{-j})],
				\qquad 
				\mathbb{E}_{\alpha_{k}}[u_S(a,\mu_{k})]\ge \mathbb{E}_{\alpha_{-j}}[u_S(a,\mu_{k})].
			\end{equation}
			Suppose $u_S(a_{-j},\mu_{-j})\ne  u_S(a_{-j+1},\mu_{-j})$ and $ u_S(a_{k-1},\mu_k) \ne u_S(a_{k},\mu_k)$.  
			Suppose further that both inequalities hold strictly and $\alpha_{-j}$ and $\alpha_{k}$ have full support. Then the sender can achieve a strictly higher expected payoff if the receiver deviates from $\alpha_{-j}$ by assigning a slightly larger probability on the sender-preferred action $\overline{a}_{-j}\in A_R(\mu_{-j})$. 
			As long as the increase in probability of choosing $\overline{a}_{-j}$ is small enough, such deviation
			would raise the payoff from truth-telling at belief $\mu_{-j}$ without violating the truth-telling constraint at belief $\mu_{k}$. 
			Suppose the first inequality in (\ref{mmapp}) holds as an equality and the second inequality in (\ref{mmapp}) holds strictly. Then the sender can achieve a strictly higher expected payoff under the same argument. If the first inequality in (\ref{mmapp}) holds strictly and the second inequality in (\ref{mmapp}) holds as an equality, a symmetric argument will apply. Therefore, the optimality of $\alpha_{-j}$ and $\alpha_{k}$ implies that both inequalities in (\ref{mmapp}) hold as an equality.

			If $m_S(a_{-j})$, $ m_S(a_{-j+1})$, $ m_S(a_{k-1})$ and $m_S(a_{k})$ are not all equal,
			then there is a unique $(\alpha^{MM}_{-j},\alpha^{MM}_k)\in \Delta A_R(\mu_{-j}) \times \Delta A_R(\mu_{k})$
			such that the constraints (\ref{mmapp}) hold as equalities.
			Next, if $m_S(a_{-j})= m_S(a_{-j+1})=m_S(a_{k-1})= m_S(a_{k})$, then there 
			exist infinitely many solutions. The optimality of $\alpha_{-j}$ and $\alpha_{k}$ will then imply that the receiver takes either $\overline{a}_{-j}$ at belief $\mu_{-j}$ and $ \alpha^{MM}_k$ at belief $\mu_k$, or $\overline{a}_{k}$ at $\mu_k$ and  $\alpha^{MM}_{-j}$ at belief $\mu_{-j}$. This 
			contradicts the premise that both $\alpha_{-j}$ and $\alpha_{k}$ have full support. Moreover, in this case $W^*(\mu_0)\in \{W^{PM}_{-j,k},W^{MP}_{-j,k}\}$.

			If at least one of $u_S(a_{-j},\mu_{-j})\ne  u_S(a_{-j+1},\mu_{-j})$ and $ u_S(a_{k-1},\mu_k) \ne u_S(a_{k},\mu_k)$ does not hold, we can slightly alter the above argument to show $W^*(\mu_0)\in \{W^{PM}_{-j,k},W^{MP}_{-j,k},W^{PP}_{-j,k} \}$. 
		\end{proof}
		
		\begin{claim}\label{c2}
			If the optimal random posterior has support $\{\mu_{-j},\mu_k\}$, and the receivers takes pure action $a'\in A_R(\mu_{-j})$ and mixed action $\alpha_k\in \Delta A_R(\mu_k)$ (with full support) in the sender-preferred equilibrium, then at least one of the following is true: 
			\begin{enumerate}
				\item[(i)]  $ a'=\overline{a}_{-j}$  and $\alpha_k=\alpha^{PM}_k$; 
				\item[(ii)] $W^*(\mu_0)\in\{W^{MM}_{-j,k},W^{MP}_{-j,k},W^{PP}_{-j,k}\}$.
			\end{enumerate}
		\end{claim}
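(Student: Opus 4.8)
The plan is to mirror the structure of the proof of Claim~\ref{c1}, using the fact that in the sender-preferred equilibrium the receiver's randomization is pinned down by the sender's incentive constraints. In payoff form, the profile $(a',\alpha_k)$ must satisfy $V_S(\mu_{-j};a')\ge \mathbb{E}_{\alpha_k}[V_S(\mu_{-j};a)]$ (the sender at $\mu_{-j}$ does not imitate $\mu_k$) and $\mathbb{E}_{\alpha_k}[V_S(\mu_k;a)]\ge V_S(\mu_k;a')$ (the sender at $\mu_k$ does not imitate $\mu_{-j}$). First I would show that the first of these must bind. If it were slack, then since $\alpha_k$ has full support I could shift a small amount of probability from $\underline{a}_k$ onto the sender-preferred action $\overline{a}_k$; this strictly raises the $\mu_k$-term $\mathbb{E}_{\alpha_k}[V_S(\mu_k;a)]$ and hence $W_{-j,k}(\mu_0;a',\alpha_k)$, while only relaxing the second constraint, and the first constraint (being strict) survives a small perturbation. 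This contradicts optimality, so $V_S(\mu_{-j};a')=\mathbb{E}_{\alpha_k}[V_S(\mu_{-j};a)]$.

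Next I would split on which element of $A_R(\mu_{-j})$ the pure action $a'$ is. If $a'=\overline{a}_{-j}$, then the binding first constraint is exactly the indifference equation~(\ref{pmindi}) that pins down $\gamma_k$; hence, away from the non-generic case in which $\overline{a}_k$ and $\underline{a}_k$ give the sender the same value at $\mu_{-j}$, we recover $\alpha_k=\alpha_k^{PM}$, which is conclusion~(i).

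The main work is the case $a'=\underline{a}_{-j}$, where I would argue that the second constraint must also bind. Were it slack, I could move a little probability at $\mu_{-j}$ from $\underline{a}_{-j}$ onto $\overline{a}_{-j}$: this strictly raises the $\mu_{-j}$-term, leaves the first constraint satisfied (it now holds strictly in the favorable direction), and leaves the slack second constraint satisfied for a small enough shift, again contradicting optimality. With both constraints binding, the affine maps $V_S(\cdot;\underline{a}_{-j})$ and $\mathbb{E}_{\alpha_k}[V_S(\cdot;a)]$ agree at the two distinct beliefs $\mu_{-j}$ and $\mu_k$, so they coincide identically. But agreement at both beliefs is precisely the pair of indifference conditions defining the MM profile, taken at $\gamma_{-j}=0$ (so that $\alpha_{-j}^{MM}=\underline{a}_{-j}$). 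Invoking the uniqueness of the MM solution established in the proof of Claim~\ref{c1}, the profile $(\underline{a}_{-j},\alpha_k)$ is the MM profile, and therefore $W^*(\mu_0)=W_{-j,k}^{MM}$, which is conclusion~(ii).

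The hard part will be disposing of the degenerate configurations cleanly: the case in which the sender is indifferent between $\overline{a}_k$ and $\underline{a}_k$ at $\mu_k$ (so the weight-shifting deviation in the first step fails to strictly improve and the situation collapses to a pure-action profile), and the case in which all four relevant marginal incentives coincide (so the MM solution is not unique). For the former I would note that the equilibrium is then payoff-equivalent to one with a pure action at $\mu_k$, and for the latter I would appeal to the same reduction to $\{W_{-j,k}^{PM},W_{-j,k}^{MP}\}$ invoked at the end of the proof of Claim~\ref{c1}; the tie-breaking conventions used to define $\overline{a}_{-j}$ and $\overline{a}_k$ keep these boundary cases consistent with the stated dichotomy.
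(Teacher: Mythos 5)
Your argument tracks the paper's proof of Claim~\ref{c2} almost step for step: the same perturbation at $\mu_k$ forces the first incentive constraint to bind, the binding constraint identifies $\alpha_k$ with $\alpha_k^{PM}$ when $a'=\overline{a}_{-j}$, and when both constraints bind the profile coincides with the (generically unique) MM profile, giving $W^*(\mu_0)=W^{MM}_{-j,k}$. The only real difference is organizational: the paper first splits on whether the second constraint binds and then on whether the sender is indifferent between $a_{-j}$ and $a_{-j+1}$ at $\mu_{-j}$, whereas you split on the identity of the pure action $a'$.

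There is, however, one sub-case your argument does not cover and that is absent from your list of ``degenerate configurations'': the sender is indifferent between $a_{-j}$ and $a_{-j+1}$ at belief $\mu_{-j}$ (so that, under the paper's tie-breaking convention, $\overline{a}_{-j}=a_{-j+1}$) while the pure action is $a'=a_{-j}=\underline{a}_{-j}$. In your $a'=\underline{a}_{-j}$ branch, the deviation that shifts weight from $\underline{a}_{-j}$ onto $\overline{a}_{-j}$ then produces no strict gain at $\mu_{-j}$, so you cannot conclude that the second constraint binds; if it is slack, the profile is $(\underline{a}_{-j},\alpha_k^{PM})$, which satisfies neither (i) literally (the action label is wrong) nor (ii). The paper closes this case with a tool your proposal never invokes: since $a_{-j}$ remains a best response for the receiver at the lower boundary belief $\mu_{-j-1}$, and since a slack second constraint together with the binding first constraint means $m_S(a_{-j})<m_S(\alpha_k)$ strictly, the strictly more informative random posterior with support $\{\mu_{-j-1},\mu_k\}$ inducing the same action pair is incentive compatible and strictly better, contradicting the hypothesis that $\{\mu_{-j},\mu_k\}$ is the optimal support. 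You need this (or an equivalent) appeal to the optimality of the support itself, rather than only the optimality of the action profile given the support, to dispose of this configuration.
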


		\begin{proof}[Proof of Claim \ref{c2}]
			Since the receiver uses mixed strategies at belief $\mu_k$, 
			$\mu_k\ne 1$ and $A_R(\mu_{k})=\{a_{k-1},a_k\}$. 
			Since $a'$ and $\alpha_{k}$ are incentive compatible for the sender, from ($\ref{IC-sender}$),
			\begin{equation}\label{pmapp}
				u_S(a',\mu_{-j})\ge \mathbb{E}_{\alpha_{k}}[u_S(a,\mu_{-j})], 
				\qquad      \mathbb{E}_{\alpha_{k}}[u_S(a,\mu_{k})]\ge u_S(a',\mu_{k}).
			\end{equation}
			
			Suppose that $ u_S(a_{k-1},\mu_k) \ne u_S(a_{k},\mu_k)$.
			We first show that the first inequality in (\ref{pmapp}) must hold as an equation.  Suppose to the contrary that this inequality holds strictly. Then the sender can achieve a strictly higher payoff if the receiver deviates from $\alpha_{k}$ by assigning a slightly larger probability on the sender-preferred action $\overline{a}_{k}\in A_R(\mu_{k})$. As long as the increase in probability of choosing $\overline{a}_k$ is small enough, such deviation would raise the sender's payoff from truth-telling at belief $\mu_{k}$ without violating (\ref{pmapp}) at belief $\mu_{-j}$, leading to a contradiction.

			Now, suppose the second inequality in (\ref{pmapp}) also holds as an equation.  Given the result established above, the sender is indifferent between $a'$ and $\alpha_k$ both at belief $\mu_{-j}$ and at belief $\mu_k$.  This case then reduces to the double-randomization case.  Therefore, $W^*(\mu_0)=W^{MM}_{-j,k}$, and part (ii) of this claim is satisfied.
			
			Next, suppose the second inequality in (\ref{pmapp}) holds strictly.  There are two possibilities: (1) the sender obtains different payoffs from $a_{-j}$ and $a_{-j+1}$ at belief $\mu_{-j}$, (2) the sender obtains the
			same payoff from $a_{-j}$ and $a_{-j+1}$ at belief $\mu_{-j}$.
			
			Case (1). Suppose $a'$ is not the sender-preferred action $ \overline{a}_{-j}$ at belief $\mu_{-j}$.
			Then the sender can achieve a strictly higher payoff 
			by inducing the receiver to deviate from $a'$ to a mixed strategy $\alpha_{-j}$ that assigns a small positive probability on $\overline{a}_{-j}$, without violating the incentive constraints.
			This shows that $a'$ must be equal to $\overline{a}_{-j}$.  Because $a'=\overline{a}_{-j}$, and the first condition of (\ref{pmapp}) as an equation implies that $\alpha_k=\alpha_k^{PM}$, and part (i) of this claim is satisfied.  
			
			Case (2).  When the sender obtains the same payoff from $a_{-j}$ and $a_{-j+1}$ at belief $\mu_{-j}$, the convention we adopt is $\overline{a}_{-j}=a_{-j+1}$.  Suppose $a' = a_{-j} \ne \overline{a}_{-j}$.  Then the optimality of $\{\mu_{-j},\mu_k\}$ implies that the sender is indifferent between $\alpha^{PM}_k$ and $a_{-j}$ at belief $\mu_k$. Otherwise, the random posterior with support $\{\mu_{-j-1},\mu_k\}$ can performs strictly better.  Therefore, the second inequality in (\ref{pmapp}) cannot hold strictly under the convention we adopt.  This contradiction implies that we must have $a'=a_{-j+1}=\overline{a}_{-j}$.  The first condition of (\ref{pmapp}) as an equation then implies that $\alpha_k=\alpha_k^{PM}$, and part (i) of this claim is satisfied.  
			
			Suppose $ u_S(a_{k-1},\mu_k)= u_S(a_{k},\mu_k)$. If the first inequality in (\ref{pmapp}) holds with equality, then we can use the same argument as above. Otherwise, if the first inequality in (\ref{pmapp}) holds strictly, then we can slightly alter the above argument to show that $W^*(\mu_0)\in\{W^{MM}_{-j,k},W^{MP}_{-j,k},W^{PP}_{-j,k}\}$. 
		\end{proof}

		\begin{claim}
			\label{c3}
			If the optimal random posterior has a support $\{\mu_{-j},\mu_k\}$, and the receiver uses only pure strategies $a'\in A_R(\mu_{-j})$ and $a''\in A_R(\mu_{k})$ in the sender-preferred equilibrium, then either one of the following is true: 
			\begin{enumerate}
				\item[(i)]  $a'=\overline{a}_{-j}$  and $a''= \overline{a}_{k}$; 
				\item[(ii)] $W^*(\mu_0)\in\{W^{PM}_{-j,k},W^{MP}_{-j,k},W^{MM}_{-j,k}\}$.
			\end{enumerate} 
		\end{claim}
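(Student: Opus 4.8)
The plan is to mirror the proofs of Claims \ref{c1} and \ref{c2}: starting from the incentive constraints for the pure pair $(a',a'')$, I would show that whenever the sender is not already getting his preferred action at \emph{both} beliefs, a small perturbation of the receiver's action toward the sender-preferred action at the offending belief either strictly improves the payoff---contradicting optimality---or is blocked by a binding incentive constraint, and the binding configuration forces $W^*(\mu_0)$ to coincide with one of the mixed constructions $W^{PM}_{-j,k}$, $W^{MP}_{-j,k}$, or $W^{MM}_{-j,k}$. The two pure incentive constraints are, from \eqref{IC-sender}, $V_S(\mu_{-j};a')\ge V_S(\mu_{-j};a'')$ and $V_S(\mu_k;a'')\ge V_S(\mu_k;a')$.

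Assume part (i) fails. By the symmetry between the two beliefs, which exchanges the roles of $PM$ and $MP$, it suffices to treat $a''=\underline a_k\neq\overline a_k$ (the residual case $a''=\overline a_k$ but $a'\neq\overline a_{-j}$ is handled by the symmetric perturbation at $\mu_{-j}$ and yields $W^{MP}_{-j,k}$). First I would perturb the action at $\mu_k$ to $\gamma\overline a_k+(1-\gamma)\underline a_k$ for small $\gamma>0$, keeping $a'$ fixed. This weakly raises $V_S(\mu_k;\cdot)$ and relaxes the second constraint, hence weakly raises $W_{-j,k}(\mu_0;\cdot)$, so it can only be infeasible through the first constraint. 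If that constraint were slack the perturbation would be feasible for small $\gamma$ and would strictly improve the payoff (when the sender strictly prefers $\overline a_k$ at $\mu_k$), contradicting optimality. Therefore the first constraint binds, $V_S(\mu_{-j};a')=V_S(\mu_{-j};\underline a_k)$, and the block additionally forces $V_S(\mu_{-j};\overline a_k)>V_S(\mu_{-j};\underline a_k)$.

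I would then split on whether $a'=\overline a_{-j}$. If $a'=\overline a_{-j}$, the binding first constraint is exactly the indifference condition \eqref{pmindi} that defines $\alpha_k^{PM}$, and since $V_S(\mu_{-j};\overline a_k)\neq V_S(\mu_{-j};\underline a_k)$ its unique solution is $\gamma_k=0$; thus $\alpha_k^{PM}=\underline a_k=a''$, the remaining inequality in \eqref{compareslopes} is precisely the second pure constraint (which holds), and $W^*(\mu_0)=W^{PP}_{-j,k}=W^{PM}_{-j,k}$. If instead $a'=\underline a_{-j}\neq\overline a_{-j}$, I would run the symmetric perturbation at $\mu_{-j}$ toward $\overline a_{-j}$: it relaxes the already-binding first constraint and raises $V_S(\mu_{-j};\cdot)$, so it can only be blocked by the second constraint, whose slackness would again contradict optimality. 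Hence both constraints bind, the affine maps $V_S(\cdot;a')$ and $V_S(\cdot;a'')$ agree at the two distinct beliefs $\mu_{-j}$ and $\mu_k$ and so coincide identically, and the pure pair $(\underline a_{-j},\underline a_k)$ solves the double-indifference system defining $(\alpha_{-j}^{MM},\alpha_k^{MM})$ with $\gamma_{-j}=\gamma_k=0$, giving $W^*(\mu_0)=W^{MM}_{-j,k}$.

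I expect the main obstacle to be the bookkeeping around tie-breaking and the degenerate identifications, not any single inequality. When the sender is indifferent between the two best responses at a boundary belief, the perturbation no longer yields a strict improvement, and one must instead invoke the ``most-informative'' argument behind Proposition \ref{withoutloss} (exactly as in Case (2) of Claim \ref{c2}) to push the offending boundary outward and recover a sender-preferred action there. The other delicate point is confirming that the doubly-binding configuration genuinely reproduces $W^{MM}_{-j,k}$: here I would appeal to the uniqueness statement in Claim \ref{c1}, which is valid unless all four relevant marginal incentives coincide---a case that itself collapses to $W^{PM}_{-j,k}$ or $W^{MP}_{-j,k}$---to conclude that $(\underline a_{-j},\underline a_k)$ is indeed the $MM$ construction, so that part (ii) holds in every branch.
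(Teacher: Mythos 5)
Your proposal is correct and follows essentially the same route as the paper's proof: perturb the receiver's action toward the sender-preferred one, use optimality to force the relevant incentive constraint to bind, identify the binding configurations with the $PM$, $MP$, or $MM$ constructions, and defer the indifference/tie cases to the more-informative-posterior argument behind Proposition \ref{withoutloss}. The only blemish is notational: in your first branch the equilibrium value is $W_{-j,k}(\mu_0;\overline{a}_{-j},\underline{a}_k)$, which equals $W^{PM}_{-j,k}$ but not $W^{PP}_{-j,k}$ as defined (the latter uses $\overline{a}_k$); this does not affect the conclusion, since part (ii) only requires $W^*(\mu_0)=W^{PM}_{-j,k}$.
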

		
		\begin{proof}[Proof of Claim \ref{c3}]
			If $\mu_{-j}=0$ and $\mu_{k}=1$, then $A_R(\mu_{-j})=\overline{a}_{-j}$, $A_R(\mu_{k})=\overline{a}_{k}$, and part (i) is satisfied.
			
			If $\mu_{-j}\ne 0$ and $\mu_{k}\ne 1$, then $A_R(\mu_{-j})=\{a_{-j},a_{-j+1}\}$, $A_R(\mu_{k})=\{a_{k-1},a_k\}$, and there are four possibilities. (1) The sender obtains different payoffs from $a_{-j} $ and $ a_{-j+1}$ at belief $\mu_{-j}$, and different payoffs from  $a_{k-1}$ and $a_k$ at belief $\mu_k$. (2) The sender obtains same payoff from $a_{-j} $ and $ a_{-j+1}$ at belief $\mu_{-j}$, but different payoffs from  $a_{k-1}$ and $a_k$ at belief $\mu_k$. (3)  The sender obtains different payoffs from $a_{-j} $ and $ a_{-j+1}$ at belief $\mu_{-j}$, and same payoff from  $a_{k-1}$ and $a_k$ at belief $\mu_k$. (4)  The sender obtains same payoff from $a_{-j} $ and $ a_{-j+1}$ at belief $\mu_{-j}$, and same payoff from  $a_{k-1}$ and $a_k$ at belief $\mu_k$.
			
			Case (1). By incentive compatibility,
			\begin{equation}\label{ppapp}
				u_S(a',\mu_{-j})\ge  u_S(a'',\mu_{-j}), 
				\qquad
				u_S(a'',\mu_{k})\ge u_S(a',\mu_{k}).
			\end{equation}
			Suppose both inequalities hold strictly. 
			The optimality of $(a',a'')$ implies $a'= \overline{a}_{-j}$ and $a''= \overline{a}_{k}$, with a reasoning similar to that
			in the proof of Claim \ref{c1}. Thus, part (i) is satisfied. 
			Suppose the first inequality holds as equality and 
			the second inequality holds strictly. Then the optimality of $(a',a'')$  implies  $a'= \overline{a}_{-j}$, with a 
			reasoning similar to that in the proof of Claim \ref{c2}. Moreover, 
			when $a'=\overline{a}_j$ and the first inequality holds as equality,
			we have $a''= \alpha^{PM}_k$.  Therefore, $W^*(\mu_0)=W^{PM}_{-j,k}$, and part (ii) is satisfied. 
			Similarly, if the first inequality hold strictly and  the second inequality holds as an equality, then  $W^*(\mu_0)=W^{MP}_{-j,k}$ and part (ii) is also satisfied. 
			Finally, suppose that		
			both inequalities hold as an equality. Then if $m_S(a_{-j})$, $ m_S(a_{-j+1})$, $ m_S(a_{k-1})$ and $m_S(a_{k})$ are not all equal, $a'=\alpha^{MM}_{-j}$ and $a''=\alpha^{MM}_{k}$. Therefore, $W^*(\mu_0)=W^{MM}_{-j,k}$. If $m_S(a_{-j})$, $ m_S(a_{-j+1})$, $ m_S(a_{k-1})$ and $m_S(a_{k})$ are equal, then $W^*(\mu_0)\in \{W^{PM}_{-j,k},W^{MP}_{-j,k}\}$. Part (ii) is again satisfied. 
			
			Case (2a).  If $a'=a_{-j}\ne \overline{a}_{-j}$, then 
			the optimality of $\{\mu_{-j},\mu_k\}$ implies that the sender is indifferent between $a''$ and $a_{-j}$ at belief $\mu_k$;  otherwise the random posterior with support $\{\mu_{-j-1},\mu_k\}$ is strictly better. Moreover, if the sender is indifferent between $a_{-j}$ and $a''$ at belief $\mu_{-j}$, then $W^*(\mu_0)=W^{MM}_{-j,k}$. On the other hand,  if the sender strictly prefers $a_{-j}$ over $a''$ at belief $\mu_{-j}$, then the optimality of $a''$ implies $a''=\overline{a}_k$. That is, $W^*(\mu_0)=W^{MP}_{-j,k}$.  In both sub-cases, part (ii) is satisfied. 
			Case (2b). If $a'\ne a_{-j}$, then $a'= \overline{a}_{-j}$.
			Then if the first inequality in (\ref{ppapp}) hold strictly, the optimality of $(\overline{a}_{-j},a'')$ implies that $a''=\overline{a}_k$, and part (i) is satisfied. On the other hand, if the first inequality in (\ref{ppapp}) hold as equality, then  $W^*(\mu_0)=W^{PM}_{-j,k}$, and part (ii) is satisfied. 
			
			Case (3) is symmetric
			to case (2), and if
			we apply all arguments above, case (4) implies $W^*(\mu_0)\in \{W^{PP}_{-j,k},W^{PM}_{-j,k},W^{MP}_{-j,k},W^{MM}_{-j,k}\}$.
			
			Finally, if either $\mu_{-j}=0$ or $\mu_{k}=1$, then with a similar reasoning we can conclude $W^*(\mu_0)\in \{W^{PP}_{-j,k},W^{PM}_{-j,k},W^{MP}_{-j,k},W^{MM}_{-j,k}\}$.
		\end{proof}
		
		Claims \ref{c1}--\ref{c3} (together with an analogous claim for the case of MP)  imply that it is sufficient to only focus on $W^{PP}_{-j,k}$, $W^{PM}_{-j,k}$, $W^{MP}_{-j,k}$, and $W^{MM}_{-j,k}$. 
		
		Lastly, in step 2 of the algorithm, we only consider $(-j,k)\in S_1$. To see the sufficiency of it, suppose a pair of $(-j,k)$ outside $S_1$ is IC-PP, or IC-PM, or IC-MP, or IC-MM. Then,
		\begin{equation*}
			\max \{  W^{PP}_{-j,k}, W^{PM}_{-j,k}, W^{MP}_{-j,k} , W^{MM}_{-j,k} \} 
			\le W_{-j,k}(\mu_0;\overline{a}_{-j},\overline{a}_k)
			\le W^1. 
		\end{equation*}
		The first inequality follows from the fact that
		the sender's value correspondence $v(\mu_{-j})$ is lower than $u_S(\overline{a}_{-j},\mu_{-j})$, and  $v(\mu_{k})$ is lower than $u_S(\overline{a}_{k},\mu_{k})$. The second inequality 
		comes
		from the construction of $S_1$.  So even if
		the random posterior with support $\{\mu_{-j},\mu_k\}$ outside $S_1$ is incentive compatible, it will never improve on the outcome from step 1. Moreover, in step 3
		of the algorithm, we only consider $(-j,k)\in S_2$. To see the sufficiency of it, suppose a pair of $(-j,k)$ outside $S_2$ is IC-MM. Then,
		\begin{equation*}
			W_{-j,k}^{MM} \le \max\{ W_{-j,k}(\mu_0;\overline{a}_{-j},\alpha^{PM}_k) , W_{-j,k}(\mu_0;\alpha^{PM}_{-j},\overline{a}_{k}) \}
			\le \max\{ W^{(a)} , W^{(b)} \} = W^2.
		\end{equation*}
		The second inequality is from our construction of $S_2$.
		To see the first inequality, notice that under IC-MM, the sender is indifferent between $\alpha^{MM}_{-j} $ and $\alpha^{MM}_{k}:=(\gamma_{k},1-\gamma_{k})$ at belief $\mu_{-j}$,
		\begin{equation*}
			\mathbb{E}_{\alpha_{-j}}[u_S(a,\mu_{-j})]=\gamma_k u_S(\overline{a}_k,\mu_{-j})+(1-\gamma_k)u_S(\underline{a}_{k},\mu_{-j}).
		\end{equation*}
		From our construction of $\alpha^{PM}_k:=(\gamma'_k,1-\gamma'_k)$, 
		\begin{equation*}
			u_S(\overline{a}_{-j},\mu_{-j})=\gamma'_k u_S(\overline{a}_k,\mu_{-j})+(1-\gamma'_k)u_S(\underline{a}_{k},\mu_{-j}).
		\end{equation*}
		Therefore $\mathbb{E}_{\alpha_{-j}}[u_S(a,\mu_{-j})]\le u_S(\overline{a}_{-j},\mu_{-j})$ implies 
		$\gamma_k\le \gamma'_k$, which further implies that $\mathbb{E}_{\alpha^{MM}_{k}}[u_S(a,\mu_{k})]\le \mathbb{E}_{\alpha^{PM}_{k}}[u_S(a,\mu_{k})]$. 
		We thus have $W_{-j,k}^{MM} \le   W_{-j,k}(\mu_0;\overline{a}_{-j},\alpha^{PM}_k) $. For a similar reason, $W_{-j,k}^{MM} \le   W_{-j,k}(\mu_0;\alpha^{PM}_{-j},\overline{a}_{k}) $. Notice that 
		this argument does not require
		$\{\mu_{-j},\mu_k\}$ 
		to be IC-PM or IC-MP.
	\end{proof}

	\begin{proof}[{\bf Proof of Proposition \ref{dominate}, part (b)}]
		
		Let $a_n$ be the least-preferred action for the sender in state 0 and $a_l$ be his least-preferred action in state 1.  (If there are multiple least-preferred actions in one state, just pick any one of them.)	We have $a_n \ne a_l$, otherwise $a_n$ would be a worst action. Moreover, $u_S(a_l,0)>u_S(a_n,0)$ and $u_S(a_n,1)>u_S(a_l,1)$.
		This implies $m_S(a_l)<m_S(a_n)$. By aligned marginal incentives, $m_R(a_l) < m_R(a_n)$, and therefore the interval of beliefs for which $a_l$ is receiver's best response, denoted $I_l$, is to the left of the interval $I_n$ for $a_n$.  Following the same  convention adopted in the text, 
		we let $\underline{I}_{l}$ represent the lowest belief in $I_l$ and let $\overline{I}_{n}$ represent the highest belief in $I_n$.
		
		There are three mutually exclusive cases. (1) $a_l$ and $a_n$ are strictly IC-PP for 
		$\{\underline{I}_{l},\overline{I}_{n}\}$; (2) $a_l$ blocks $a_n$ (but $a_n$ does not block $a_l$); and (3) $a_n$ blocks $a_l$ (but $a_l$ does not block $a_n$).  In case (1), information design is valuable, for example
		when $\mu_0$ is in the interior of $I_l$.  
		Cases (2) and (3) are symmetric; thus we consider case (2) only. 
		
		Denote $a_{n+1}$ as the next action higher than $a_n$; i.e., the receiver is indifferent between $a_{n+1}$ and $a_n$ at belief $\overline{I}_n$.  There are several possibilities:
		
		\begin{enumerate}
			\item[(2a)] Suppose $a_l$ is (weakly) worse than $a_{n+1}$ at  belief  $\underline{I}_{l}$. Note that under case (2) $a_l$ is better than  $a_n$ at both belief $\underline{I}_{l}$ and $ \overline{I}_n$. Therefore, there is a mixed action $\alpha_n \in \Delta \{a_n,a_{n+1}\}$ such that 
			the sender with belief $\underline{I}_l$
			is indifferent between $a_l$ and $\alpha_n$. Moreover, by aligned marginal incentives, $m_S(\alpha_n)>m_S(a_l)$. Thus, the random posterior with support $\{\underline{I}_{l},\overline{I}_{n}\}$ is IC-PM for action $a_{l}$ and some mixed action $\alpha_n$ and information design is valuable, for example when $\mu_0$ is in the interior of $I_l$. 
			
			\item[(2b)] Suppose $a_l$ is (strictly) better than $a_{n+1}$ at  belief  $\underline{I}_{l}$.  
			\begin{enumerate}
				\item[(i)] If $a_{n+1}$  is better than $a_l$  at belief  $\overline{I}_{n+1}$, then $\{\underline{I}_{l},\overline{I}_{n+1}\}$ is IC-PP. 
				\item[(ii)] If $a_{n+1}$ is worse than $a_l$  at belief  $\overline{I}_{n+1}$, then let $a_{n'}$ be the highest action that $a_l$ blocks.  Notice that $a_{n'}<a_K$, where $a_K = A_R(1)$, because $m_S(a_K)>m_S(a_{n})$ and $u_S(a_K,0)\ge u_S(a_n,0)$ imply that $u_S(a_K,1)>u_S(a_n,1)$. 
				Since $a_l$ is the least-preferred action in state 1, we have $u_S(a_n,1) \ge u_S(a_l,1)$, and thereby $u_S(a_K,1) > u_S(a_l,1)$.  Therefore $a_l$ does not block $a_K$.  
				Then with a similar argument as in (2a) and (2b-i), either of the following is true: $\{\underline{I}_{l},\overline{I}_{n'}\}$ is IC-PM, or $\{\underline{I}_{l},\overline{I}_{n'+1}\}$ is IC-PP. The existence of $a_{n'+1}$ comes from $a_{n'}<a_K$.\hfil\qedhere 
			\end{enumerate}
		\end{enumerate}
	\end{proof}

 \begin{proof}[{\bf Proof of Proposition \ref{state-inde}}]
		The ``only if'' part is simple.  If the sender's ranking is monotone in the index of the actions, then there does not exist an informative equilibrium outcome in which the receiver chooses different actions (including mixed actions) after different messages. This implies that information design is not valuable. 
		
		To show the ``if'' part, suppose the sender's ranking is non-monotone in the index of the actions. This implies that there must be at least three actions in $ A$.
		Moreover there exists an index $n$ such that either (1) the sender prefers $a_{n-1}$ to $a_{n}$, but $a_{n+1}$ is  ranked above $a_n$; or (2) sender prefers $a_n$ to $a_{n-1}$, but $a_{n+1}$ is  ranked below $a_n$.  Let $\underline{I}_n:=\min \{\mu: a_n\in A_R(\mu)\}$ and $\overline{I}_n:=\max \{\mu: a_n\in A_R(\mu)\}$. 
		In case (1a), the sender prefers $a_{n+1}$ to $a_{n-1}$ to $a_n$ at all beliefs, including at belief $\underline{I}_{n-1}$.  Therefore, there exists a mixture $\alpha_n\in \Delta \{a_n,a_{n+1}\}$ 
		that the receiver would optimally choose at belief 
		$\overline{I}_n$ such that the sender is indifferent between $a_{n-1}$ and $\alpha_n$
		at belief $\underline{I}_{n-1}$.  Moreover, because $m_S(\alpha_n) > m_S(a_{n-1})$, the random posterior with support $\{\underline{I}_{n-1},\overline{I}_n\}$ and an expectation $\mu_0\in (\underline{I}_{n-1},\overline{I}_n)$ is IC-PM given the receiver optimally chooses between $a_{n-1}$ and 
		$\alpha_n$.  
		In case (1b), the sender prefers $a_{n-1}$ to $a_{n+1}$ to $a_n$ at any belief. With a similar reasoning, the random posterior with support $\{\underline{I}_{n},\overline{I}_{n+1}\}$  is IC-MP given the receiver optimally chooses between some $\alpha_{n-1}\in \Delta \{a_{n-1},a_n\}$ and $a_{n+1}$.  
		In case (2a), the sender prefers $a_n$ to $a_{n-1}$ to $a_{n+1}$. 
		Then the random posterior with support $\{\underline{I}_{n-1},\overline{I}_n\}$  is IC-PM given the receiver optimally chooses between $a_{n-1}$ some $\alpha'_{n}\in \Delta \{a_{n},a_{n+1}\}$.
		In case (2b), the sender prefers $a_n$ to $a_{n+1}$ to $a_{n-1}$. Then the random posterior with support $\{\underline{I}_{n},\overline{I}_{n+1}\}$   is IC-MP given the receiver optimally chooses between some $\alpha'_{n-1}\in \Delta \{a_{n-1},a_n\} $ and $a_{n+1}$. 
	\end{proof}

	\begin{proof}[{\bf Proof of Proposition \ref{compwithkg}}]
		With aligned marginal incentives, the concavification 
		result in \cite{kamenica2011bayesian}
		implies that there exists 
		an $n'<n$ such that with prior belief $\mu_0\in(\underline{I}_{n'},\overline{I}_{n'})$, the optimal experiment under full commitment
		has support $\{\underline{I}_{n'},\underline{I}_{n}\}$. Similarly, there exists an $n''>n$ such that
		with a different prior belief $\mu'_0\in (\underline{I}_{n''},\overline{I}_{n''})$, the optimal 
		experiment under full commitment
		has support $\{\overline{I}_{n},\overline{I}_{n''}\}$. We want to show that, in our model, the optimal experiment is strictly more informative than that under full commitment either when the prior is $\mu_0$ or when the prior is $\mu'_0$.
		There are only two cases. 
		(a) There exists a $k\ge n$ such that $u_S(a_{k},\underline{I}_{n'})\ge u_S(a_{n'},\underline{I}_{n'})\ge u_S(a_{k+1},\underline{I}_{n'})$. (b) For every $k\ge n$,  $u_S(a_{k},\underline{I}_{n'})> u_S(a_{n'},\underline{I}_{n'})$.
		Notice that if there exists an experiment with support $\{\underline{I}_{n'},\mu\}$ where $\mu\in (\overline{I}_{n'},\underline{I}_{n}]$ that is incentive compatible, then the sender's expected payoff from such experiment is smaller than $u_S(a_{n'},\underline{I}_{n'})+m_S(a_n) (\mu_0-\underline{I}_{n'})$ (this is implied by Lemma \ref{l:IC}). 
		Similarly, if there exists an experiment with support $\{\mu,\overline{I}_{n''}\}$ where $\mu\in (\overline{I}_{n},\overline{I}_{n''})$ that is incentive compatible, then the sender's expected payoff under $\mu'_0$ from such experiment is smaller than $u_S(a_{n''},\overline{I}_{n''})-m_S(a_n) (\overline{I}_{n''}-\mu'_0)$.
		
		In case (a), the experiment with support $\{\underline{I}_{n'},\overline{I}_{k}\}$ is IC-PM given aligned marginal incentives. Moreover, sender's expected payoff from such experiment is greater than   $u_S(a_{n'},\underline{I}_{n'})+m_S(a_n) (\mu_0-\underline{I}_{n'})$. Because the receiver randomizes between $a_k$ and $a_{k+1}$ at belief $\overline{I}_{k}$ and thereby the marginal incentive from such randomization $\alpha_k$ is greater than $m_S(a_n)$. Also, from the construction of IC-PM, the sender's expected payoff equals $u_S(a_{n'},\underline{I}_{n'})+m_S(\alpha_k) (\mu_0-\underline{I}_{n'})$. Therefore, under $\mu_0$, there exists an experiment with support $\{\underline{I}_{n'},\overline{I}_{k}\}$ that is better than any IC experiment with support $\{\underline{I}_{n'},\mu\}$ where $\mu\in (\overline{I}_{n'},\underline{I}_{n}]$.

		Moreover, there cannot exist an IC experiment $\{\mu,\mu'\}$ with $\mu<\underline{I}_{n'}$ and $\overline{I}_{n'}<\mu'<\underline{I}_{n}$ that 
		yields the sender an expected payoff higher than $u_S(a_{n'},\underline{I}_{n'})+m_S(a_n) (\mu_0-\underline{I}_{n'})$. To see 
		this point, note that Lemma \ref{l:IC} implies that 
		the slope of sender's expected payoff is smaller than $m_S(a')$ where $a'\in A_R(\mu')$, which 
		in turn 
		is smaller than $m_S(a_n)$ from aligned marginal incentives. This implies the sender's expected payoff from such an experiment, if the prior belief is $\underline{I}_{n'}$, would be higher than $u_S(a_{n'},\underline{I}_{n'})$, which 
		contradicts the fact that $u_S(a_{n'},\underline{I}_{n'})$ lies on the concave envelope of $\overline{v}(\cdot)$.
		
		Since $\overline{I}_k > \underline{I}_n$, the optimal experiment in our model under $\mu_0$, which has support $\{\underline{I}_{n'},\overline{I}_k\}$, is 
		strictly more informative than the 
		(full commitment) experiment with support
		$\{\underline{I}_{n'},\underline{I}_{n}\}$.
		
		In case (b), since $u_S(a_{n''},\underline{I}_{n'})>u_S(a_{n'},\underline{I}_{n'})$, we have $u_S(a_{n''},\overline{I}_{n''})>u_S(a_{n'},\overline{I}_{n''}) $ under the assumption of aligned marginal incentive. Then there must exist 
		an $a_k$ with $n' < k \le n$
		such that $u_S(a_k,\overline{I}_{n''})\ge u_S(a_{n''},\overline{I}_{n''})\ge u_S(a_{k-1},\overline{I}_{n''})$. Therefore, there exists an IC-MP experiment with support $\{\underline{I}_{k},\overline{I}_{n''}\}$ such that the receiver randomizes between $a_{k}$ and $a_{k-1}$ at belief $\underline{I}_{k}$ and such randomization $\alpha_k$ has a marginal incentive 
		$m_S(\alpha_k)$ smaller than $m_S(a_n)$. 
		Under $\mu'_0$, this experiment generates the sender an expected payoff higher than $u_S(a_{n''},\overline{I}_{n''})-m_S(a_n) (\overline{I}_{n''}-\mu'_0)$. Moreover, such experiment is better than any IC experiment with support  $\{\mu,\overline{I}_{n''}\}$ where $\mu\in [\overline{I}_{n},\overline{I}_{n''})$. 
		Because $\underline{I}_k < \underline{I}_n$, the optimal experiment in our model under $\mu'_0$, which has support $\{ \underline{I}_k,\overline{I}_{n''}\}$, is strictly more informative than the (full commitment) experiment with support 
		$\{\overline{I}_{n},\overline{I}_{n''}\}$.
	\end{proof}
	
	\newpage
	\bibliographystyle{aer}
	\bibliography{Bayesian.bib}
\end{document}